\newcommand{\<}{\langle}
\renewcommand{\>}{\rangle}
\newcommand{\bsigma}{\bar{\sigma}}
\newcommand{\btau}{\bar{\tau}}
\renewcommand{\|}{\upharpoonright}
\newcommand{\hpi}{\hat{\pi}}
\newcommand{\hmu}{\hat{\mu}}
\newcommand{\N}{\mathbb{N}}
\newcommand{\Z}{\mathbb{Z}}
\newcommand{\R}{\mathbb{R}}
\newcommand{\Q}{\mathbb{Q}}
\newcommand{\bR}{\overline{\mathbb{R}}}
\renewcommand{\t}{\!^{\mathrm{t}}}
\newcommand{\playcircle}{{\scriptsize{\Circle}}}
\renewcommand{\l}{\ell}
\renewcommand{\epsilon}{\varepsilon}
\renewcommand{\phi}{\varphi}
\newcommand{\opt}{\mathrm{opt}}
\newcommand{\card}{\mathrm{card}}
\newcommand{\SC}{\mathrm{SC}}
\newcommand{\Plays}{\mathrm{Plays}}
\newcommand{\Hist}{\mathrm{Hist}}
\newcommand{\Conv}{\mathrm{Conv}}
\newcommand{\SConn}{\mathrm{SConn}}
\newcommand{\nego}{\mathrm{nego}}
\newcommand{\lCons}{\lambda\mathrm{Cons}}
\newcommand{\lRat}{\lambda\mathrm{Rat}}
\newcommand{\Prop}{\mathrm{Prop}}
\newcommand{\Acc}{\mathrm{Acc}}
\newcommand{\Dev}{\mathrm{Dev}}
\newcommand{\ML}{\mathrm{ML}}
\newcommand{\minstar}{\min\!^\star}
\newcommand{\Rat}{\mathrm{Rat}}
\newcommand{\Abs}{\mathrm{Abs}}
\newcommand{\Conc}{\mathrm{Conc}}
\newcommand{\val}{\mathrm{val}}
\renewcommand{\P}{\mathbb{P}}
\newcommand{\C}{\mathbb{C}}
\renewcommand{\H}{\mathcal{H}}
\newcommand{\A}{\mathcal{A}}
\newcommand{\bx}{\bar{x}}
\newcommand{\by}{\bar{y}}
\newcommand{\balpha}{\bar{\alpha}}
\newcommand{\bbeta}{\bar{\beta}}
\newcommand{\bgamma}{\bar{\gamma}}
\newcommand{\blambda}{\lambda \mspace{-10mu} \bar{\phantom{v}}}
\newcommand{\bdelta}{\delta \mspace{-9mu} \bar{\phantom{v}}}
\newcommand{\dH}{\dot{H}}
\newcommand{\dc}{\dot{c}}
\newcommand{\drho}{\dot{\rho}}
\newcommand{\deta}{\dot{\eta}}
\theoremstyle{plain}
\newtheorem{thm}{Theorem}
\newtheorem{pptn}{Proposition}
\newtheorem{lm}{Lemma}
\newtheorem{cor}{Corollary}
\theoremstyle{definition}
\newtheorem{defi}{Definition}
\theoremstyle{remark}
\newtheorem*{rek}{Remark}
\newtheorem*{reks}{Remarks}
\newtheorem{ex}{Example}
\title{Subgame-perfect Equilibria in Mean-payoff Games}
\author{Léonard Brice}
{Université Gustave Eiffel, France}{lnrd.brice@gmail.com}{}{}
\author{Jean-François Raskin}
{Université Libre de Bruxelles, Belgium}{jraskin@ulb.ac.be}{}{}
\author{Marie van den Bogaard}
{Université Gustave Eiffel, France}{marie.van-den-bogaard@univ-eiffel.fr}{}{}
\authorrunning{L. Brice, J.-F. Raskin, and M. van den Bogaard}
\keywords{Games on graphs, subgame-perfect equilibria, mean-payoff objectives.}
\begin{document}

\maketitle

\begin{abstract}
In this paper, we provide an effective characterization of all the subgame-perfect equilibria in infinite duration games played on finite graphs with mean-payoff objectives.
To this end, we introduce the notion of requirement, and the notion of negotiation function.
We establish that the plays that are supported by SPEs are exactly those that are consistent with the least fixed point of the negotiation function.
Finally, we show that the negotiation function is piecewise linear, and can be analyzed using the linear algebraic tool box.
As a corollary, we prove the decidability of the SPE constrained existence problem, whose status was left open in the literature.
\end{abstract}

\section{Introduction}
The notion of Nash equilibrium (NE) is one of the most important and most studied solution concepts in game theory.
A profile of strategies is an NE when no rational player has an incentive to change their strategy unilaterally, i.e. while the other players keep their strategies. Thus an NE models a stable situation. 
Unfortunately, it is well known that, in sequential games, NEs suffer from the problem of {\em non-credible threats}, see e.g.~\cite{Osborne}. In those games, some NE only exists when some players do {\em not} play rationally in subgames and so use non-credible threats to force the NE.
This is why, in sequential games, the stronger notion of {\em subgame-perfect equilibrium} is used instead: a profile of strategies is a subgame-perfect equilibrium (SPE) if it is an NE in all the subgames of the sequential game. Thus SPE imposes rationality even after a deviation has occured.

In this paper, we study sequential games that are infinite-duration games played on graphs with mean-payoff objectives, and focus on SPEs. While NEs are guaranteed to exist in infinite duration games played on graphs with mean-payoff objectives, it is known that it is not the case for SPEs, see e.g.~\cite{solan2003deterministic,DBLP:conf/csl/BrihayeBMR15}. We provide in this paper a constructive characterization of the entire set of SPEs, which allows us to decide, among others, the SPE (constrained) existence problem. This problem was left open in previous contributions on the subject. More precisely, our contributions are described in the next paragraphs.


\noindent \textbf{Contributions.~} First, we introduce two important new notions that allow us to capture NEs, and more importantly SPEs, in infinite duration games played on graphs with mean-payoff objectives\footnote{A large part of our results apply to the larger class of games with prefix-independent objectives. For the sake of readability of this introduction, we focus here on mean-payoff games but the technical results in the paper are usually covering broader classes of games.}: the notion of {\em requirement} and the notion of {\em negotiation function}. 

A requirement $\lambda$ is a function that assigns to each vertex $v \in V$ of a game graph a value in $\mathbb{R} \cup \{ -\infty, +\infty\}$. The value $\lambda(v)$ represents a requirement on any play $\rho= \rho_0 \rho_1 \dots \rho_n \dots$ that traverses this vertex: if we want the player who controls the vertex $v$ to follow $\rho$ and to give up deviating from $\rho$, then the play must offer a payoff to this player that is at least $\lambda(v)$. An infinite play $\rho$ is $\lambda$-consistent if, for each player~$i$, the payoff of $\rho$ for player~$i$ is larger than or equal to the largest value of $\lambda$ on vertices occurring along $\rho$ and controlled by player~$i$.

We first use those notions to rephrase a classical result about NEs: if $\lambda$ maps a vertex $v$ to the largest value that the player that controls $v$ can secure against a fully adversarial coalition of the other players, i.e. if $\lambda(v)$ is the zero-sum worst-case value, then the set of plays that are $\lambda$-consistent is exactly the set of plays that are supported by an NE (Theorem~\ref{thm_ne}). 

As SPEs are forcing players to play rationally in all subgames, we cannot rely on the zero-sum worst-case value to characterize them. Indeed, when considering the worst-case value, we allow adversaries to play fully adversarially after a deviation and so potentially in an irrational way w.r.t. their own objective.  In fact, in an SPE, a player is refrained to deviate when opposed by a coalition of {\em rational adversaries}.  To characterize this relaxation of the notion of worst-case value, we rely on our notion of \emph{negotiation function}.

The negotiation function $\nego$ operates from the set of requirements into itself. To understand the purpose of the negotiation function, let us consider its application on the requirement $\lambda$ that maps every vertex $v$ on the worst-case value as above. Now, we can naturally formulate the following question: given $v$ and $\lambda$, can the player who controls $v$ improve the value that they can ensure against all the other players, if only plays that are consistent with $\lambda$ are proposed by the other players? In other words, can this player enforce a better value when playing against the other players if those players are not willing to give away their own worst-case value? Clearly, securing this worst-case value can be seen as a minimal goal for any {\em rational} adversary. So $\nego(\lambda)(v)$ returns this value; and this reasoning can be iterated. One of the contributions of this paper is to show that the least fixed point $\lambda^*$ of the negotiation function is exactly characterizing the set of plays supported by SPEs (Theorem~\ref{thm_spe}).

To turn this fixed point characterization of SPEs into algorithms, we additionally draw links between the negotiation function and two classes of zero-sum games, that are called {\em abstract} and {\em concrete} negotiation games (see Theorem~\ref{thm_concrete_game}).  We show that the latter can be solved effectively and allow, given $\lambda$, to compute $\nego(\lambda)$ (Lemma~\ref{lm_resolution_concrete_game}). While solving concrete negotiation games allows us to compute $\nego(\lambda)$ for any requirement $\lambda$, and even if the function $\nego(\cdot)$ is monotone and Scott-continuous, a direct application of the Kleene-Tarski fixed point theorem is not sufficient to obtain an effective algorithm to compute $\lambda^*$. Indeed, we give examples that require a transfinite number of iterations to converge to the least fixed point. To provide an algorithm to compute $\lambda^*$, we show that the function $\nego(\cdot)$ is piecewise linear and we provide an effective representation of this function (Theorem~\ref{thm_formula_nego}). This effective representation can then be used to extract all its fixed points and in particular its least fixed point using linear algebraic techniques, hence the decidability of the SPE (constrained) existence problem (Theorem~\ref{thm_decidable}). Finally, all our results are also shown to extend to $\epsilon$-SPEs, those are quantitative relaxations of SPEs.

\noindent \textbf{Related works.~} Non-zero sum infinite duration games have attracted a large attention in recent years, with applications targeting reactive synthesis problems. We refer the interested reader to the following survey papers~\cite{DBLP:conf/lata/BrenguierCHPRRS16,DBLP:conf/dlt/Bruyere17} and their references for the relevant literature. We detail below contributions more closely related to the work presented here.

In~\cite{DBLP:conf/lfcs/BrihayePS13}, Brihaye et al. offer a characterization of NEs in quantitative games for cost-prefix-linear reward functions based on the worst-case value. The mean-payoff is cost-prefix-linear. In their paper, the authors do not consider the stronger notion of SPE, which is the central solution concept studied in our paper.
In \cite{DBLP:conf/csl/BruyereMR14}, Bruy{\`{e}}re et al. study secure equilibria that are a refinement of NEs. Secure equilibria are not subgame-perfect and are, as classical NEs, subject to non-credible threats in sequential games.

In~\cite{DBLP:conf/fsttcs/Ummels06}, Ummels proves that there always exists an SPE in games with $\omega$-regular objectives and defines algorithms based on tree automata to decide constrained SPE problems. Strategy logics, see e.g.~\cite{DBLP:journals/iandc/ChatterjeeHP10}, can be used to encode the concept of SPE in the case of $\omega$-regular objectives with application to the rational synthesis problem~\cite{DBLP:journals/amai/KupfermanPV16} for instance.  
In~\cite{DBLP:journals/mor/FleschKMSSV10}, Flesch et al. show that the existence of $\epsilon$-SPEs is guaranteed when the reward function is {\em lower-semicontinuous}. The mean-payoff reward function is neither $\omega$-regular, nor lower-semicontinuous, and so the techniques defined in the papers cited above cannot be used in our setting. Furthermore, as already recalled above, see e.g.~\cite{vieille:hal-00464953,DBLP:conf/csl/BrihayeBMR15}, contrary to the $\omega$-regular case, SPEs in games with mean-payoff objectives may fail to exist.

In~\cite{DBLP:conf/csl/BrihayeBMR15}, Brihaye et al. introduce and study the notion of weak subgame-perfect equilibria, which is a weakening of the classical notion of SPE. This weakening is equivalent to the original SPE concept on reward functions that are {\em continuous}. This is the case for example for the quantitative reachability reward function, on which Brihaye et al. solve the problem of the constrained existence of SPEs in \cite{DBLP:conf/concur/BrihayeBGRB19}. On the contrary, the mean-payoff cost function is not continuous and the techniques used in~\cite{DBLP:conf/csl/BrihayeBMR15}, and generalized in~\cite{DBLP:conf/fossacs/Bruyere0PR17}, cannot be used to characterize SPEs for the mean-payoff reward function.


In~\cite{thesis_noemie}, Meunier develops a method based on Prover-Challenger games to solve the problem of the existence of SPEs on games with a finite number of possible outcomes. This method is not applicable to the mean-payoff reward function, as the number of outcomes in this case is uncountably infinite.

In~\cite{DBLP:journals/mor/FleschP17}, Flesch and Predtetchinski present another characterization of SPEs on games with finitely many possible outcomes, based on a game structure that we will present here under the name of \emph{abstract negotiation game}.
Our contributions differ from this paper in two fundamental aspects. First, it lifts the restriction to finitely many possible outcomes. This is crucial as mean-payoff games violate this restriction. Instead, we identify a class of games, that we call {\em with steady negotiation}, that encompasses mean-payoff games and for which some of the conceptual tools introduced in that paper can be generalized. Second, the procedure developed by Flesch and Predtetchinski is {\em not} an algorithm in CS acceptation: it needs to solve infinitely many games that are not represented effectively, and furthermore it needs a transfinite number of iterations. On the contrary, our procedure is effective and leads to a complete algorithm in the classical sense: with guarantee of termination in finite time and applied on effective representations of games.


\noindent \textbf{Structure of the paper.~}
In Sect.~2, we introduce the necessary background.
Sect.~3 defines the notion of requirement and the negotiation function.
Sect.~4 shows that the set of plays that are supported by an SPE are those that are $\lambda^*$-consistent, where $\lambda^*$ is the least fixed point of the negotiation function.
Sect.~5 draws a link between the negotiation function and negotiation games.
Sect.~6 establishes that the  negotiation function is effectively piecewise linear.
Finally, Sect.~7 applies those results to prove the decidability of the SPE constrained existence problem on mean-payoff games, and adds some complexity considerations.
All the detailed proofs of our results can be found in a well-identified appendix and a large number of examples are provided in the main part of the paper to illustrate the main ideas behind our new concepts and constructions.


	\section{Background} \label{sec_background}

In all what follows, we will use the word \emph{game} for the infinite duration turn-based quantitative games on finite graphs with complete information.

\begin{defi}[Game]\label{defi_game}
	A \emph{game} is a tuple
	$G =\left(\Pi, V, (V_i)_{i \in \Pi}, E, \mu\right)$, where:
	
	\begin{itemize}
		\item $\Pi$ is a finite set of \emph{players};
		
		\item $(V, E)$ is a finite directed graph, whose vertices are sometimes called \emph{states} and whose edges are sometimes called \emph{transitions}, and in which every state has at least one outgoing transition.
		For the simplicity of writing, a transition $(v, w) \in E$ will often be written $vw$.
		
		\item $(V_i)_{i \in \Pi}$ is a partition of $V$, in which $V_i$ is the set of states \emph{controlled} by player $i$;
		
		\item $\mu: V^\omega \to \R^\Pi$ is an \emph{outcome function}, that maps each infinite word $\rho$ to the tuple $\mu(\rho) = (\mu_i(\rho))_{i \in \Pi}$ of the players' \emph{payoffs}.
	\end{itemize}
\end{defi}

\begin{defi}[Initialized game]
	An \emph{initialized game} is a tuple $(G, v_0)$, often written $G_{\|v_0}$, where $G$ is a game and $v_0 \in V$ is a state called \emph{initial state}. Moreover, the game $G_{\|v_0}$ is \emph{well-initialized} if any state of $G$ is accessible from $v_0$ in the graph $(V, E)$.
\end{defi}

\begin{defi}[Play, history]
	A \emph{play} (resp. history) in the game $G$ is an infinite (resp. finite) path in the graph $(V, E)$.
	It is also a play (resp. history) in the initialized game $G_{\|v_0}$, where $v_0$ is its first vertex.
	The set of plays (resp. histories) in the game $G$ (resp. the initialized game $G_{\|v_0}$) is denoted by $\Plays G$ (resp. $\Plays G_{\|v_0}, \Hist G, \Hist G_{\|v_0}$).
	We write $\Hist_i G$ (resp. $\Hist_i G_{\|v_0}$) for the set of histories in $G$ (resp. $G_{\|v_0}$) of the form $hv$, where $v$ is a vertex controlled by player $i$.
\end{defi}

\begin{rek}
    In the literature, the word \emph{outcome} can be used to name plays, and the word \emph{payoff} to name what we call here outcome. Here, the word \emph{payoff} will be used to refer to outcomes, seen from the point of view of a given player -- or in other words, an \emph{outcome} will be seen as the collection of all players' payoffs.
\end{rek}


\begin{defi}[Strategy, strategy profile]
	A \emph{strategy} for player $i$ in the initialized game $G_{\|v_0}$ is a function $\sigma_i: \Hist_i G_{\|v_0} \to V$, such that $v\sigma_i(hv)$ is an edge of $(V, E)$ for every $hv$.
	A history $h$ is \emph{compatible} with a strategy $\sigma_i$ if and only if $h_{k+1} = \sigma_i(h_0 \dots h_k)$ for all $k$ such that $h_k \in V_i$. A play $\rho$ is compatible with $\sigma_i$ if all its prefixes are.

	A \emph{strategy profile} for $P \subseteq \Pi$ is a tuple $\bsigma_P = (\sigma_i)_{i \in P}$, where for each $i$, $\sigma_i$ is a strategy for player $i$ in $G_{\|v_0}$.
	A \emph{complete} strategy profile, usually written $\bsigma$, is a strategy profile for $\Pi$.
    A play or a history is \emph{compatible} with $\bsigma_P$ if it is compatible with every $\sigma_i$ for $i \in P$.
    
    When $i$ is a player and when the context is clear, we will often write $-i$ for the set $\Pi \setminus \{i\}$.
    We will often refer to $\Pi \setminus \{i\}$ as the \emph{environment} against player $i$.
    When $\btau_P$ and $\btau'_Q$ are two strategy profiles with $P \cap Q = \emptyset$, $(\btau_P, \btau'_Q)$ denotes the strategy profile $\bsigma_{P \cup Q}$ such that $\sigma_i = \tau_i$ for $i \in P$, and $\sigma_i = \tau'_i$ for $i \in Q$.
\end{defi}


Before moving on to SPEs, let us recall the notion of Nash equilibrium.

\begin{defi}[Nash equilibrium]
	Let $G_{\|v_0}$ be an initialized game. The strategy profile $\bsigma$ is a \emph{Nash equilibrium} --- or \emph{NE} for short --- in $G_{\|v_0}$ if and only if for each player $i$ and for every strategy $\sigma'_i$, called \emph{deviation of $\sigma_i$}, we have the inequality $\mu_i\left(\< \sigma'_i, \bsigma_{-i} \>_{v_0}\right) \leq \mu_i\left(\< \bsigma \>_{v_0}\right)$.
\end{defi}

To define SPEs, we need the notion of subgame.

\begin{defi}[Subgame, substrategy]
	Let $hv$ be a history in the game $G$. The \emph{subgame} of $G$ after $hv$ is the initialized game $\left(\Pi, V, (V_i)_i, E, \mu_{\|hv}\right)_{\|v}$, where $\mu_{\|hv}$ maps each play to its payoff in $G$, assuming that the history $hv$ has already been played: formally, for every $\rho \in \Plays G_{\|hv}$, we have $\mu_{\|hv}(\rho) = \mu(h\rho)$.
    
    If $\sigma_i$ is a strategy in $G_{\|v_0}$, its \emph{substrategy} after $hv$ is the strategy $\sigma_{i\|hv}$ in $G_{\|hv}$, defined by $\sigma_{i\|hv}(h') = \sigma_i(hh')$ for every $h' \in \Hist_i G_{\|hv}$.
\end{defi}

\begin{rek}
The initialized game $G_{\|v_0}$ is also the subgame of $G$ after the one-state history $v_0$.
\end{rek}

\begin{defi}[Subgame-perfect equilibrium]
	Let $G_{\|v_0}$ be an initialized game. The strategy profile $\bsigma$ is a \emph{subgame-perfect equilibrium} --- or \emph{SPE} for short --- in $G_{\|v_0}$ if and only if for every history $h$ in $G_{\|v_0}$, the strategy profile $\bsigma_{\|h}$ is a Nash equilibrium in the subgame $G_{\|h}$.
\end{defi}

The notion of subgame-perfect equilibrium can be seen as a refinement of Nash equilibrium: it is a stronger equilibrium which excludes players resorting to non-credible threats.


\begin{ex}
In the game represented in Figure~\ref{fig_ne_spe}, where the square state is controlled by player $\Box$ and the round states by player $\Circle$, if both players get the payoff $1$ by reaching the state $d$ and the payoff $0$ in the other cases, there are actually two NEs: one, in blue, where $\Box$ goes to the state $b$ and then player $\Circle$ goes to $d$, and both win, and one, in red, where player $\Box$ goes to the state $c$ because player $\Circle$ was planning to go to $e$. However, only the blue one is an SPE, as moving from $b$ to $e$ is irrational for player $\Circle$ in the subgame $G_{\|ab}$.
\end{ex}

An $\epsilon$-SPE is a strategy profile which is \emph{almost} an SPE: if a player deviates after some history, they will not be able to improve their payoff by more than a quantity $\epsilon \geq 0$.

\begin{defi}[$\epsilon$-SPE]
	Let $G_{\|v_0}$ be an initialized game, and $\epsilon \geq 0$. A strategy profile $\bsigma$ from $v_0$ is an $\epsilon$-SPE if and only if for every history $hv$, for every player $i$ and every strategy $\sigma'_i$, we have $\mu_i(\< \bsigma_{-i\|hv}, \sigma'_{i\|hv} \>_v) \leq \mu_i(\< \bsigma_{\|hv} \>_v) + \epsilon$.
\end{defi}

Note that a $0$-SPE is an SPE, and conversely.

Hereafter, we focus on \emph{prefix-independent} games, and in particular \emph{mean-payoff} games.

	\begin{defi}[Mean-payoff game]
		A \emph{mean-payoff game} is a game $G = \left(\Pi, V, (V_i)_i, E, \mu \right)$, where $\mu$ is defined from a function $\pi: E \to \Q^\Pi$, called \emph{weight function}, by, for each player $i$:
		$$\mu_i: \rho \mapsto \underset{n \to \infty}{\liminf} \frac{1}{n} \underset{k=0}{\overset{n-1}{\sum}} \pi_i\left(\rho_k\rho_{k+1}\right).$$
	\end{defi}
	
In a mean-payoff game, the weight given by the function $\pi$ represents the immediate reward that each action gives to each player. The final payoff of each player is their average payoff along the play, classically defined as the limit inferior over $n$ (since the limit may not be defined) of the average payoff after $n$ steps.

\begin{defi}[Prefix-independent game]
	A game $G$ is \emph{prefix-independent} if, for every history $h$ and for every play $\rho$, we have $\mu(h\rho) = \mu(\rho)$.
	We also say, in that case, that the outcome function $\mu$ is prefix-independent.
\end{defi}

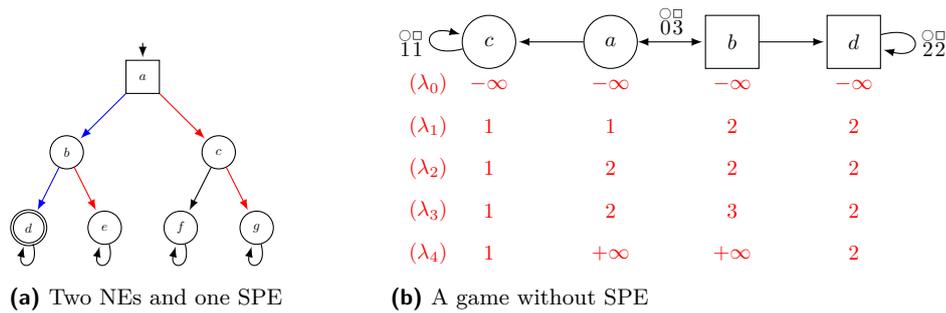
\begin{figure}
    \centering
    \begin{subfigure}[b]{0.35\textwidth}
    	\begin{tikzpicture}[->,>=latex,shorten >=1pt, initial text={}, scale=0.5, every node/.style={scale=0.5}]
    	\node[initial above, state, rectangle] (a) at (0, 0) {$a$};
    	\node[state] (b) at (-2, -2) {$b$};
    	\node[state] (c) at (2, -2) {$c$};
    	\node[state, double] (d) at (-3, -4) {$d$};
    	\node[state] (e) at (-1, -4) {$e$};
    	\node[state] (f) at (1, -4) {$f$};
    	\node[state] (g) at (3, -4) {$g$};
    	\path[->, blue] (a) edge (b);
    	\path[->, red] (a) edge (c);
    	\path[->, red] (b) edge (e);
    	\path[->, blue] (b) edge (d);
    	\path[->] (c) edge (f);
    	\path[->, red] (c) edge (g);
    	\path (d) edge [loop below] (d);
    	\path (e) edge [loop below] (e);
    	\path (f) edge [loop below] (f);
    	\path (g) edge [loop below] (g);
    	\end{tikzpicture}
    	\caption{Two NEs and one SPE}
    	\label{fig_ne_spe}
    \end{subfigure}
    \begin{subfigure}[b]{0.5\textwidth}
		\begin{tikzpicture}[->,>=latex,shorten >=1pt, initial text={}, scale=0.8, every node/.style={scale=0.8}]
		\node[state] (a) at (0, 0) {$a$};
		\node[state] (c) at (-2, 0) {$c$};
		\node[state, rectangle] (b) at (2, 0) {$b$};
		\node[state, rectangle] (d) at (4, 0) {$d$};
		\path[->] (a) edge (c);
		\path[<->] (a) edge node[above] {$\stackrel{\playcircle}{0} \stackrel{\Box}{3}$} (b);
		\path[->] (b) edge (d);
		\path (d) edge [loop right] node {$\stackrel{\playcircle}{2} \stackrel{\Box}{2}$} (d);
		\path (c) edge [loop left] node {$\stackrel{\playcircle}{1} \stackrel{\Box}{1}$} (c);
		
		\node[red] (l0) at (-3, -0.7) {$(\lambda_0)$};
		\node[red] (l0a) at (0, -0.7) {$-\infty$};
		\node[red] (l0b) at (2, -0.7) {$-\infty$};
		\node[red] (l0c) at (-2, -0.7) {$-\infty$};
		\node[red] (l0d) at (4, -0.7) {$-\infty$};
		
		\node[red] (l1) at (-3, -1.4) {$(\lambda_1)$};
		\node[red] (l1a) at (0, -1.4) {$1$};
		\node[red] (l1b) at (2, -1.4) {$2$};
		\node[red] (l1c) at (-2, -1.4) {$1$};
		\node[red] (l1d) at (4, -1.4) {$2$};
		
		\node[red] (l2) at (-3, -2.1) {$(\lambda_2)$};
		\node[red] (l2a) at (0, -2.1) {$2$};
		\node[red] (l2b) at (2, -2.1) {$2$};
		\node[red] (l2c) at (-2, -2.1) {$1$};
		\node[red] (l2d) at (4, -2.1) {$2$};
		
		\node[red] (l3) at (-3, -2.8) {$(\lambda_3)$};
		\node[red] (l3a) at (0, -2.8) {$2$};
		\node[red] (l3b) at (2, -2.8) {$3$};
		\node[red] (l3c) at (-2, -2.8) {$1$};
		\node[red] (l3d) at (4, -2.8) {$2$};
		
		\node[red] (l4) at (-3, -3.5) {$(\lambda_4)$};
		\node[red] (l4a) at (0, -3.5) {$+\infty$};
		\node[red] (l4b) at (2, -3.5) {$+\infty$};
		\node[red] (l4c) at (-2, -3.5) {$1$};
		\node[red] (l4d) at (4, -3.5) {$2$};
		\end{tikzpicture}
		\caption{A game without SPE} 
		\label{fig_sans_spe}
    \end{subfigure}
    \caption{Two examples of games}
\end{figure}

Mean-payoff games are prefix-independent.
We now recall a classical result about two-player zero-sum games.
	
	\begin{defi}[Zero-sum game]
		A game $G$, with $\Pi = \{1, 2\}$, is \emph{zero-sum} if $\mu_2 = -\mu_1$.
	\end{defi}
	
	\begin{defi}[Borel game]
		A game $G$ is \emph{Borel} if the function $\mu$, from the set $V^\omega$ equipped with the product topology to the Euclidian space $\R^\Pi$, is Borel, i.e. if, for every Borel set $B \subseteq \R^\Pi$, the set $\mu^{-1}(B)$ is Borel.
	\end{defi}
	
\begin{defi}[Determinacy]
    Let $G_{\|v_0}$ be an initialized zero-sum Borel game, with $\Pi = \{1, 2\}$.
    The game $G_{\|v_0}$ is \emph{determined} if we have the following equality:
    $$\sup_{\sigma_1} ~ \inf_{\sigma_2} ~ \mu_1(\< \bsigma \>_{v_0}) = \inf_{\sigma_2} ~ \sup_{\sigma_1} ~ \mu_1(\< \bsigma \>_{v_0}).$$
    That quantity is called \emph{value} of $G_{\|v_0}$, denoted by $\val_1(G_{\|v_0})$; \emph{solving} the game $G$ means computing its value.
\end{defi}
	
	\begin{pptn}[Determinacy of two-player zero-sum Borel games~\cite{BorelDeterminacy}]
		Zero-sum Borel games are determined.
	\end{pptn}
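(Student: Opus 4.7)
The plan is to reduce this quantitative determinacy statement to Martin's qualitative Borel determinacy theorem. Let $V_1 = \sup_{\sigma_1} \inf_{\sigma_2} \mu_1(\< \bsigma \>_{v_0})$ and $V_2 = \inf_{\sigma_2} \sup_{\sigma_1} \mu_1(\< \bsigma \>_{v_0})$; the inequality $V_1 \leq V_2$ is immediate from the general $\sup\inf \leq \inf\sup$ bound. For the converse I would argue by contradiction: assume $V_1 < V_2$ and pick a real threshold $c$ with $V_1 < c < V_2$. Since $\mu_1 : V^\omega \to \R$ is Borel, the preimage $W_c = \mu_1^{-1}([c, +\infty))$ is a Borel subset of $V^\omega$ endowed with the product topology. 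This cut induces a qualitative two-player zero-sum game on $(V, E)$ starting at $v_0$ in which player $1$ wins exactly when the produced play lies in $W_c$.

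Next, I would invoke Martin's theorem~\cite{BorelDeterminacy} on this qualitative game: exactly one of the two players has a winning strategy. If player $1$ has a winning strategy $\sigma_1^\star$, then $\mu_1(\< \sigma_1^\star, \sigma_2 \>_{v_0}) \geq c$ for every $\sigma_2$, hence $V_1 \geq c$, contradicting $V_1 < c$. Symmetrically, if player $2$ has a winning strategy $\sigma_2^\star$ in the complementary game, then $\mu_1(\< \sigma_1, \sigma_2^\star \>_{v_0}) < c$ for every $\sigma_1$, so $V_2 \leq c$, again contradicting the choice of $c$. Hence $V_1 = V_2$ and that common quantity is the value of $G_{\|v_0}$.

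The hard part is therefore hidden inside Martin's theorem, whose proof I would sketch as a transfinite induction on the Borel rank of $W_c$. The base case of clopen winning conditions is the classical Gale--Stewart argument exploiting the compactness of the tree of plays. The inductive step relies on Martin's \emph{unraveling} construction: given a game whose winning set sits at level $\alpha$ of the Borel hierarchy, one builds an auxiliary game on an enlarged tree, together with a covering map, such that the auxiliary winning set has strictly lower Borel rank and such that winning strategies lift and project along the cover. Limit ordinals are handled by taking an inverse limit of the previously constructed covers. The delicate point is ensuring that these covers compose coherently and that the inverse limit at limit stages still faithfully represents the original game; for the purposes of the present paper, however, this machinery can simply be invoked as a black box.
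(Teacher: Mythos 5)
Your reduction to Martin's qualitative Borel determinacy via the threshold sets $W_c = \mu_1^{-1}([c,+\infty))$ is correct and is exactly the standard argument behind the citation: the paper itself offers no proof, simply invoking \cite{BorelDeterminacy}, and your derivation (including treating the unraveling machinery as a black box) is the intended one. No gaps.
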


The following examples illustrate the SPE existence problem in mean-payoff games.

\begin{ex}\label{ex_sans_spe}
	Let $G$ be the mean-payoff game of Figure~\ref{fig_sans_spe}, where each edge is labelled by its weights $\pi_\playcircle$ and $\pi_\Box$. No weight is given for the edges $ac$ and $bd$ since they can be used only once, and therefore do not influence the final payoff.
	For now, the reader should not pay attention to the red labels below the states.
	As shown in~\cite{DBLP:journals/corr/Bruyere0PR16}, this game does not have any SPE, neither from the state $a$ nor from the state $b$.
	
	Indeed, the only NE plays from the state $b$ are the plays where player $\Box$ eventually leaves the cycle $ab$ and goes to $d$: if he stays in the cycle $ab$, then player $\Circle$ would be better off leaving it, and if she does, player $\Box$ would be better off leaving it before.
    From the state $a$, if player $\Circle$ knows that player $\Box$ will leave, she has no incentive to do it before: there is no NE where $\Circle$ leaves the cycle and $\Box$ plans to do it if ever she does not. Therefore, there is no SPE where $\Circle$ leaves the cycle.
    But then, after a history that terminates in $b$, player $\Box$ has actually no incentive to leave if player $\Circle$ never plans to do it afterwards: contradiction.
\end{ex}

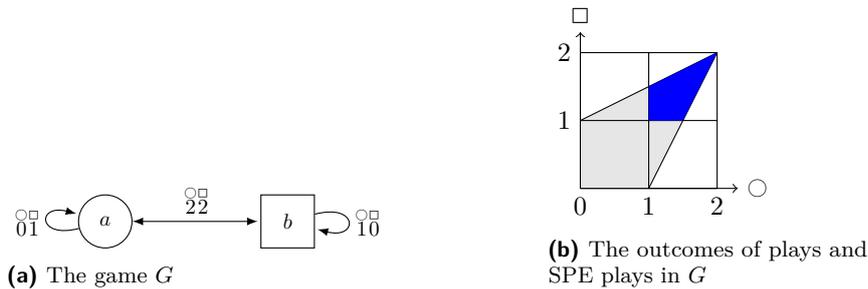
\begin{figure} 
	\centering
	\begin{subfigure}[b]{0.5\textwidth}
		\begin{tikzpicture}[->,>=latex,shorten >=1pt, initial text={}, scale=0.8, every node/.style={scale=0.8}]
		\node[state] (a) at (0, 0) {$a$};
		\node[state, rectangle] (b) at (3, 0) {$b$};
		\path[<->] (a) edge node[above, rectangle] {$\stackrel{\playcircle}{2}\stackrel{\Box}{2}$} (b);
		\path (a) edge [loop left] node {$\stackrel{\playcircle}{0}\stackrel{\Box}{1}$} (a);
		\path (b) edge [loop right] node {$\stackrel{\playcircle}{1}\stackrel{\Box}{0}$} (b);
		\end{tikzpicture}
		\caption{The game $G$}
		\label{fig_inf_spe}
	\end{subfigure}
	\begin{subfigure}[b]{0.3\textwidth}
		\begin{tikzpicture}[scale=0.9]
		\draw [->] (0,0) -- (2.3,0);
		\draw (2.3,0) node[right] {${\Circle}$};
		\draw [->] (0,0) -- (0,2.3);
		\draw (0,2.3) node[above] {${\Box}$};
		\fill [gray!20] (0, 1) -- (0, 0) -- (1, 0) -- (2, 2);
		\draw (0, 0) grid (2, 2);
		\foreach \x in {1,2} \draw(0,\x)node[left]{\x};
		\foreach \x in {0,1,2} \draw(\x,0)node[below]{\x};
		\draw [black] (0, 1) -- (2, 2);
		\draw [black] (1, 0) -- (2, 2);
		
		\fill [blue] (1, 1) -- (1, 1.5) -- (2, 2) -- (1.5, 1);
		\end{tikzpicture}
	    \caption{The outcomes of plays and SPE plays in $G$} \label{fig_inf_spe_outcomes}
	\end{subfigure}
	
	\caption{A game with an infinity of SPEs}
\end{figure}

\begin{ex} \label{ex_inf_spe}
    Let us now study the game of Figure~\ref{fig_inf_spe}.
    Using techniques from \cite{DBLP:conf/concur/ChatterjeeDEHR10}, we can represent the outcomes of possible plays in that game as in Figure~\ref{fig_inf_spe_outcomes} (gray and blue areas).

	Following exclusively one of the three simple cycles $a$, $ab$ and $b$ of the game graph during a play yields the outcomes $01, 10$ and $22$, respectively.
	By combining those cycles with well chosen frequencies, one can obtain any outcome in the convex hull of those three points.
	Now, it is also possible to obtain the point $00$ by using the properties of the limit inferior: it is for instance the outcome of the play $a^2 b^4 a^{16} b^{256} \dots a^{2^{2^n}} b^{2^{2^{n+1}}} \dots$.
	In fact, one can construct a play that yields any outcome in the convex hull of the four points $00, 10, 01$, and $22$.

	We claim that the outcomes of SPEs plays correspond to the entire blue area in Figure~\ref{fig_inf_spe_outcomes}: there exists an SPE $\bsigma$ in $G_{\|a}$ with $\< \bsigma \>_a = \rho$ if and only if $\mu_{\Box}(\rho), \mu_\playcircle(\rho) \geq 1$.
	That statement will be a direct consequence of the results we show in the remaining sections, but let us give a first intuition: a play with such an outcome necessarily uses infinitely often both states. It is an NE play because none of the players can get a better payoff by looping forever on their state, and they can both force each other to follow that play, by threatening them to loop for ever on their state whenever they can. But such a strategy profile is clearly not an SPE.
	
	It can be transformed into an SPE as follows: when a player deviates, say player $\Box$, then player $\Circle$ can punish him by looping on $a$, not forever, but a great number of times, until player $\Box$'s mean-payoff gets very close to $1$. 
	Afterwards, both players follow again the play that was initially planned. 
	Since that threat is temporary, it does not affect player $\Circle$'s payoff on the long term, but it really punishes player $\Box$ if that one tries to deviate infinitely often.
	
\end{ex}
	

	\section{Requirements and negotiation} \label{sec_negotiation}

We will now see that SPEs are strategy profiles that respect some \emph{requirements} about the payoffs, depending on the states it traverses. In this part, we develop the notions of \emph{requirement} and \emph{negotiation}.

		\subsection{Requirement}

In the method we will develop further, we will need to analyze the players' behaviour when they have some \emph{requirement} to satisfy.
Intuitively, one can see requirements as \emph{rationality constraints} for the players, that is, a threshold payoff value under which a player will not accept to follow a play.
In all what follows, $\bR$ denotes the set $\R \cup \{\pm \infty\}$.

\begin{defi}[Requirement]
	A \emph{requirement} on the game $G$ is a function $\lambda: V \to \bR$.
	
	For a given state $v$, the quantity $\lambda(v)$ represents the minimal payoff that the player controlling $v$ will require in a play beginning in $v$.
\end{defi}

\begin{defi}[$\lambda$-consistency]
	Let $\lambda$ be a requirement on a game $G$. A play $\rho$ in $G$ is \emph{$\lambda$-consistent} if and only if, for all $i \in \Pi$ and $n \in \N$ with $\rho_n \in V_i$, we have $\mu_i(\rho_n \rho_{n+1} \dots)~\geq~\lambda(\rho_n)$.
	The set of the $\lambda$-consistent plays from a state $v$ is denoted by $\lCons(v)$.
\end{defi}

\begin{defi}[$\lambda$-rationality]
	Let $\lambda$ be a requirement on a mean-payoff game $G$. Let $i \in \Pi$. A strategy profile $\bsigma_{-i}$ is \emph{$\lambda$-rational} if and only if there exists a strategy $\sigma_i$ such that, for every history $hv$ compatible with $\bsigma_{-i}$, the play $\< \bsigma_{\|hv} \>_v$ is $\lambda$-consistent.
	We then say that the strategy profile $\bsigma_{-i}$ is $\lambda$-rational \emph{assuming} $\sigma_i$.
	The set of $\lambda$-rational strategy profiles in $G_{\|v}$ is denoted by $\lRat(v)$.	
\end{defi}
	
Note that $\lambda$-rationality is a property of a strategy profile for all the players but one, player $i$. Intuitively, their rationality is justified by the fact that they collectively assume that player $i$ will, eventually, play according to the strategy $\sigma_i$: if player $i$ does so, then everyone gets their payoff satisfied.
Finally, let us define a particular requirement: the \emph{vacuous requirement}, that requires nothing, and with which every play is consistent.

\begin{defi}[Vacuous requirement]
	In any game, the \emph{vacuous requirement}, denoted by $\lambda_0$, is the requirement constantly equal to $-\infty$.
\end{defi}

		\subsection{Negotiation} \label{ss_def_nego}

We will show that SPEs in prefix-independent games are characterized by the fixed points of a function on requirements. That function can be seen as a \emph{negotiation}: when a player has a requirement to satisfy, another player can hope a better payoff than what they can secure in general, and therefore update their own requirement.

\begin{defi}[Negotiation function]
	Let $G$ be a game. The \emph{negotiation function} is the function that transforms any requirement $\lambda$ on $G$ into a requirement $\nego(\lambda)$ on $G$, such that for each $i \in \Pi$ and $v \in V_i$, with the convention $\inf\emptyset = +\infty$, we have:
	$$\nego(\lambda)(v) = \underset{\bsigma_{-i} \in \lRat(v)}{\inf} \underset{\sigma_i}{\sup}~ \mu_i(\< \bsigma\>_v).$$
\end{defi}

\begin{reks}
There exists a $\lambda$-rational strategy profile from $v$ against the player controlling $v$ if and only if $\nego(\lambda)(v) \neq +\infty$.
The negotiation function is monotone: if $\lambda \leq \lambda'$ (for the pointwise order, i.e. if for each $v$, $\lambda(v) \leq \lambda'(v)$), then $\nego(\lambda) \leq \nego(\lambda')$.
The negotiation function is also non-decreasing: for every $\lambda$, we have $\lambda \leq \nego(\lambda)$.
\end{reks}

In the general case, the quantity $\nego(\lambda)(v)$ represents the worst case value that the player controlling $v$ can ensure, assuming that the other players play $\lambda$-rationally.

\begin{ex}
    Let us consider the game of Example~\ref{ex_sans_spe}: in Figure~\ref{fig_sans_spe}, on the two first lines below the states, we present the requirements $\lambda_0$ and $\lambda_1 = \nego(\lambda_0)$, which is easy to compute since any strategy profile is $\lambda_0$-rational: for each $v$, $\lambda_1(v)$ is the classical \emph{worst-case value} or \emph{antagonistic value} of $v$, i.e. the best value the player controlling $v$ can enforce against a fully hostile environment. Let us now compute the requirement $\lambda_2 = \nego(\lambda_1)$.
	
	From $c$, there exists exactly one $\lambda_1$-rational strategy profile $\bsigma_{-\playcircle} = \sigma_\Box$, which is the empty strategy since player $\Box$ has never to choose anything. Against that strategy, the best and the only payoff player $\Circle$ can get is $1$, hence $\lambda_2(c) = 1$.
	For the same reasons, $\lambda_2(d) = 2$.
	
	From $b$, player $\Circle$ can force $\Box$ to get the payoff $2$ or less, with the strategy profile $\sigma_\playcircle: h \mapsto c$. Such a strategy is $\lambda_1$-rational, assuming the strategy $\sigma_\Box: h \mapsto d$. Therefore, $\lambda_2(b) = 2$.
		
	Finally, from $a$, player $\Box$ can force $\Circle$ to get the payoff $2$ or less, with the strategy profile $\sigma_\Box: h \mapsto d$. Such a strategy is $\lambda_1$-rational, assuming the strategy $\sigma_\playcircle: h \mapsto c$. But, he cannot force her to get less than the payoff $2$, because she can force the access to the state $b$, and the only $\lambda_1$-consistent plays from $b$ are the plays with the form $(ba)^k b d^\omega$. Therefore, $\lambda_2(a) = 2$.
\end{ex}

		\subsection{Steady negotiation}

In what follows, we will often need a game to be \emph{with steady negotiation}, i.e. such that there always exists a worst $\lambda$-rational behaviour for the environment against a given player.

\begin{defi}[Game with steady negotiation]
	A game $G$ is \emph{with steady negotiation} if and only if for every player $i$, for every vertex $v$, and for every requirement $\lambda$, the set $\left\{\left. \sup_{\sigma_i} ~\mu_i(\< \bsigma_{-i}, \sigma_i \>_v) ~\right|~ \bsigma_{-i} \in \lRat(v)\right\}$
	is either empty, or has a minimum.
\end{defi}

\begin{rek}
    In particular, when a game is with steady negotiation, the infimum in the definition of negotiation is always reached.
\end{rek}

It will be proved in Section \ref{sec_negotiation_games} that mean-payoff games are with steady negotiation.

		\subsection{Link with Nash equilibria}

Requirements and the negotiation function are able to capture Nash equilibria. Indeed, if $\lambda_0$ is the vacuous requirement, then $\nego(\lambda_0)$ characterizes the plays that are supported by a Nash equilibrium (abbreviated by NE plays), in the following formal sense:

\begin{thm}[App. \ref{pf_ne}] \label{thm_ne}
	Let $G$ be a game with steady negotiation. Then, a play $\rho$ in $G$ is an NE play if and only if $\rho$ is $\nego(\lambda_0)$-consistent.
\end{thm}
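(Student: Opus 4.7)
Since the vacuous requirement $\lambda_0$ is satisfied by every play, every strategy profile against any player is $\lambda_0$-rational, so $\nego(\lambda_0)(v)$ reduces to the classical antagonistic value of $v$: the best payoff the owner of $v$ can secure against a fully adversarial coalition. Steady negotiation moreover guarantees that this infimum is attained: for each $v \in V_j$ there is a witness profile $\bsigma^{(v)}_{-j}$ with $\sup_{\tau_j} \mu_j(\< \bsigma^{(v)}_{-j}, \tau_j \>_v) = \nego(\lambda_0)(v)$. The plan is to use this observation in both directions of the equivalence, with prefix-independence as the bridge between global NE optimality and pointwise consistency along $\rho$.

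\textbf{Forward direction.} Let $\bsigma$ be an NE in $G_{\|\rho_0}$ with $\< \bsigma \>_{\rho_0} = \rho$, fix $n$ with $\rho_n \in V_i$, and set $h = \rho_0 \dots \rho_n$. I would first show that $\sigma_{i\|h}$ is still a best response to $\bsigma_{-i\|h}$ in the subgame $G_{\|h}$. Any candidate improvement $\tau_i$ in that subgame lifts to a deviation $\sigma'_i$ in $G_{\|\rho_0}$ that agrees with $\sigma_i$ on strict prefixes of $h$; the resulting global play has $h$ as a prefix and, afterwards, follows $\< \bsigma_{-i\|h}, \tau_i \>_{\rho_n}$, so prefix-independence makes its $\mu_i$-value equal to $\mu_i(\< \bsigma_{-i\|h}, \tau_i \>_{\rho_n})$. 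The NE hypothesis at $\rho_0$ then forces $\mu_i(\< \bsigma_{-i\|h}, \tau_i \>_{\rho_n}) \leq \mu_i(\rho)$. Since $\sigma_{i\|h}$ itself attains $\mu_i(\rho)$ (again by prefix-independence), we conclude $\sup_{\tau_i} \mu_i(\< \bsigma_{-i\|h}, \tau_i \>_{\rho_n}) = \mu_i(\rho)$. As $\bsigma_{-i\|h}$ is trivially $\lambda_0$-rational, this sup is one of the quantities entering the infimum that defines $\nego(\lambda_0)(\rho_n)$, giving $\nego(\lambda_0)(\rho_n) \leq \mu_i(\rho) = \mu_i(\rho_n \rho_{n+1} \dots)$, which is $\nego(\lambda_0)$-consistency at index $n$.

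\textbf{Backward direction.} Conversely, suppose $\rho$ is $\nego(\lambda_0)$-consistent. I would define $\bsigma$ by a standard trigger-punishment pattern: along any history that is still a prefix of $\rho$, each controller plays $\rho$'s next move; as soon as a history leaves $\rho$ for the first time at some step $n$, the deviator is necessarily the owner $j$ of $\rho_n$, and from then on the players in $-j$ play the substrategy of the witness $\bsigma^{(\rho_n)}_{-j}$ furnished above (players in $V_j$ act arbitrarily after that point). Clearly $\< \bsigma \>_{\rho_0} = \rho$. To verify the NE condition, fix any $i$ and any deviation $\sigma'_i$: either the resulting play equals $\rho$, or it diverges first at some step $n$ with deviator $i$, and its tail from $\rho_n$ is of the form $\rho' = \< \bsigma^{(\rho_n)}_{-i}, \tau_i \>_{\rho_n}$. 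In the latter case,
\[
\mu_i(\< \sigma'_i, \bsigma_{-i} \>_{\rho_0}) = \mu_i(\rho') \leq \nego(\lambda_0)(\rho_n) \leq \mu_i(\rho_n \rho_{n+1} \dots) = \mu_i(\rho),
\]
by successively prefix-independence, the witness property, $\nego(\lambda_0)$-consistency of $\rho$, and prefix-independence again; so the deviation is not profitable.

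\textbf{Main obstacle.} The only subtle point is that NE is a global property at $\rho_0$ whereas $\nego(\lambda_0)$-consistency is a pointwise condition at every vertex visited by $\rho$; prefix-independence is precisely what lets us transfer best-response information between the root game and the subgame along $\rho$. Steady negotiation is invoked only in the converse direction, where it lets us pick a concrete punishment profile $\bsigma^{(\rho_n)}_{-j}$ attaining $\nego(\lambda_0)(\rho_n)$ rather than resorting to an $\epsilon$-optimal approximation; without it the same construction would only yield $\epsilon$-NE plays.
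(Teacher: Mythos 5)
Your proof is correct and follows essentially the same route as the paper's: the forward direction lifts a subgame deviation at each $\rho_n$ to a deviation at the root and uses the fact that $\bsigma_{-i\|h}$ is $\lambda_0$-rational to bound $\nego(\lambda_0)(\rho_n)$ by $\mu_i(\rho)$, and the backward direction builds the same trigger profile that switches, upon the first divergence from $\rho$, to a worst $\lambda_0$-rational punishment profile whose existence is exactly what steady negotiation provides. The remarks on where prefix-independence and steady negotiation enter match the paper's use of these hypotheses.
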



    
    

\begin{ex}
    Let us consider again the game of Example~\ref{ex_sans_spe}, with the requirement $\lambda_1$ given in Figure~\ref{fig_sans_spe}. The only $\lambda_1$-consistent plays in this game, starting from the state $a$, are $ac^\omega$, and $(ab)^k d^\omega$ with $k \geq 1$. One can check that those plays are exactly the NE plays in that game.
\end{ex}

In the following section, we will prove that as well as $\nego(\lambda_0)$ characterizes the NEs, the requirement that is the least fixed point of the negotiation function characterizes the SPEs.

	\section{Link between negotiation and SPEs} \label{sec_link_nego_spe}

The notion of negotiation will enable us to find the SPEs, but also more generally the $\epsilon$-SPEs, in a game. For that purpose, we need the notion of $\epsilon$-fixed points of a function.

\begin{defi}[$\epsilon$-fixed point]
	Let $\epsilon \geq 0$, let $D$ be a finite set and let $f: \bR^D \to \bR^D$ be a mapping. A tuple $\bx \in \R^D$ is a \emph{$\epsilon$-fixed point} of $f$ if for each $d \in D$, for $\by = f(\bx)$, we have $y_d \in [x_d - \epsilon, x_d + \epsilon]$.
\end{defi}

\begin{rek}
    A $0$-fixed point is a fixed point, and conversely.
\end{rek}

The set of requirements, equipped with the componentwise order, is a complete lattice. Since the negotiation function is monotone, Tarski's fixed point theorem states that the negotiation function has a least fixed point. That result can be generalized to $\epsilon$-fixed points:

\begin{lm}[App. \ref{pf_least_delta_fixed point}] \label{lm_least_delta_fixed point}
	Let $\epsilon \geq 0$. On each game, the function $\nego$ has a least $\epsilon$-fixed point.
\end{lm}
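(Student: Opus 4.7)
The plan is to invoke the Knaster--Tarski fixed point theorem on an auxiliary function built from $\nego$. First, I would verify that $\bR^V$ under the pointwise order is a complete lattice: since $V$ is finite and $(\bR, \leq)$ is itself a complete lattice with bottom $-\infty$ and top $+\infty$, this is immediate.

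Next, I would reduce the $\epsilon$-fixed point condition to an ordinary pre-fixed point condition. By the remarks following the definition of $\nego$, the function is non-decreasing, i.e.\ $\lambda \leq \nego(\lambda)$ for every requirement $\lambda$; in particular, the inequality $\lambda(v) - \epsilon \leq \nego(\lambda)(v)$ in the definition of $\epsilon$-fixed point holds automatically. Hence $\lambda$ is an $\epsilon$-fixed point of $\nego$ if and only if $\nego(\lambda) \leq \lambda + \epsilon$ pointwise, with the natural conventions $\pm\infty + \epsilon = \pm\infty$. I would then define an auxiliary mapping $\phi : \bR^V \to \bR^V$ by $\phi(\lambda)(v) = \nego(\lambda)(v) - \epsilon$, using $\pm\infty - \epsilon = \pm\infty$. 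The map $\phi$ inherits monotonicity from $\nego$ (shifting by a constant preserves the pointwise order), and a requirement $\lambda$ is an $\epsilon$-fixed point of $\nego$ exactly when $\phi(\lambda) \leq \lambda$, i.e.\ when $\lambda$ is a pre-fixed point of $\phi$.

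The conclusion then follows from Knaster--Tarski: the set of pre-fixed points of a monotone self-map on a complete lattice admits a minimum, which coincides with the least fixed point of that map. Applied to $\phi$, this gives a minimal requirement $\lambda^\dg$ such that $\phi(\lambda^\dg) \leq \lambda^\dg$, that is, the least $\epsilon$-fixed point of $\nego$. The case $\epsilon = 0$ is just the ordinary least fixed point of $\nego$ obtained by Tarski applied directly. I do not expect any serious mathematical obstacle here: the only delicate point is a notational one, namely handling the values $\pm\infty$ consistently under the shift $-\epsilon$ so that $\phi$ is well-defined and monotone on the whole lattice $\bR^V$; once this is set up, the statement is a one-line consequence of Tarski's theorem.
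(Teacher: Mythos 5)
Your proof is correct and takes essentially the same route as the paper: the paper's argument directly forms the pointwise infimum of all $\epsilon$-fixed points and checks via monotonicity of $\nego$ that it is again an $\epsilon$-fixed point, which is precisely the Knaster--Tarski pre-fixed-point argument that you package as an application of Tarski to the shifted map $\phi = \nego - \epsilon$. Your explicit observation that the lower half of the $\epsilon$-fixed-point condition is automatic because $\lambda \leq \nego(\lambda)$ is a point the paper leaves implicit, but the substance is identical.
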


Intuitively, the $\epsilon$-fixed points of the negotiation function are the requirements $\lambda$ such that, from every vertex $v$, the player $i$ controlling $v$ cannot enforce a payoff greater than $\lambda(v) + \epsilon$ against a $\lambda$-rational behaviour.
Therefore, the $\lambda$-consistent plays are such that if one player tries to deviate, it is possible for the other players to prevent them improving their payoff by more than $\epsilon$, while still playing rationally.
Formally:

\begin{thm}[App.~\ref{pf_spe}] \label{thm_spe}
	Let $G_{\|v_0}$ be an initialized prefix-independent game, and let $\epsilon \geq 0$.
	Let $\lambda^*$ be the least $\epsilon$-fixed point of the negotiation function.
	Let $\xi$ be a play starting in $v_0$.
	If there exists an $\epsilon$-SPE $\bsigma$ such that $\< \bsigma \>_{v_0} = \xi$, then $\xi$ is $\lambda^*$-consistent.
	The converse is true if the game $G$ is with steady negotiation.
\end{thm}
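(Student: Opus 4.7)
The theorem splits into two implications: the forward direction works for every prefix-independent game, while the converse needs steady negotiation.

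For the forward direction, given an $\epsilon$-SPE $\bsigma$ with $\< \bsigma \>_{v_0} = \xi$, I associate to it the requirement
\[
\lambda_{\bsigma}(v) ~=~ \inf\bigl\{\, \mu_i(\< \bsigma_{\|hv} \>_v) ~:~ hv \in \Hist G_{\|v_0} \,\bigr\} \quad \text{for } v \in V_i,
\]
and show that (i)~$\xi$ is $\lambda_{\bsigma}$-consistent and (ii)~$\lambda_{\bsigma}$ is an $\epsilon$-fixed point of $\nego$; minimality of $\lambda^*$ then gives $\lambda^* \leq \lambda_{\bsigma}$, so $\xi$ is $\lambda^*$-consistent. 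Point (i) is immediate from the definition, by taking $h = \xi_0 \dots \xi_{n-1}$ for each $n$ with $\xi_n \in V_i$. The lower bound in (ii), $\nego(\lambda_{\bsigma}) \geq \lambda_{\bsigma}$, is the non-decreasingness of $\nego$. For the upper bound, fix $v \in V_i$ and $\delta > 0$, choose $hv$ with $\mu_i(\< \bsigma_{\|hv} \>_v) \leq \lambda_{\bsigma}(v) + \delta$, and observe that $\bsigma_{-i\|hv}$ is $\lambda_{\bsigma}$-rational in $G_{\|v}$ (witnessed by $\sigma_{i\|hv}$), since any play $\< \bsigma_{\|hvh'w}\>_w$ obtained along a history compatible with $\bsigma_{-i\|hv}$ is $\lambda_{\bsigma}$-consistent by the same argument used for (i). Combined with the $\epsilon$-SPE condition at history $hv$, this yields
\[
\sup_{\sigma'_i} \mu_i\bigl(\< \bsigma_{-i\|hv}, \sigma'_i \>_v\bigr) ~\leq~ \mu_i(\< \bsigma_{\|hv}\>_v) + \epsilon ~\leq~ \lambda_{\bsigma}(v) + \delta + \epsilon,
\]
and letting $\delta \to 0$ gives $\nego(\lambda_{\bsigma})(v) \leq \lambda_{\bsigma}(v) + \epsilon$.

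For the reverse direction, steady negotiation together with the inequality $\nego(\lambda^*)(v) \leq \lambda^*(v) + \epsilon$ provides, for every $v \in V_j$, a $\lambda^*$-rational strategy profile $\btau^v_{-j}$ (witnessed by some $\tau^v_j$) satisfying $\sup_{\sigma_j} \mu_j(\< \btau^v_{-j}, \sigma_j\>_v) \leq \lambda^*(v) + \epsilon$. I build $\bsigma$ along the lines of Example~\ref{ex_inf_spe}: players follow $\xi$ on the main line; after a deviation by player $j$ at state $v$, the coalition $\Pi \setminus \{j\}$ plays $\btau^v_{-j}$ for a finite, carefully calibrated number of steps --- long enough to drive $j$'s running average close to $\lambda^*(v) + \epsilon$ --- before returning to a $\lambda^*$-consistent continuation. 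Prefix-independence is what ensures that these finite punishment phases do not spoil the long-run payoffs of the punishers. The delicate part, and main obstacle, is to verify that this $\bsigma$ is an $\epsilon$-SPE in \emph{every} subgame rather than only from $v_0$: a naive ``reset after every deviation'' scheme fails because a player could chain deviations to accumulate gains exceeding $\epsilon$, so the punishment lengths must be tuned using the finite state space and the $\liminf$ structure of $\mu$, ensuring that for every $hv$, every player $i$, and every $\sigma'_i$ one has $\mu_i(\< \bsigma_{-i\|hv}, \sigma'_i\>_v) \leq \mu_i(\< \bsigma_{\|hv}\>_v) + \epsilon$.
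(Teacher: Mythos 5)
Your forward direction is essentially the paper's own argument: you define the same requirement $\lambda_{\bsigma}(v) = \inf_{hv} \mu_i(\< \bsigma_{\|hv} \>_v)$, observe that every play generated by $\bsigma$ in a subgame is $\lambda_{\bsigma}$-consistent (so every $\bsigma_{-i\|hv}$ is $\lambda_{\bsigma}$-rational), and use the $\epsilon$-SPE inequality at a near-optimal history to get $\nego(\lambda_{\bsigma}) \leq \lambda_{\bsigma} + \epsilon$. This half is correct.

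The converse is where there is a genuine gap, and it sits exactly where you flag it. Your plan --- punish a deviator for a finite, ``carefully calibrated'' number of steps so as to drive their running average near $\lambda^*(v)+\epsilon$, then return to the main line --- is never actually carried out: the chained-deviation problem you name is the whole difficulty, and saying the punishment lengths ``must be tuned using the finite state space and the $\liminf$ structure of $\mu$'' is not a proof. Worse, the theorem is stated for arbitrary prefix-independent games with steady negotiation, not only mean-payoff games, so there is no ``running average'' to calibrate; any argument resting on the $\liminf$ structure of $\mu$ cannot establish the statement at its stated generality. The paper's proof takes a different route that avoids both problems: after a deviation by player $i$ at $v$, the coalition switches \emph{permanently} to a $\lambda^*$-rational strategy profile $\btau^{v*}_{-i}$ that is \emph{subgame-worst}, i.e.\ realizes the value $\inf_{\btau_{-i} \in \lambda^*\Rat(w)} \sup_{\tau_i} \mu_i(\< \btau \>_w)$ not only at $v$ but after every history $hw$ compatible with it. The real technical content of the converse is the construction of this subgame-worst profile from mere steady negotiation (by gluing the profiles $\btau^w$ with ``resets'' whenever the deviator lowers the value they can secure, and using that the non-increasing sequence of secured values along a play takes at most $\card V$ values, hence stabilizes). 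Because the punishment profile is itself $\lambda^*$-rational, the punishers' payoffs stay above their own requirements with no need to return to the main line, and subgame-worstness is precisely what bounds the gain from chained deviations by $\epsilon$. Your proposal is missing this construction, and without it (or a worked-out substitute) the converse is not established.
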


	\section{Negotiation games} \label{sec_negotiation_games}

We have now proved that SPEs are characterized by the requirements that are fixed points of the negotiation function; but we need to know how to compute, in practice, the quantity $\nego(\lambda)$ for a given requirement $\lambda$. In other words, we need a algorithm that computes, given a state $v_0$ controlled by a player $i$ in the game $G$, and given a requirement $\lambda$, which value player $i$ can ensure in $G_{\|v_0}$ if the other players play $\lambda$-rationally.

		\subsection{Abstract negotiation game}

We first define an \emph{abstract negotiation game}, that is conceptually simple but not directly usable for an algorithmic purpose, because it is defined on an uncoutably infinite state space.

A similar definition was given in \cite{DBLP:journals/mor/FleschP17}, as a tool in a general method to compute SPE plays in games whose payoff functions have finite range, which is not the case of mean-payoff games.
Here, linking that game with our concepts of requirements, negotiation function and steady negotiation enables us to present an effective algorithm in the case of mean-payoff games, by constructing a finite version of the abstract negotiation game, the \emph{concrete negotiation game}, and afterwards by analyzing the negotiation function with linear algebra tools.

The abstract negotiation game from a state $v_0$, with regards to a player $i$ and a requirement $\lambda$, is denoted by $\Abs_{\lambda i}(G)_{\|[v_0]}$ and opposes two players, \emph{Prover} and \emph{Challenger}, as follows:
	
	\begin{itemize}
		\item Prover proposes a $\lambda$-consistent play $\rho$ from $v_0$ (or loses, if she has no play to propose).
		
		\item Then, either Challenger accepts the play and the game terminates; or, he chooses an edge $\rho_k \rho_{k+1}$, with $\rho_k \in V_i$, from which he can make player $i$ deviate, using another edge $\rho_k v$ with $v \neq \rho_{k+1}$: then, the game starts again from $v$ instead of $v_0$.
		
		\item In the resulting play (either eventually accepted by Challenger, or constructed by an infinity of deviations), Prover wants player $i$'s payoff to be low, and Challenger wants it to be high.
	\end{itemize}

	That game gives us the basis of a method to compute $\nego(\lambda)$ from $\lambda$: the maximal outcome that Challenger --- or $\C$ for short --- can ensure in $\Abs_{\lambda i}(G)_{\|[v_0]}$, with $v_0 \in V_i$, is also the maximal payoff that player $i$ can ensure in $G_{\|v_0}$, against a $\lambda$-rational environment; hence the equality $\val_\C\left(\Abs_{\lambda i}(G)_{\|[v_0]}\right) = \nego(\lambda)(v_0).$
	A proof of that statement, with a complete formalization of the abstract negotiation game, is presented in Appendix~\ref{app_abstract}.

\begin{ex} \label{ex_abstract_game}
Let us consider again the game of Example~\ref{ex_sans_spe}: the requirement $\lambda_2 = \nego(\lambda_1)$, computed in Section~\ref{ss_def_nego}, is also presented on the third line below the states in Figure~\ref{fig_sans_spe}.
Let us use the abstract negotiation game to compute the requirement $\lambda_3 = \nego(\lambda_2)$.

From $a$, Prover can propose the play $abd^\omega$, and the only deviation Challenger can do is going to $c$; he has of course no incentive to do it. Therefore, $\lambda_3(a) = 2$.
From $b$, whatever Prover proposes at first, Challenger can deviate and go to $a$. Then, from $a$, Prover cannot propose the play $ac^\omega$, which is not $\lambda_2$-consistent: she has to propose a play beginning by $ab$, and to let Challenger deviate once more. He can then deviate infinitely often that way, and generate the play $(ba)^\omega$: therefore, $\lambda_3(b) = 3$.
The other states keep the same values.
Note that there exists no $\lambda_3$-consistent play from $a$ or $b$, hence $\nego(\lambda_3)(a) = \nego(\lambda_3)(b) = +\infty$.
This proves that there is no SPE in that game.
\end{ex}

The interested reader will find other such examples in Appendix~\ref{app_ex}.

		\subsection{Concrete negotiation game}

In the abstract negotiation game, Prover has to propose complete plays, on which we can make the hypothesis that they are $\lambda$-consistent. In practice, there will often be an infinity of such plays, and therefore it cannot be used directly for an algorithmic purpose. Instead, those plays can be given edge by edge, in a finite state game. Its definition is more technical, but it can be shown that it is equivalent to the abstract one.
In order to make the definition as clear as possible, we give it only when the original game is a mean-payoff game.
However, one could easily adapt this definition to other classes of prefix-independent games.

\begin{defi}[Concrete negotiation game]
	Let $G_{\|v_0}$ be an initialized mean-payoff game, and let $\lambda$ be a requirement on $G$, with either $\lambda(V) \subseteq \R$, or $\lambda = \lambda_0$.
	
	The \emph{concrete negotiation game} of $G_{\|v_0}$ for player $i$ is the two-player zero-sum game $\Conc_{\lambda i}(G)_{\|s_0}~=~\left( \{\P, \C\}, S, (S_{\P}, S_{\C}), \Delta, \nu\right)_{\|s_0}$, defined as follows:

	\begin{itemize}
		\item The set of states controlled by Prover is $S_{\P} = V \times 2^V$, where the state $s = (v, M)$ contains the information of the current state $v$ on which Prover has to define the strategy profile, and the \emph{memory} $M$ of the states that have been traversed so far since the last deviation, and that define the requirements Prover has to satisfy.
		The initial state is $s_0 = (v_0, \{v_0\})$.
		
		\item The set of states controlled by Challenger is $S_{\C} = E \times 2^V$, where in the state $s = (uv, M)$, the edge $uv$ is the edge proposed by Prover.
		
		\item The set $\Delta$ contains three types of transitions: \emph{proposals}, \emph{acceptations} and \emph{deviations}.
		
		\begin{itemize}
			\item The proposals are transitions in which Prover proposes an edge of the game $G$:
			$$\Prop = \left\{ (v, M) (vw, M) ~\left|~
			vw \in E, M \in 2^V \right.\right\};$$
			
			\item the acceptations are transitions in which Challenger accepts to follow the edge proposed by Prover (it is in particular his only possibility when that edge begins on a state that is not controlled by player $i$) --- note that the memory is updated:
			$$\Acc = \left\{ (vw, M)\left(w, M \cup \{w\}\right) ~\left|~
			j \in \Pi, w \in V_j
			\right.\right\};$$
			
			\item the deviations are transitions in which Challenger refuses to follow the edge proposed by Prover, as he can if that edge begins in a state controlled by player $i$ --- the memory is erased, and only the new state the deviating edge leads to is memorized:
			$$\Dev = \left\{ (uv, M) (w, \{w\})	~\left|~ 
			u \in V_i, w \neq v, uw \in E
			\right.\right\}.$$
		\end{itemize}

		\item On those transitions, we define a multidimensional weight function $\hpi: \Delta \to \R^{\Pi \cup \{\star\}}$, with one dimension per player (\emph{non-main} dimensions) plus one special dimension (\emph{main} dimension) denoted by the symbol $\star$.
		For each non-main dimension $j \in \Pi$, we define:
		\begin{itemize}
		    \item on proposals: $\hpi_j \left((v, M) (vw, M)\right) = 0$;
		    
		    \item on acceptations and deviations: $\hpi_j\left((uv, M) (w, N)\right) = 2 \left(\pi_j(uw) - \underset{v_j \in M \cap V_j}{\max} \lambda(v_j)\right)$;
		\end{itemize}
		
    	and on the main dimension:
    	\begin{itemize}
    	    \item on proposals: $\hpi_\star\left((v, M), (vw, M)\right) = 0$;
    	    
    	    \item on acceptations and deviations: $\hpi_\star\left((uv, M), (w, N)\right) = 2 \pi_i(uw)$.
    	\end{itemize}

	    For each dimension $d$, we write $\hmu_d$ the corresponding mean-payoff function:
	    $$\hmu_d(\rho) = \liminf_{n \in \N} \frac{1}{n} \sum_{k=0}^{n-1} \hpi_d(\rho_k \rho_{k+1}).$$
	    
	    Thus, the mean-payoff along the main dimension corresponds to player $i$'s payoff, while the mean-payoff along a non-main dimension $j$ corresponds to player $j$'s payoff... minus the maximal requirement player $j$ has to satisfy.

		\item Then, the outcome function $\nu_\C = -\nu_\P$ measures player $i$'s payoff, with a winning condition if the constructed strategy profile is not $\lambda$-rational, that is to say if after finitely many player $i$'s deviations, it can generate a play which is not $\lambda$-consistent:
		
		\begin{itemize}
		    \item $\nu_\C(\eta) = +\infty$ if after some index $n \in \N$, the play $\eta_n \eta_{n+1} \dots$ contains no deviation, and if  $\hmu_j(\eta) < 0$ for some $j \in \Pi$;
		    
		    \item $\nu_\C(\eta) = \hmu_\star(\eta)$ otherwise.
		\end{itemize}
	\end{itemize}
\end{defi}

Like in the abstract negotiation game, the goal of Challenger is to find a $\lambda$-rational strategy profile that forces the worst possible payoff for player $i$, and the goal of Prover is to find a possibly deviating strategy for player $i$ that gives them the highest possible payoff.

A play or a history in the concrete negotiation game has a projection in the game on which that negotiation game has been constructed, defined as follows:

\begin{defi}[Projection of a history, of a play]
	Let $G$ be a prefix-independent game. Let $\lambda$ be a requirement and $i$ a player, and let $\Conc_{\lambda i}(G)$ be the corresponding concrete negotiation game. Let $H = (h_0, M_0) (h_0h'_0, M_0) \dots (h_nh'_n, M_n)$ be a history in $\Conc_{\lambda i}(G)$: the \emph{projection} of the history $H$ is the history $\dH = h_0 \dots h_n$ in the game $G$.
	That definition is naturally extended to plays.
\end{defi}

\begin{rek}
    For a play $\eta$ without deviations, we have $\hmu_j(\eta) \geq 0$ for each $j \in \Pi$ if and only if $\deta$ is $\lambda$-consistent.
\end{rek}

The concrete negotiation game is equivalent to the abstract one: the only differences are that the plays proposed by Prover are proposed edge by edge, and that their $\lambda$-consistency is not written in the rules of the game but in its outcome function.

\begin{thm}[App. \ref{pf_concrete_game}] \label{thm_concrete_game}
	Let $G_{\|v_0}$ be an initialized mean-payoff game. Let $\lambda$ be a requirement and $i$ a player. Then, we have:
	$$\val_\C\left(\Conc_{\lambda i}(G)_{\|s_0}\right) = \inf_{\bsigma_{-i} \in \lRat(v_0)} ~ \sup_{\sigma_i} ~ \mu_i(\< \bsigma \>_{v_0}).$$
\end{thm}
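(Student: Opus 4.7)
The plan is to prove the two inequalities separately by translating between Prover strategies $\sigma_\P$ in the concrete negotiation game and $\lambda$-rational strategy profiles $\bsigma_{-i}$ in $G_{\|v_0}$. The key preliminary computation, obtained by unfolding the definitions, is twofold. First, a play $\eta$ strictly alternates between states of $S_\P$ and $S_\C$, and only acceptation/deviation edges carry non-zero weight; the factor $2$ in $\hpi_\star$ and $\hpi_j$ compensates this alternation, so that $\hmu_\star(\eta) = \mu_i(\deta)$. Second, if from some index on $\eta$ contains no further deviations, the memory grows monotonically to the set $M_\infty$ of all states visited in the tail, and for each $j$, $\hmu_j(\eta) = \mu_j(\deta) - \max_{v_j \in M_\infty \cap V_j} \lambda(v_j)$; by prefix-independence of mean-payoff, $\hmu_j(\eta) \geq 0$ for every $j$ is then equivalent to the tail of $\deta$ after the last deviation being $\lambda$-consistent.

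For the inequality $\val_\C(\Conc_{\lambda i}(G)_{\|s_0}) \leq \nego(\lambda)(v_0)$, I fix any $\bsigma_{-i} \in \lRat(v_0)$, witnessed by $\sigma_i$, and define $\sigma_\P$ so that from a Prover state $(v, M)$ reached through concrete history $H$, the proposed edge is $v \, \sigma_j(\dH)$ if $v \in V_j$, using $\sigma_i$ when $j = i$. Any play $\eta$ compatible with $\sigma_\P$ projects onto a play $\deta$ compatible with $\bsigma_{-i}$, where Challenger's deviations correspond precisely to deviations of player $i$ from $\sigma_i$. By $\lambda$-rationality of $\bsigma_{-i}$, the tail of $\deta$ after its last deviation (or all of $\deta$ if none) is $\lambda$-consistent, so $\hmu_j(\eta) \geq 0$ for every $j$ and thus $\nu_\C(\eta) = \hmu_\star(\eta) = \mu_i(\deta)$. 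As Challenger strategies encode arbitrary player-$i$ strategies in $G$, one obtains $\sup_{\sigma_\C} \nu_\C(\< \sigma_\P, \sigma_\C \>_{s_0}) \leq \sup_{\sigma'_i} \mu_i(\< \bsigma_{-i}, \sigma'_i \>_{v_0})$, and taking the infimum over $\bsigma_{-i}$ yields the claim.

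For the reverse inequality, it suffices to consider Prover strategies $\sigma_\P$ with $\sup_{\sigma_\C} \nu_\C(\< \sigma_\P, \sigma_\C \>_{s_0}) < \infty$ (otherwise $\val_\C = +\infty$ and the inequality is trivial). From such a $\sigma_\P$, I read off inductively a strategy profile $\bsigma_{-i}$ together with a candidate $\sigma_i$: at a history $hv$ compatible with the previously defined $\bsigma_{-i}$, simulate the concrete game by having Challenger accept Prover's proposal when it matches the next vertex of $hv$ and deviate to it otherwise (deviations being legal only at player-$i$ states, while compatibility ensures matching at the other states); then define $\sigma_j(hv)$ as the edge Prover next proposes in this simulation. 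For every compatible $hv$, the play $\< \bsigma_{\|hv} \>_v$ corresponds to a concrete play in which Challenger mimics $hv$ and then accepts forever, so the preliminary observation together with $\sup \nu_\C < \infty$ forces this play to be $\lambda$-consistent, establishing $\bsigma_{-i} \in \lRat(v_0)$. As every player-$i$ strategy in $G_{\|v_0}$ is in turn simulated by some Challenger strategy, one obtains $\sup_{\sigma'_i} \mu_i(\< \bsigma_{-i}, \sigma'_i \>_{v_0}) \leq \sup_{\sigma_\C} \nu_\C(\< \sigma_\P, \sigma_\C \>_{s_0})$; minimizing over $\sigma_\P$ gives $\nego(\lambda)(v_0) \leq \val_\C$.

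The main obstacle I expect is handling plays $\eta$ with infinitely many deviations: the memory is reset infinitely often, so the $\hmu_j$ are hard to interpret, but by definition $\nu_\C(\eta) = \hmu_\star(\eta)$ in this case, sidestepping the issue. Secondarily, ensuring $\val_\C$ is well-defined (i.e.\ $\sup_{\sigma_\C} \inf_{\sigma_\P} \nu_\C = \inf_{\sigma_\P} \sup_{\sigma_\C} \nu_\C$) requires Borel determinacy together with a standard thresholding argument, since $\nu_\C$ combines a co-B\"uchi condition on deviations with a mean-payoff threshold. Finally, the case $\lambda = \lambda_0$ is treated in parallel, with the non-main dimensions being vacuous so that $\nu_\C$ reduces to the standard mean-payoff outcome.
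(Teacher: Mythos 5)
Your proposal is correct and takes essentially the same approach as the paper's proof: both inequalities are obtained by the same back-and-forth translation between Prover strategies in $\Conc_{\lambda i}(G)_{\|s_0}$ and $\lambda$-rational strategy profiles in $G_{\|v_0}$ via the projection, with the observation that $\hmu_\star(\eta)=\mu_i(\deta)$ and that finiteness of $\sup_{\tau_\C}\nu_\C$ forces every post-deviation tail to be $\lambda$-consistent. The paper phrases this as an equality of the two value sets rather than two inequalities, but the constructions and the key arguments are the same.
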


An example of concrete negotiation game is given in Appendix \ref{ex_concrete_game}.

		\subsection{Solving the concrete negotiation game}

We now know that $\nego(\lambda)(v)$, for a given requirement $\lambda$, a given player $i$ and a given state $v \in V_i$, is the value of the concrete negotiation game $\Conc_{\lambda i}(G)_{\|(v, \{v\})}$. Let us now show how, in the mean-payoff case, that value can be computed.

\begin{defi}[Memoryless strategy]
	A strategy $\sigma_i$ in a game $G$ is \emph{memoryless} if for all vertices $v \in V_i$ and for all histories $h$ and $h'$, we have $\sigma_i(hv) = \sigma_i(h'v)$.
\end{defi}

For any game $G$ and any memoryless strategy $\sigma_i$, $G[\sigma_i]$ denotes the graph \emph{induced} by $\sigma_i$, that is the graph $(V, E')$, with $E' = \left\{vw \in E ~|~ v \not\in V_i \mathrm{~or~} w = \sigma_i(v)\right\}.$
For any finite set $D$ and any set $X \subseteq \R^D$, $\Conv X$ denotes the convex hull of $X$.

We can now prove that in the concrete negotiation game constructed from a mean-payoff game, Challenger has an optimal strategy that is memoryless.

\begin{lm}[App. \ref{pf_memoryless}] \label{lm_memoryless}
	Let $G_{\|v_0}$ be an initialized mean-payoff game, let $i$ be a player, let $\lambda$ be a requirement and let $\Conc_{\lambda i}(G)_{\|s_0}$ be the corresponding concrete negotiation game. There exists a memoryless strategy $\tau_\C$ that is optimal for Challenger, i.e. such that:
	$$\underset{\tau_\P}{\inf} ~\nu_\C(\< \btau \>_{s_0}) = \val_\C\left( \Conc_{\lambda i}(G)_{\|s_0} \right).$$
\end{lm}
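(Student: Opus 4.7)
The plan is to view $\Conc_{\lambda i}(G)_{\|s_0}$ as a finite-state two-player zero-sum game whose outcome function $\nu_\C$ is Borel (it combines the mean-payoffs $\hmu_\star$ and the $\hmu_j$ with the eventual-behaviour condition ``no deviation occurs past some index''), so by Martin's theorem the game is determined and admits a well-defined value $\val_\C(s)$ at every state $s \in S$. I would then construct a memoryless optimal strategy $\tau_\C$ for Challenger by treating separately the states with $\val_\C(s) = +\infty$ and those with $\val_\C(s) \in \R$.

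On the subset $S^\infty = \{s \in S \mid \val_\C(s) = +\infty\}$, Challenger can force the play to eventually stop containing any deviation edge while some $\hmu_j$ stays strictly negative. Since whether or not to deviate is under Challenger's control, I would define $\tau_\C$ on $S^\infty$ to accept at every Challenger state; Prover is then left alone to produce the suffix, and by definition of $S^\infty$ she cannot avoid some $\hmu_j < 0$. This ``always accept'' policy is memoryless by construction. On the complement $S \setminus S^\infty$, the outcome reduces to the single-dimensional mean-payoff $\hmu_\star$ between Challenger (maximiser) and Prover (minimiser), and the classical memoryless-determinacy result for finite mean-payoff games (Ehrenfeucht--Mycielski) provides a memoryless optimal strategy for Challenger on this sub-arena, obtained by picking at each Challenger state a successor satisfying the usual Bellman-like inequality relating $\val_\C(s)$, $\val_\C(s')$ and $\hpi_\star(s,s')$.

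The main obstacle I expect is in showing that patching these two memoryless strategies together yields a single strategy that is optimal from the initial state $s_0$. This reduces to a closure property of the value function: at any Challenger state $s$, the value-preserving memoryless choice cannot send the play from the finite-value regime into the infinite-value regime in a way that decreases Challenger's payoff, since the Bellman-like inequalities satisfied by $\val_\C$ would then be violated. Since the memory component $M$ is already part of the state $s = (uv, M)$ of the concrete negotiation game, no additional bookkeeping is required beyond the current state, and a standard telescoping argument on the main-dimension weights along the induced play yields $\hmu_\star \geq \val_\C(s_0)$ against any Prover response, thereby establishing optimality of $\tau_\C$.
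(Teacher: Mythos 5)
Your decomposition into an infinite-value region and a finite-value region does not survive scrutiny, and both halves contain a genuine gap. On $S^\infty$, the fact that $\val_\C(s)=+\infty$ only tells you that \emph{some} Challenger strategy forces the outcome $+\infty$; it does not tell you that the always-accept strategy does. The value at $s=(uv,M)$ can be $+\infty$ precisely because Challenger can \emph{deviate} to a vertex $w$ (resetting the memory to $\{w\}$) from which Prover has no $\lambda$-consistent play to offer, while Prover may perfectly well be able to propose a $\lambda$-consistent continuation from $s$ itself; in that situation always-accepting yields a finite outcome and your strategy is not optimal, so the claim ``by definition of $S^\infty$ she cannot avoid some $\hmu_j<0$'' is false. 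On $S\setminus S^\infty$, the reduction to a one-dimensional mean-payoff game and to Ehrenfeucht--Mycielski is not available: even when the value is finite, $\nu_\C$ still awards $+\infty$ whenever the play eventually stops deviating and some non-main dimension has $\hmu_j<0$, so Prover is minimising $\hmu_\star$ \emph{subject to} the multi-dimensional constraints $\hmu_j\geq 0$. This is exactly why Lemma~\ref{lm_resolution_concrete_game} computes $\opt(K)$ as a $\minstar$ over a convex hull of cycle values rather than a single worst cycle, and it is a multi-dimensional objective for which Prover in general needs memory; memoryless optimality for Challenger is the non-trivial content of the lemma and cannot be imported from the one-dimensional theory.

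The paper's proof supplies precisely the two ingredients your plan lacks. First, it packages the whole objective (main-dimension threshold together with ``infinitely many deviations or all $\hmu_j\geq 0$'') as a single prefix-independent \emph{convex} objective $\Phi$ and invokes Kopczy\'nski's theorem, which states that a player who can falsify a convex prefix-independent objective can do so memorylessly; this replaces both your $S^\infty$ argument and your appeal to Ehrenfeucht--Mycielski. Second, it handles a $\liminf$/$\limsup$ mismatch that your proposal does not mention: convexity is only available for the $\limsup$ version of the main-dimension threshold, and one must then argue, via the simple cycles of the finite graph induced by the memoryless strategy (where $\limsup$ and $\liminf$ coincide on periodic plays) and the cycle-decomposition characterisation of achievable mean-payoffs, that every induced play also satisfies the $\liminf$ bound required by $\nu_\C$. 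Without substitutes for these two steps the proposal does not establish the lemma.
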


    

For every game $G_{\|v_0}$ and each player $i$, $\ML_i\left(G_{\|v_0}\right)$, or $\ML\left(G_{\|v_0}\right)$ when the context is clear, denotes the set of memoryless strategies for player $i$ in $G_{\|v_0}$.
When $(V, E)$ is a graph, $\SC(V, E)$ denotes the set of its simple cycles, and $\SConn(V, E)$ the set of its strongly connected components.
For any closed set $C \subseteq \R^{\Pi \cup \{\star\}}$, the quantity $\minstar C = \min \left\{ x_\star ~|~ \bx \in C, \forall j \in \Pi, x_j \geq 0 \right\}$ is the \emph{$\star$-minimum} of $C$: it will capture, in the concrete  negotiation game, the least payoff that can be imposed on player $i$ while keeping every player's payoff above their requirements, among a set of possible outcomes.

With Lemma \ref{lm_memoryless}, we can now solve the concrete negotiation game.

\begin{lm}[App. \ref{pf_resolution_concrete_game}] \label{lm_resolution_concrete_game}
	Let $G_{\|v_0}$ be an initialized mean-payoff game, and let $\Conc_{\lambda i}(G)_{\|s_0}$ be its concrete negotiation game for some $\lambda$ and some $i$. Then, the value of the game $\Conc_{\lambda i}(G)_{\|s_0}$ is given by the formula:
	$$\max_{\tau_\C \in \ML_\C\left(\Conc_{\lambda i}(G)\right)} ~
	\min_{\scriptsize{\begin{matrix}
			K \in \SConn\left(\Conc_{\lambda i}(G)[\tau_\C]\right) \\
			\mathrm{accessible~from~} s_0
			\end{matrix}}} \opt(K),$$
	where $\opt(K)$ is the minimal value $\nu_\C(\rho)$ for $\rho$ among the infinite paths in $K$.
	
	If $K$ contains a deviation, then Prover can choose among its simple cycles the one that minimizes player $i$'s payoff:
	$$\opt(K) = \underset{c \in \SC(K)}{\min} ~ \hmu_\star(c^\omega).$$
		
	If $K$ does not contain a deviation, then Prover must choose a combination of its simple cycles that minimizes the main dimension while keeping the other dimensions above $0$:
	$$\opt(K) = \minstar \underset{c \in \SC(K)}{\Conv} ~ \hmu(c^\omega).$$
\end{lm}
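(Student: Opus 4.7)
The plan is to leverage Lemma~\ref{lm_memoryless}: Challenger has a memoryless optimal strategy, so the value of $\Conc_{\lambda i}(G)_{\|s_0}$ equals $\max_{\tau_\C \in \ML_\C} \inf_{\tau_\P} \nu_\C(\<\btau\>_{s_0})$. Once $\tau_\C$ is fixed, the game becomes a one-player graph optimization for Prover on the finite induced graph $\Conc_{\lambda i}(G)[\tau_\C]$. Since the graph is finite, every infinite play eventually stays forever in a bottom strongly connected component reachable from $s_0$; moreover, once such a component $K$ is entered, Prover can realize any infinite path within $K$. Hence Prover's optimal value against $\tau_\C$ is $\min_K \opt(K)$ over SCCs $K$ accessible from $s_0$, where $\opt(K)$ denotes the infimum of $\nu_\C$ over infinite paths living in $K$. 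Taking the max over $\tau_\C$ yields the announced outer formula.

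It remains to express $\opt(K)$. I would split according to whether $K$ contains a deviation transition. If it does, then by strong connectedness Prover can traverse that deviation infinitely often, so the winning clause of $\nu_\C$ (which requires that after some point no deviation occurs) is never triggered, and $\nu_\C$ coincides with $\hmu_\star$. The infimum of the $\liminf$-mean-payoff over infinite paths in a strongly connected weighted graph is then attained by a simple cycle --- Prover simply loops forever on the minimizing one --- giving $\opt(K) = \min_{c \in \SC(K)} \hmu_\star(c^\omega)$.

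If $K$ contains no deviation, every infinite path in $K$ has exactly zero deviations, so $\nu_\C(\eta) = +\infty$ unless $\hmu_j(\eta) \geq 0$ for every non-main dimension $j \in \Pi$. Prover therefore seeks an infinite path in $K$ whose multidimensional mean-payoff vector $(\hmu_d(\eta))_d$ satisfies these non-main inequalities while minimizing the main component. The set of $\liminf$-mean-payoff vectors achievable by infinite paths in a strongly connected graph is exactly $\Conv_{c \in \SC(K)} \hmu(c^\omega)$: every simple cycle is realizable, any convex combination can be approached by alternating simple cycles with frequencies matching the target weights and phases of doubly exponential length (so that the $\liminf$ actually equals the chosen average, as in the construction sketched in Example~\ref{ex_inf_spe}), and conversely any achievable vector lies in this convex hull because the average weight over any prefix is a convex combination of simple-cycle averages modulo a vanishing term. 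Since the convex hull of finitely many points is compact, the infimum is attained, and $\opt(K) = \minstar \Conv_{c \in \SC(K)} \hmu(c^\omega)$.

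The main obstacle is the multidimensional $\liminf$-realizability step: showing that exactly the convex hull of simple-cycle vectors is achievable requires the non-trivial doubly exponential scheduling trick for the "$\supseteq$" direction and a Cesàro averaging argument, together with a compactness argument, for the "$\subseteq$" direction. Everything else reduces to standard facts about finite graphs (SCC decomposition, attainment of optimal mean-payoff by a simple cycle) combined with the memoryless determinacy of Challenger already granted by Lemma~\ref{lm_memoryless}.
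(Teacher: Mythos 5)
Your overall architecture (memoryless Challenger via Lemma~\ref{lm_memoryless}, reduction to a one-player optimization on the induced finite graph, SCC decomposition, case split on the presence of a deviation) matches the paper's, and your treatment of the deviation case is essentially the paper's argument. The gap is in the no-deviation case: it is \emph{not} true that the set of multidimensional $\liminf$-mean-payoff vectors achievable by infinite paths in a strongly connected graph equals $\Conv_{c \in \SC(K)} \hmu(c^\omega)$. Because the $\liminf$ is taken dimension by dimension, the various coordinates may be attained along different subsequences of prefixes, so the achievable set is the larger lower-left closure $\bigl(\Conv_{c \in \SC(K)}\hmu(c^\omega)\bigr)^\llcorner$, i.e.\ the set of componentwise minima of finite subsets of the convex hull (this is the result of Chatterjee et al.\ that the paper invokes). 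Your own appeal to Example~\ref{ex_inf_spe} exhibits the failure: the doubly exponential schedule $a^2b^4a^{16}\dots$ realizes the outcome $(0,0)$, which lies outside the convex hull of the cycle values $(0,1)$, $(1,0)$, $(2,2)$. Relatedly, you have the two constructions swapped: doubly exponential phases are precisely what produce these extra corner points, whereas realizing a convex combination exactly requires a \emph{balanced} schedule in which each phase is asymptotically negligible relative to the history, so that the Ces\`aro averages converge and $\liminf=\lim$.

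Consequently your ``$\subseteq$'' step --- that any achievable vector lies in the convex hull --- is false as stated, and it is exactly the half needed for the lower bound $\opt(K) \ge \minstar \Conv_{c \in \SC(K)}\hmu(c^\omega)$; your Ces\`aro argument only controls accumulation points of the prefix averages, not the vector of coordinatewise $\liminf$'s. The announced formula is nonetheless correct: the paper closes the gap by checking that $\minstar X^\llcorner = \minstar X$, since a corner point $\bx=(\min_{\by\in Y}y_d)_d$ has all non-main coordinates nonnegative if and only if every $\by\in Y$ does, and then $x_\star = \min_{\by\in Y}y_\star$ cannot undercut the constrained minimum over $X$. Substituting the $^\llcorner$ characterization for your convex-hull claim and adding that one computation repairs the proof. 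A last, harmless slip: an infinite path need not settle in a \emph{bottom} SCC, only in some non-trivial SCC, but since you minimize over all accessible SCCs the formula is unaffected.
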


\begin{cor}
	For each player $i$ and every state $v \in V_i$, the value $\nego(\lambda)(v)$ can be computed with the formula given in Lemma \ref{lm_resolution_concrete_game} applied to the game $\Conc_{\lambda i}(G)_{\|(v, \{v\})}$
\end{cor}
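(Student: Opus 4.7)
The plan is to combine the previous two results in a direct chain. By definition of the negotiation function, for each player $i$ and each vertex $v \in V_i$ we have
\[
\nego(\lambda)(v) \;=\; \inf_{\bsigma_{-i} \in \lRat(v)}\ \sup_{\sigma_i}\ \mu_i(\<\bsigma\>_v).
\]
By Theorem~\ref{thm_concrete_game}, applied with $v_0 = v$ (so that the initial state of the concrete negotiation game is $s_0 = (v, \{v\})$ by the very definition of $\Conc_{\lambda i}(G)$), the right-hand side of the displayed equation equals $\val_\C\bigl(\Conc_{\lambda i}(G)_{\|(v,\{v\})}\bigr)$. This reduces the computation of $\nego(\lambda)(v)$ to solving a concrete negotiation game, and no further work is needed at this step.

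Next, I would invoke Lemma~\ref{lm_resolution_concrete_game}, which provides an explicit formula for the value of the concrete negotiation game in terms of memoryless Challenger strategies and simple cycles in the induced subgraph. Substituting this formula with $s_0 = (v, \{v\})$ yields the claim. Concretely, $\nego(\lambda)(v)$ is obtained as
\[
\max_{\tau_\C \in \ML_\C(\Conc_{\lambda i}(G))}\
\min_{\substack{K \in \SConn(\Conc_{\lambda i}(G)[\tau_\C]) \\ \text{accessible from } (v,\{v\})}}\ \opt(K),
\]
with $\opt(K)$ defined as in the lemma.

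Since this is a pure composition of the definition of $\nego$ with Theorem~\ref{thm_concrete_game} and Lemma~\ref{lm_resolution_concrete_game}, there is no real obstacle: the only point that deserves a brief check is that the initial state prescribed by the concrete negotiation game construction at a vertex $v_0$ is indeed $(v_0, \{v_0\})$, so that specializing $v_0 := v$ correctly produces the initial state $(v, \{v\})$ appearing in the statement. The hard content of the corollary lies entirely in the two results it invokes; here one only has to unfold the definitions.
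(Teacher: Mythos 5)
Your proposal is correct and is exactly the argument the paper intends: the corollary is an immediate composition of the definition of $\nego$, Theorem~\ref{thm_concrete_game} (with $v_0 := v$, so $s_0 = (v,\{v\})$), and the value formula of Lemma~\ref{lm_resolution_concrete_game}, which is why the paper states it without further proof. Nothing is missing.
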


Another corollary of that result is that there always exists a best play that Prover can choose, i.e. Prover has an optimal strategy; by Theorem \ref{thm_concrete_game}, this is equivalent to saying that:

\begin{cor}
	Mean-payoff games are games with steady negotiation.
\end{cor}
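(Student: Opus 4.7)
My plan is to derive steady negotiation directly from Lemma \ref{lm_resolution_concrete_game} and Theorem \ref{thm_concrete_game}: the former shows that Prover has an optimal strategy in the concrete negotiation game, and the latter lets us translate such a strategy into an environment strategy profile that attains the infimum appearing in the steady-negotiation condition.

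Fix a player $i$, a vertex $v \in V_i$, and a requirement $\lambda$. If $\lRat(v) = \emptyset$, the set in the definition of steady negotiation is empty and there is nothing to prove, so assume it is nonempty. By Theorem \ref{thm_concrete_game}, the quantity $\inf_{\bsigma_{-i} \in \lRat(v)} \sup_{\sigma_i} \mu_i(\< \bsigma \>_v)$ equals $\val_\C\bigl(\Conc_{\lambda i}(G)_{\|(v, \{v\})}\bigr)$, so it suffices to exhibit one $\bsigma_{-i}^* \in \lRat(v)$ for which $\sup_{\sigma_i} \mu_i(\< \bsigma_{-i}^*, \sigma_i \>_v)$ meets this value. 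I would first fix an optimal memoryless Challenger strategy $\tau_\C^*$ using Lemma \ref{lm_memoryless} and then apply the formula of Lemma \ref{lm_resolution_concrete_game} inside the induced graph $\Conc_{\lambda i}(G)[\tau_\C^*]$: all of its nested optima are attained, because the outer maximum is taken over a finite set of memoryless strategies, the inner minimum ranges over the finite set of accessible strongly connected components, and within any SCC $K$ the quantity $\opt(K)$ is either a minimum over the finite set $\SC(K)$, or the $\minstar$ of $\Conv_{c \in \SC(K)} \hmu(c^\omega)$, which is reached at a vertex of a polytope. Pick an accessible SCC $K^*$ realising the inner minimum together with finitely many simple cycles inside $K^*$ and rational weights attaining $\opt(K^*)$, and let Prover's strategy $\tau_\P^*$ first drive the play into $K^*$ and then schedule the chosen simple cycles in blocks of increasing length, so that the $\liminf$ of every coordinate of $\hmu$ matches the target convex combination.

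Translating $\tau_\P^*$ back through the correspondence that underlies Theorem \ref{thm_concrete_game}, the induced environment profile $\bsigma_{-i}^*$ is automatically $\lambda$-rational: a play featuring only finitely many deviations that was not $\lambda$-consistent would receive the $+\infty$ penalty in $\nu_\C$, contradicting the optimality of $\tau_\P^*$ (whose value we assumed is finite). Hence $\bsigma_{-i}^* \in \lRat(v)$ and $\sup_{\sigma_i} \mu_i(\< \bsigma_{-i}^*, \sigma_i \>_v) = \val_\C\bigl(\Conc_{\lambda i}(G)_{\|(v, \{v\})}\bigr)$, which is the required minimum.

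The main obstacle I expect is making the scheduling step inside $K^*$ robust to Challenger's deviations: each deviation resets the memory $M$ to a singleton and may send the play to an unrelated SCC of $\Conc_{\lambda i}(G)[\tau_\C^*]$, so Prover's strategy must reapply the same SCC analysis from the new state. The fact that $\tau_\C^*$ is memoryless, so that the induced graph looks identical from every reset, together with the prefix-independence of mean-payoff, is precisely what will let us patch local cycle-schedules into one global Prover strategy without losing optimality.
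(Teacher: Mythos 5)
Your proposal follows exactly the route the paper takes: it treats the corollary as a consequence of Lemma~\ref{lm_resolution_concrete_game} (all the nested optima in the value formula are attained, so Prover has an optimal strategy, i.e.\ a best play to propose) combined with the correspondence of Theorem~\ref{thm_concrete_game}, which turns an optimal Prover strategy into a $\lambda$-rational environment profile attaining the infimum. The paper states this in one line as an immediate corollary; you simply spell out the same argument in more detail, including the cycle-scheduling and the handling of Challenger's deviations that the paper leaves implicit.
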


	\section{Analysis of the negotiation function in mean-payoff games} \label{sec_analysis_and_algo}

When one wants to compute the least fixed point of a function, the usual method is to iterate it on the minimal element of the considered set, to go until that fixed point.
That approach is sufficient in many simple examples.
In Appendix~\ref{app_nego_seq}, we present its technical details, and an example on which it does not enable to find the least fixed point in a finite number of iterations; which is why another approach is necessary.

In this section, we will show that, in the case of mean-payoff games, the negotiation function is a piecewise linear function from the vector space of requirements into itself, which can therefore be computed and analyzed using classical linear algebra techniques.
Then, it becomes possible to search for the fixed points or the $\epsilon$-fixed points of such a function, and to decide the existence or not of SPEs or $\epsilon$-SPEs in the game studied.

\begin{thm}[App.~\ref{pf_formula_nego}] \label{thm_formula_nego}
	Let $G$ be a mean-payoff game. Let us assimilate any requirement $\lambda$ on $G$ with finite values to the tuple $\blambda = (\lambda(v))_{v \in V}$, element of the vector space $\R^V$. Then, for each player $i$ and every vertex $v_0 \in V_i$, the quantity $\nego(\lambda)(v_0)$ is a piecewise linear function of $\blambda$, and an effective expression of that function can be computed in 2-\textsc{ExpTime}.
\end{thm}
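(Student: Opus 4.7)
The plan is to combine Lemma~\ref{lm_resolution_concrete_game} with a parametric analysis of the formulas it provides. By that lemma,
\[\nego(\lambda)(v_0) \;=\; \max_{\tau_\C \in \ML_\C(\Conc_{\lambda i}(G))} \; \min_{K} \; \opt(K),\]
where $K$ ranges over the SCCs of the induced graph $\Conc_{\lambda i}(G)[\tau_\C]$ that are reachable from $s_0 = (v_0, \{v_0\})$. The first observation, essential to everything that follows, is that the state space $S$, the transition set $\Delta$ and its partition into proposals, acceptations and deviations are all defined without reference to $\lambda$; only the weight function $\hpi$ carries a $\lambda$-dependence, through the terms $\max_{v \in M \cap V_j} \lambda(v)$. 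In particular, both $\ML_\C(\Conc_{\lambda i}(G))$ and, for each fixed $\tau_\C$, the SCC decomposition of $\Conc_{\lambda i}(G)[\tau_\C]$ are independent of $\lambda$.

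Fix such a $\tau_\C$ and an SCC $K$. If $K$ contains a deviation edge, Lemma~\ref{lm_resolution_concrete_game} gives $\opt(K) = \min_{c \in \SC(K)} \hmu_\star(c^\omega)$, which only depends on $\pi_i$ and is therefore a constant in $\lambda$. If $K$ contains no deviation, I first observe that the memory component is the same at every state of $K$: proposals leave the memory unchanged while acceptations can only enlarge it, so along any cycle in $K$ the memory must be fixed, and strong connectivity then forces a common value $M_K \subseteq V$ on all of $K$. For a simple cycle $c \in \SC(K)$, which necessarily has even length with $\ell$ proposals alternating with $\ell$ acceptations, a direct computation gives
\[\hmu_\star(c^\omega) = \bar{\pi}_i(\dc), \qquad \hmu_j(c^\omega) = \bar{\pi}_j(\dc) - \max_{v \in M_K \cap V_j} \lambda(v) \text{ for each } j \in \Pi,\]
where $\bar{\pi}_k(\dc)$ denotes the average of $\pi_k$ along the projected cycle $\dc$ in $G$.

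Consequently $\opt(K) = \minstar \Conv \{\hmu(c^\omega) : c \in \SC(K)\}$ is the optimal value of the linear program in variables $(\alpha_c)_{c \in \SC(K)}$: minimize $\sum_c \alpha_c \bar{\pi}_i(\dc)$ subject to $\alpha_c \geq 0$, $\sum_c \alpha_c = 1$, and $\sum_c \alpha_c \bar{\pi}_j(\dc) \geq \max_{v \in M_K \cap V_j} \lambda(v)$ for every $j \in \Pi$. The entire $\lambda$-dependence is thus confined to the right-hand side of these last constraints, and each such right-hand side is a continuous piecewise-linear function of $\blambda$ (a maximum of coordinate projections). By the standard parametric LP theorem (Walkup--Wets), the value of an LP is continuous and piecewise linear in its right-hand side; composed with the piecewise-linear map $\blambda \mapsto (\max_{v \in M_K \cap V_j} \lambda(v))_j$, this shows $\opt(K)$ is piecewise linear in $\blambda$. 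Since $\min$ and $\max$ of finitely many piecewise-linear functions remain piecewise linear, taking $\min_K$ and then $\max_{\tau_\C}$ proves the first assertion.

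For the complexity bound, $|\Conc_{\lambda i}(G)| = 2^{O(|V|)}$, and the number of memoryless strategies for Challenger is bounded by $|V|^{|S_\C|} = 2^{2^{O(|V|)}}$. For each memoryless $\tau_\C$, one computes the SCC decomposition in time $2^{O(|V|)}$, enumerates simple cycles in each SCC, and either takes a minimum of $\bar{\pi}_i$-values (deviation case) or solves a parametric LP of singly exponential size to produce a piecewise-linear representation of $\opt(K)$. Combining these representations through $\min$ over SCCs and $\max$ over the doubly exponentially many $\tau_\C$ yields an effective piecewise-linear description of $\nego(\lambda)(v_0)$ in $2$-\textsc{ExpTime}. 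The main technical obstacle is the parametric LP step: one must track how the polyhedral decomposition of $\R^V$ induced by the maxima in the right-hand sides interacts with the changes of optimal basis of the LP, and argue that the resulting piecewise-linear representation can be extracted explicitly within the stated complexity rather than merely shown to exist.
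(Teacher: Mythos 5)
Your reduction is exactly the paper's: Lemma~\ref{lm_resolution_concrete_game}, the observation that only the weights of $\Conc_{\lambda i}(G)$ depend on $\lambda$, the constancy of $\opt(K)$ for SCCs with a deviation, the common memory $M_K$ for SCCs without one, and the reformulation of $\opt(K)$ as a linear program whose $\blambda$-dependence sits entirely in the right-hand sides $\max_{v\in M_K\cap V_j}\lambda(v)$. For the first half of the statement (piecewise linearity) the appeal to parametric LP theory is sound. The gap is the one you flag yourself: the theorem also asserts an \emph{effective expression} computable in 2-\textsc{ExpTime}, and a black-box continuity/piecewise-linearity theorem for parametric LPs does not by itself produce that expression, nor bound the cost of producing it. As written, the second half of the claim is not proved.

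The paper closes exactly this gap by making the parametric LP explicit rather than citing a general theorem. It enumerates the candidate optimal vertices of the feasible polytope as intersections of a face of $\Conv_{c}\,\mu(c^\omega)$ (determined by a subset $D$ of cycles, which can be taken minimal, hence of size at most $\card\Pi+1$) with a subset $W\subseteq M_K$ of the constraint hyperplanes $x_j=\lambda(w)$ (at most one per player). For each admissible pair $(W,D)$ the corresponding vertex value is the explicit affine function $\t\bgamma_D^i\,A_{WD}^{-1}(B_W\blambda+\bdelta_W)$ of $\blambda$, and the region of $\R^V$ on which this pair is the optimizer is cut out by explicit linear inequalities (invertibility of $A_{WD}$, positivity of the barycentric coordinates, and satisfaction of the remaining requirement constraints). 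Taking the minimum over pairs, then over SCCs, then the maximum over memoryless Challenger strategies yields the effective piecewise-linear representation, and counting pairs $(W,D)$, SCCs, and strategies gives the 2-\textsc{ExpTime} bound. So your proof is repairable, but it needs this explicit basis enumeration (or an equivalent effective version of the Walkup--Wets argument) to establish the computability claim.
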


	
	

\begin{ex}
    Let us consider the game of Example~\ref{ex_inf_spe}.
    If a requirement $\lambda$ is represented by the tuple $(\lambda(a), \lambda(b))$, the function $\nego: \R^2 \to \R^2$ can be represented by Figure~\ref{fig_linear}, where in any one of the regions delimited by the dashed lines, we wrote a formula for the couple $(\nego(\lambda)(a), \nego(\lambda)(b))$.
    The orange area indicates the fixed points of the function, and the yellow area the other $\frac{1}{2}$-fixed points.

\end{ex}

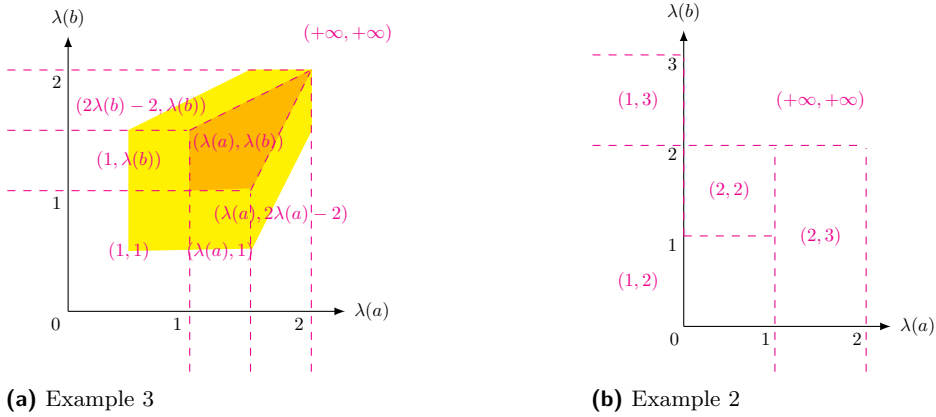
\begin{figure}
	\begin{center}
	\begin{subfigure}[b]{0.5\textwidth}
        	\begin{tikzpicture}[scale=1.6,>=latex,shorten >=1pt, every node/.style={scale=0.7}]
        	\filldraw[yellow] (1/2, 1/2) -- (1/2, 3/2) -- (3/2, 2) -- (2, 2) -- (2, 3/2) -- (3/2, 1/2);
        	\filldraw[orange, opacity=0.5] (1, 1) -- (1, 1.5) -- (2, 2) -- (1.5, 1);
        	
        	\draw[->] (0,0) -- (2.3,0);
        	\draw (2.3,0) node[right] {$\lambda(a)$};
        	\draw[->] (0,0) -- (0,2.3);
        	\draw (0,2.3) node[above] {$\lambda(b)$};
        	\foreach \x in {1,2} \draw(0,\x-0.1)node[left]{\x};
        	\foreach \x in {0,1,2} \draw(\x-0.1,0)node[below]{\x};
        	
        	\path[magenta, dashed] (1, -0.5) edge (1, 1.5);
        	\path[magenta, dashed] (2, -0.5) edge (2, 2);
        	\path[magenta, dashed] (-0.5, 1) edge (1.5, 1);
        	\path[magenta, dashed] (-0.5, 2) edge (2, 2);
        	\path[magenta, dashed] (1, 1.5) edge (2, 2);
        	\path[magenta, dashed] (1.5, 1) edge (2, 2);
        	\path[magenta, dashed] (-0.5, 1.5) edge (1, 1.5);
        	\path[magenta, dashed] (1.5, -0.5) edge (1.5, 1);
        	
        	\draw[magenta] (0.5, 0.5) node {$(1, 1)$};
        	\draw[magenta] (0.5, 1.25) node {$(1, \lambda(b))$};
        	\draw[magenta] (1.25, 0.5) node {$(\lambda(a), 1)$};
        	\draw[magenta] (0.6, 1.7) node {$(2\lambda(b)-2, \lambda(b))$};
        	\draw[magenta] (1.75, 0.8) node {$(\lambda(a), 2\lambda(a)-2)$};
        	\draw[magenta] (1.4, 1.4) node {$(\lambda(a), \lambda(b))$};
        	\draw[magenta] (2.3, 2.3) node {$(+\infty,  +\infty)$};
        	\end{tikzpicture}
        \caption{Example~\ref{ex_inf_spe}} \label{fig_linear}
        \end{subfigure}
        \hfill
        \begin{subfigure}[b]{0.45\textwidth}
		\begin{tikzpicture}[scale=1.2,>=latex,shorten >=1pt, every node/.style={scale=0.7}]
		
		\draw[->] (0,0) -- (2.3,0);
		\draw (2.3,0) node[right] {$\lambda(a)$};
		\draw[->] (0,0) -- (0,3.3);
		\draw (0,3.3) node[above] {$\lambda(b)$};
		\foreach \x in {1,2,3} \draw(0,\x-0.1)node[left]{\x};
		\foreach \x in {0,1,2} \draw(\x-0.1,0)node[below]{\x};
		
		\path[magenta, dashed] (0, 1) edge (0, 3);
		\path[magenta, dashed] (1, -0.5) edge (1, 2);
		\path[magenta, dashed] (2, -0.5) edge (2, 2);
		\path[magenta, dashed] (0, 1) edge (1, 1);
		\path[magenta, dashed] (-1, 2) edge (2, 2);
		\path[magenta, dashed] (-1, 3) edge (0, 3);
		
		\draw[magenta] (-0.5, 0.5) node {$(1, 2)$};
		\draw[magenta] (0.5, 1.5) node {$(2, 2)$};
		\draw[magenta] (1.5, 1) node {$(2, 3)$};
		\draw[magenta] (-0.5, 2.5) node {$(1, 3)$};
		\draw[magenta] (1.5, 2.5) node {$(+\infty,  +\infty)$};
		\end{tikzpicture}
		\caption{Example~\ref{ex_sans_spe}} \label{fig_linear2}
	    \end{subfigure}
	\end{center}
	\caption{The negotiation function on the games of Examples~\ref{ex_inf_spe} and~\ref{ex_sans_spe}}
    \label{fig_linear_total}
\end{figure}

\begin{ex}
	Now, let us consider the game of Example~\ref{ex_sans_spe}.
	If we fix $\lambda(c) = 1$ and $\lambda(d) = 2$, and represent the requirements $\lambda$ by the tuples $(\lambda(a), \lambda(b))$, as in the previous example. Then, the negotiation function can be represented as in Figure~\ref{fig_linear2}.
	One can check that there is no fixed point here, and even no $\frac{1}{2}$-fixed point --- except $(+\infty, +\infty)$.
\end{ex}

    \section{Conclusion: algorithm and complexity}

Thanks to all the previous results, we are now able to compute the least fixed point, or the least $\epsilon$-fixed point, of the negotiation function, on every mean-payoff game, and to use it as a characterization of all the SPEs or all the $\epsilon$-SPEs.
A direct application is an algorithm that solves the \emph{$\epsilon$-SPE constrained existence problem}, i.e. that decides, given an initialized mean-payoff game $G_{\|v_0}$, two thresholds $\bx, \by \in \Q^\Pi$, and a rational number $\epsilon \geq 0$, whether there exists an SPE $\bsigma$ such that $\bx \leq \mu(\< \bsigma \>_{v_0}) \leq \by$.

We leave for future work the optimal complexity of that problem.
However, we can easily prove that it cannot be solved in polynomial time, unless $\mathbf{P = NP}$.

\begin{thm}[App. \ref{app_np_hard}] \label{thm_np_hard}
    The $\epsilon$-SPE constrained existence problem is NP-hard.
\end{thm}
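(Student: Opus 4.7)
The plan is to reduce polynomially from 3-SAT, which gives NP-hardness already at $\epsilon = 0$ (since $0$-SPEs are exactly SPEs); a subsequent rescaling of the weights to make the critical payoff gap exceed $2\epsilon$ extends the argument to any rational $\epsilon \geq 0$.

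Given a 3-SAT instance $\phi = \bigwedge_{j=1}^m C_j$ over variables $x_1, \ldots, x_n$, I build a mean-payoff game with $n+1$ players: one player per variable, plus a distinguished ``checker'' player $0$. The game has two phases. During the \emph{assignment phase}, starting from $v_0$, the play visits gadgets $g_1, \ldots, g_n$ in turn; at $g_i$, player $i$ selects one of two outgoing edges, encoding $x_i := \top$ or $x_i := \bot$. All edges in this phase carry weight $0$ for every player, so the payoff depends only on the tail of the play. After $g_n$, control passes to player $0$, who either enters a self-loop yielding payoff $1$ to every player (the ``agreed'' cycle), or picks a clause $C_j$ and enters a corresponding gadget. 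The clause gadget is designed so that, once the assignment committed during the first phase has been read off, if $C_j$ is satisfied then every reachable cycle leaves player $0$ with payoff $0$, whereas if $C_j$ is violated player $0$ can force a cycle with payoff $2$. The thresholds are set to $\bx = \by = (1, \ldots, 1)$, pinning the outcome to the agreed cycle.

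For correctness: if $\phi$ admits a satisfying assignment $\alpha$, the candidate play ``follow $\alpha$, then loop on the agreed cycle'' is $\lambda^*$-consistent, since in every subgame reached after phase one player $0$'s best $\lambda^*$-rational response is payoff $0 < 1$; Theorem~\ref{thm_spe}, together with the fact that mean-payoff games are with steady negotiation, then produces an SPE supporting this play, whose outcome lies in $[\bx, \by]$. Conversely, any SPE $\bsigma$ with outcome in $[\bx,\by]$ must, by subgame perfection, make the agreed self-loop a rational choice for player $0$ in the subgame reached along $\<\bsigma\>_{v_0}$ after phase one; by the construction of the clause gadgets this forces the committed assignment to satisfy every clause, so $\phi$ is satisfiable.

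The main obstacle is calibrating the gadgets so that subgame perfection genuinely separates the satisfiable from the unsatisfiable case. I must ensure both (i) that $\lambda^*$-rational punishments exist in every subgame against any single-player deviation during the assignment phase when $\phi$ is satisfiable, so that the candidate profile really is an SPE and not merely an NE, and (ii) that when $\phi$ is unsatisfiable, \emph{no} assignment--on path or off--leaves all clause gadgets capping player $0$'s payoff at $1$, so that the least fixed point $\lambda^*$ of $\nego$ forces the requirement at the pre-verification state strictly above $1$ and thus excludes the agreed cycle from $\lambda^*$-consistent plays. The characterization of SPE-supported plays via $\lambda^*$ provided by Theorem~\ref{thm_spe} is precisely the tool that makes both verifications clean.
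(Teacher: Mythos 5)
There is a genuine gap in the central construction. Your reduction hinges on an \emph{assignment phase} traversing $g_1,\dots,g_n$ whose choices are later ``read off'' by the clause gadgets. In a mean-payoff game the outcome is prefix-independent, so a finite prefix contributes nothing to any player's payoff; the only way the committed assignment can influence the verification phase is therefore through the vertex reached after $g_n$. But for that vertex to determine all $n$ choices, the graph must contain a distinct state for each of the $2^n$ assignments, destroying the polynomiality of the reduction; and if instead the two edges out of each $g_i$ merge back into $g_{i+1}$, the assignment is simply forgotten and the clause gadgets have nothing to read. You flag ``calibrating the gadgets'' as the main obstacle, but the problem is upstream of calibration: the two-phase architecture cannot record a truth assignment in a polynomial-size, prefix-independent game. (Your outer frame --- reduce from SAT, use Theorem~\ref{thm_spe} plus steady negotiation to turn a $\lambda^*$-consistent play into an SPE, and rescale weights to cover $\epsilon>0$ --- is sound; it is the game itself that does not exist as described.)

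The paper's construction avoids this by having no assignment phase at all. Prover cycles forever through the clauses $C_1,\dots,C_n$, at each visit selecting a witnessing literal, and the truth value of a variable $x$ is \emph{defined from the infinite play}: $\nu(x)=1$ iff the positive-literal states $(C,x)$ are visited with non-negligible frequency, which is exactly the condition $\mu_x(\rho)<1$ under the chosen weights. Consistency is then enforced not by memory but by subgame perfection: each state $(C,\neg x)$ is controlled by player $x$ and has an escape to a sink giving $x$ payoff $1$ and Prover payoff $0$, so in any SPE where Prover earns $1$, a clause witnessed by $\neg x$ forces $\mu_x(\rho)=1$, i.e.\ $\nu(x)=0$. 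If you want to repair your proof, you need a mechanism of this kind --- an implicit, frequency- or deviation-based encoding of the assignment --- rather than an explicit committed prefix.
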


Given $G_{\|v_0}$, by Theorem \ref{thm_formula_nego}, computing a general expression of the negotiation function as a piecewise linear function can be done in time double exponential in the size of $G$.
Then, for each linear piece of $\nego$, computing its set of $\epsilon$-fixed points is a polynomial problem. Since the number of pieces is at most double exponential in the size of $G$, computing its entire set of fixed points, and thus its least $\epsilon$-fixed point $\lambda$, can be done in double exponential time.
    
Then, from the requirement $\lambda$ and the thresholds $\bx$ and $\by$, we can construct a multi-mean-payoff automaton $\A_{\lambda}$ of exponential size, that accepts an infinite word $\rho \in V^\omega$, if and only if $\rho$ is a $\lambda$-consistent play of $G_{\|v_0}$, and $\bx \leq \mu(\rho) \leq \by$ --- see Appendix \ref{app_automaton} for the construction of $\A_\lambda$.
    
Finally, by Theorem \ref{thm_spe}, there exists an SPE $\bsigma$ in $G_{\|v_0}$ with $\bx \leq \mu(\< \bsigma \>_{v_0}) \leq \by$ if and only if the language of the automaton $\A_{\lambda}$ is nonempty, which can be known in a time polynomial in the size of $\A_{\lambda}$ (see for example \cite{DBLP:conf/fossacs/AlurDMW09}), i.e. in a time exponential in the size of $G$.
We can therefore conclude on the following result:

\begin{thm} \label{thm_decidable}
    The $\epsilon$-SPE constrained existence problem 
    is decidable and 2-\textsc{ExpTime}-easy.
\end{thm}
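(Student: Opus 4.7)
The plan is to assemble the algorithmic pipeline that is already sketched in the paragraphs preceding the statement; the work is to verify that each stage stays within the 2-\textsc{ExpTime} budget. First, I would invoke Theorem~\ref{thm_formula_nego} to compute, in 2-\textsc{ExpTime}, an effective description of the negotiation function $\nego : \R^V \to \R^V$ as a piecewise linear map, together with an enumeration of the (at most doubly exponentially many) polyhedral regions on which it is affine. On each such region, the constraint that a point is an $\epsilon$-fixed point becomes a conjunction of $2|V|$ linear inequalities over $\R^V$, so the set of $\epsilon$-fixed points restricted to that region is a polytope computable in polynomial time. Taking the componentwise infimum across regions, whose existence is guaranteed by Lemma~\ref{lm_least_delta_fixed point}, yields the least $\epsilon$-fixed point $\lambda^*$ in 2-\textsc{ExpTime} overall.

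Second, from $\lambda^*$ together with the thresholds $\bx, \by$ I would construct the multi-mean-payoff automaton $\A_{\lambda^*}$ announced in the excerpt (details deferred to the appendix). Its state space tracks, together with the current vertex of $G$, the maximum value of $\lambda^*$ on the vertices of $V_i$ seen so far, for each player $i \in \Pi$; since $\lambda^*$ takes only polynomially many distinct values, this memory is of exponential size in $|G|$. The acceptance condition is engineered so that $\A_{\lambda^*}$ accepts $\rho \in V^\omega$ precisely when $\rho$ is a play of $G_{\|v_0}$, $\rho$ is $\lambda^*$-consistent, and $\bx \leq \mu(\rho) \leq \by$. By Theorem~\ref{thm_spe}, a play satisfying these three conditions exists if and only if there is an $\epsilon$-SPE $\bsigma$ in $G_{\|v_0}$ with $\bx \leq \mu(\< \bsigma \>_{v_0}) \leq \by$. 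Emptiness for multi-mean-payoff automata is decidable in time polynomial in the automaton's size~\cite{DBLP:conf/fossacs/AlurDMW09}, hence in single-exponential time in $|G|$, which is dominated by the cost of the fixed-point computation and gives the claimed overall 2-\textsc{ExpTime} bound.

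The main obstacle in this argument has in fact been discharged upstream: Theorem~\ref{thm_formula_nego} supplies the effective piecewise linear representation and Theorem~\ref{thm_spe} supplies the semantic bridge from $\lambda^*$-consistency to $\epsilon$-SPE outcomes. What remains is a careful complexity accounting, namely that enumerating linearity regions together with per-region linear programming stays within 2-\textsc{ExpTime}, and that the exponential automaton emptiness check is absorbed by this bound. The most delicate point I would double-check is that the piecewise linear representation produced by Theorem~\ref{thm_formula_nego} is given in a format from which the region-by-region fixed-point extraction is actually polynomial per region, rather than hiding an implicit quantifier elimination that would blow up the complexity further.
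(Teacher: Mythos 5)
Your proposal is correct and follows essentially the same pipeline as the paper: compute the piecewise linear representation of $\nego$ via Theorem~\ref{thm_formula_nego}, extract the least $\epsilon$-fixed point region by region using Lemma~\ref{lm_least_delta_fixed point}, build the multi-mean-payoff automaton for $\lambda^*$-consistency and the threshold constraints, and reduce to its emptiness via Theorem~\ref{thm_spe}. The complexity accounting you give matches the paper's.
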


\bibliography{ARPE}

\begin{thebibliography}{10}

\bibitem{DBLP:conf/fossacs/AlurDMW09}
Rajeev Alur, Aldric Degorre, Oded Maler, and Gera Weiss.
\newblock On omega-languages defined by mean-payoff conditions.
\newblock In Luca de~Alfaro, editor, {\em Foundations of Software Science and
  Computational Structures, 12th International Conference, {FOSSACS} 2009, Held
  as Part of the Joint European Conferences on Theory and Practice of Software,
  {ETAPS} 2009, York, UK, March 22-29, 2009. Proceedings}, volume 5504 of {\em
  Lecture Notes in Computer Science}, pages 333--347. Springer, 2009.
\newblock \href {https://doi.org/10.1007/978-3-642-00596-1\_24}
  {\path{doi:10.1007/978-3-642-00596-1\_24}}.

\bibitem{DBLP:conf/lata/BrenguierCHPRRS16}
Romain Brenguier, Lorenzo Clemente, Paul Hunter, Guillermo~A. P{\'{e}}rez,
  Mickael Randour, Jean{-}Fran{\c{c}}ois Raskin, Ocan Sankur, and Mathieu
  Sassolas.
\newblock Non-zero sum games for reactive synthesis.
\newblock In {\em Language and Automata Theory and Applications - 10th
  International Conference, {LATA} 2016, Prague, Czech Republic, March 14-18,
  2016, Proceedings}, volume 9618 of {\em Lecture Notes in Computer Science},
  pages 3--23. Springer, 2016.

\bibitem{DBLP:conf/concur/BrihayeBGRB19}
Thomas Brihaye, V{\'{e}}ronique Bruy{\`{e}}re, Aline Goeminne,
  Jean{-}Fran{\c{c}}ois Raskin, and Marie van~den Bogaard.
\newblock The complexity of subgame perfect equilibria in quantitative
  reachability games.
\newblock In {\em {CONCUR}}, volume 140 of {\em LIPIcs}, pages 13:1--13:16.
  Schloss Dagstuhl - Leibniz-Zentrum f{\"{u}}r Informatik, 2019.

\bibitem{DBLP:conf/csl/BrihayeBMR15}
Thomas Brihaye, V{\'{e}}ronique Bruy{\`{e}}re, No{\'{e}}mie Meunier, and
  Jean{-}Fran{\c{c}}ois Raskin.
\newblock Weak subgame perfect equilibria and their application to quantitative
  reachability.
\newblock In {\em 24th {EACSL} Annual Conference on Computer Science Logic,
  {CSL} 2015, September 7-10, 2015, Berlin, Germany}, volume~41 of {\em
  LIPIcs}, pages 504--518. Schloss Dagstuhl - Leibniz-Zentrum f{\"{u}}r
  Informatik, 2015.

\bibitem{DBLP:conf/lfcs/BrihayePS13}
Thomas Brihaye, Julie {De Pril}, and Sven Schewe.
\newblock Multiplayer cost games with simple nash equilibria.
\newblock In {\em Logical Foundations of Computer Science, International
  Symposium, {LFCS} 2013, San Diego, CA, USA, January 6-8, 2013. Proceedings},
  volume 7734 of {\em Lecture Notes in Computer Science}, pages 59--73.
  Springer, 2013.

\bibitem{DBLP:conf/dlt/Bruyere17}
V{\'{e}}ronique Bruy{\`{e}}re.
\newblock Computer aided synthesis: {A} game-theoretic approach.
\newblock In {\em Developments in Language Theory - 21st International
  Conference, {DLT} 2017, Li{\`{e}}ge, Belgium, August 7-11, 2017,
  Proceedings}, volume 10396 of {\em Lecture Notes in Computer Science}, pages
  3--35. Springer, 2017.

\bibitem{DBLP:conf/csl/BruyereMR14}
V{\'{e}}ronique Bruy{\`{e}}re, No{\'{e}}mie Meunier, and Jean{-}Fran{\c{c}}ois
  Raskin.
\newblock Secure equilibria in weighted games.
\newblock In {\em {CSL-LICS}}, pages 26:1--26:26. {ACM}, 2014.

\bibitem{DBLP:journals/corr/Bruyere0PR16}
V{\'{e}}ronique Bruy{\`{e}}re, St{\'{e}}phane~Le Roux, Arno Pauly, and
  Jean{-}Fran{\c{c}}ois Raskin.
\newblock On the existence of weak subgame perfect equilibria.
\newblock {\em CoRR}, abs/1612.01402, 2016.

\bibitem{DBLP:conf/fossacs/Bruyere0PR17}
V{\'{e}}ronique Bruy{\`{e}}re, St{\'{e}}phane~Le Roux, Arno Pauly, and
  Jean{-}Fran{\c{c}}ois Raskin.
\newblock On the existence of weak subgame perfect equilibria.
\newblock In {\em Foundations of Software Science and Computation Structures -
  20th International Conference, {FOSSACS} 2017, Held as Part of the European
  Joint Conferences on Theory and Practice of Software, {ETAPS} 2017, Uppsala,
  Sweden, April 22-29, 2017, Proceedings}, volume 10203 of {\em Lecture Notes
  in Computer Science}, pages 145--161, 2017.

\bibitem{DBLP:conf/concur/ChatterjeeDEHR10}
Krishnendu Chatterjee, Laurent Doyen, Herbert Edelsbrunner, Thomas~A.
  Henzinger, and Philippe Rannou.
\newblock Mean-payoff automaton expressions.
\newblock In Paul Gastin and Fran{\c{c}}ois Laroussinie, editors, {\em {CONCUR}
  2010 - Concurrency Theory, 21th International Conference, {CONCUR} 2010,
  Paris, France, August 31-September 3, 2010. Proceedings}, volume 6269 of {\em
  Lecture Notes in Computer Science}, pages 269--283. Springer, 2010.

\bibitem{DBLP:journals/iandc/ChatterjeeHP10}
Krishnendu Chatterjee, Thomas~A. Henzinger, and Nir Piterman.
\newblock Strategy logic.
\newblock {\em Inf. Comput.}, 208(6):677--693, 2010.
\newblock \href {https://doi.org/10.1016/j.ic.2009.07.004}
  {\path{doi:10.1016/j.ic.2009.07.004}}.

\bibitem{DBLP:journals/mor/FleschKMSSV10}
J{\'{a}}nos Flesch, Jeroen Kuipers, Ayala Mashiah{-}Yaakovi, Gijs Schoenmakers,
  Eilon Solan, and Koos Vrieze.
\newblock Perfect-information games with lower-semicontinuous payoffs.
\newblock {\em Math. Oper. Res.}, 35(4):742--755, 2010.

\bibitem{DBLP:journals/mor/FleschP17}
J{\'{a}}nos Flesch and Arkadi Predtetchinski.
\newblock A characterization of subgame-perfect equilibrium plays in borel
  games of perfect information.
\newblock {\em Math. Oper. Res.}, 42(4):1162--1179, 2017.
\newblock \href {https://doi.org/10.1287/moor.2016.0843}
  {\path{doi:10.1287/moor.2016.0843}}.

\bibitem{DBLP:conf/icalp/Kopczynski06}
Eryk Kopczynski.
\newblock Half-positional determinacy of infinite games.
\newblock In Michele Bugliesi, Bart Preneel, Vladimiro Sassone, and Ingo
  Wegener, editors, {\em Automata, Languages and Programming, 33rd
  International Colloquium, {ICALP} 2006, Venice, Italy, July 10-14, 2006,
  Proceedings, Part {II}}, volume 4052 of {\em Lecture Notes in Computer
  Science}, pages 336--347. Springer, 2006.

\bibitem{DBLP:journals/amai/KupfermanPV16}
Orna Kupferman, Giuseppe Perelli, and Moshe~Y. Vardi.
\newblock Synthesis with rational environments.
\newblock {\em Ann. Math. Artif. Intell.}, 78(1):3--20, 2016.
\newblock \href {https://doi.org/10.1007/s10472-016-9508-8}
  {\path{doi:10.1007/s10472-016-9508-8}}.

\bibitem{BorelDeterminacy}
Donald~A. Martin.
\newblock Borel determinacy.
\newblock {\em Annals of Mathematics}, pages 363--371, 1975.

\bibitem{thesis_noemie}
No\'{e}mie Meunier.
\newblock {\em Multi-Player Quantitative Games: Equilibria and Algorithms}.
\newblock PhD thesis, Universit\'{e} de Mons, 2016.

\bibitem{Osborne}
{Martin J.} Osborne.
\newblock {\em An introduction to game theory}.
\newblock Oxford Univ. Press, 2004.

\bibitem{solan2003deterministic}
Eilon Solan and Nicolas Vieille.
\newblock Deterministic multi-player dynkin games.
\newblock {\em Journal of Mathematical Economics}, 39(8):911--929, 2003.

\bibitem{DBLP:conf/fsttcs/Ummels06}
Michael Ummels.
\newblock Rational behaviour and strategy construction in infinite multiplayer
  games.
\newblock In {\em {FSTTCS} 2006: Foundations of Software Technology and
  Theoretical Computer Science, 26th International Conference, Kolkata, India,
  December 13-15, 2006, Proceedings}, volume 4337 of {\em Lecture Notes in
  Computer Science}, pages 212--223. Springer, 2006.

\bibitem{DBLP:conf/fossacs/Ummels08}
Michael Ummels.
\newblock The complexity of nash equilibria in infinite multiplayer games.
\newblock In Roberto~M. Amadio, editor, {\em Foundations of Software Science
  and Computational Structures, 11th International Conference, {FOSSACS} 2008,
  Held as Part of the Joint European Conferences on Theory and Practice of
  Software, {ETAPS} 2008, Budapest, Hungary, March 29 - April 6, 2008.
  Proceedings}, volume 4962 of {\em Lecture Notes in Computer Science}, pages
  20--34. Springer, 2008.
\newblock \href {https://doi.org/10.1007/978-3-540-78499-9\_3}
  {\path{doi:10.1007/978-3-540-78499-9\_3}}.

\bibitem{DBLP:journals/iandc/VelnerC0HRR15}
Yaron Velner, Krishnendu Chatterjee, Laurent Doyen, Thomas~A. Henzinger,
  Alexander~Moshe Rabinovich, and Jean{-}Fran{\c{c}}ois Raskin.
\newblock The complexity of multi-mean-payoff and multi-energy games.
\newblock {\em Inf. Comput.}, 241:177--196, 2015.

\bibitem{vieille:hal-00464953}
Nicolas Vieille and Eilon Solan.
\newblock {Deterministic multi-player Dynkin games}.
\newblock {\em {Journal of Mathematical Economics}}, Vol.39,num. 8:pp.911--929,
  November 2003.
\newblock URL: \url{https://hal-hec.archives-ouvertes.fr/hal-00464953}, \href
  {https://doi.org/10.1016/S0304-4068(03)00021-1}
  {\path{doi:10.1016/S0304-4068(03)00021-1}}.

\end{thebibliography}

\newpage


\appendix

The following appendices are providing the detailed proofs of all our results. They are not necessary to understand our results and are meant to provide full formalization and rigorous proofs. They also provide further intuitions through additional examples for the interested reader. To improve readability, we have chosen to recall the statements that appeared in the main body of the paper before giving their detailed proofs in order to ease the work of the reader.

	\section{Proof of Theorem \ref{thm_ne}} \label{pf_ne}

\noindent \textbf{Theorem~\ref{thm_ne}.} \emph{Let $G$ be a game with steady negotiation. Then, a play $\rho$ in $G$ is an NE play if and only if $\rho$ is $\nego(\lambda_0)$-consistent.}

\begin{proof} 
\begin{itemize}
	\item Let $\bsigma$ be a Nash equilibrium in $G_{\|v_0}$, for some state $v_0$, and let $\rho = \< \bsigma \>_{v_0}$ : let us prove that the play $\rho$ is $\nego(\lambda_0)$-consistent.
	
	Let $k \in \N$, let $i \in \Pi$ be such that $\rho_k \in V_i$, and let us prove that $\mu_i\left(\rho_k \rho_{k+1} \dots\right) \geq \nego(\lambda_0)(\rho_k)$.
	
	For any deviation $\sigma'_i$ of $\sigma_{i\|\rho_0 \dots \rho_k}$, by definition of NEs, $\mu_i\left(\< \bsigma_{-i\|\rho_0 \dots \rho_k}, \sigma'_i \>_{\rho_k}\right) \leq \mu_i(\rho)$. Therefore:
	$$\mu_i(\rho) \geq \sup_{\sigma'_i} ~ \mu_i\left(\< \bsigma_{-i\|\rho_0 \dots \rho_k}, \sigma'_i \>_{\rho_k}\right)$$
	hence:
	$$\mu_i(\rho) \geq \inf_{\btau_{-i}} ~\sup_{\tau_i} ~ \mu_i\left(\< \btau_{-i\|\rho_0 \dots \rho_k}, \tau_i \>_{\rho_k}\right)$$
	i.e.:
	$$\mu_i(\rho) \geq \nego(\lambda_0)(\rho_k).$$

	\item Let $\rho$ be a $\nego(\lambda_0)$-consistent play from a state $v_0$. Let us define a strategy profile $\bsigma$ such that $\< \bsigma \>_{v_0} = \rho$, by:
	
	\begin{itemize}
		\item $\< \bsigma \>_{v_0} = \rho$;
		
		\item for all histories of the form $\rho_0 \dots \rho_k v$ with $v \neq \rho_{k+1}$, let $i$ be the player controlling $\rho_k$.
		
		Since the game $G$ is with steady negotiation, the infimum:
		$$\inf_{\btau_{-i} \in \lambda_0\Rat(\rho_k)}~ \sup_{\tau_i}~ \mu_i(\< \btau \>_{\rho_k})$$
		is a minimum. Let $\btau^k_{-i}$ be $\lambda_0$-rational strategy profile from $\rho_k$ realizing that minimum, and let $\tau^k_i$ be some strategy from $\rho_k$ such that $\tau^k_i(\rho_k) = v$. Then, we define:
		$$\< \bsigma_{\|\rho_0 \dots \rho_k v} \>_v = \< \btau^k_{\rho_k v} \>_v;$$
		
		\item for every other history $h$, $\bsigma(h)$ is defined arbitrarily.
	\end{itemize}

	Let us prove that $\bsigma$ is an NE: let $\sigma'_i$ be a deviation of $\sigma_i$, let $\rho' = \< \bsigma_{-i}, \sigma'_i \>_{v_0}$ and let $\rho_0 \dots \rho_k$ be the longest common prefix of $\rho$ and $\rho'$. Let $v = \rho'_{k+1}$.
	
	Then, we have:
	$$\mu_i(\rho') \leq \sup_{\tau_i^k} ~\mu_i\left(\< \btau^k \>_{\rho_k}\right) = \nego(\lambda_0)(\rho_k),$$
	and since $\rho$ is $\lambda_0$-consistent, $\nego(\lambda_0)(\rho_k) \leq \mu_i(\rho)$, hence $\mu_i(\rho') \leq \mu_i(\rho)$.
\end{itemize}
\end{proof}

		\section{Proof of Lemma \ref{lm_least_delta_fixed point}} \label{pf_least_delta_fixed point}

\noindent \textbf{Lemma~\ref{lm_least_delta_fixed point}.} \emph{Let $G$ be a game, and let $\epsilon \geq 0$. The negotiation function has a least $\epsilon$-fixed point.}

\begin{proof}
The following proof is a generalization of a classical proof of Tarski's fixed point theorem.

Let $\Lambda$ be the set of the $\epsilon$-fixed points of the negotiation function. The set $\Lambda$ is not empty, since it contains at least the requirement $v \mapsto +\infty$. Let $\lambda^*$ be the requirement defined by:
$$\lambda^*: v \mapsto \inf_{\lambda \in \Lambda} \lambda(v).$$

For every $\epsilon$-fixed point $\lambda$ of the negotiation function, we have then for each $v$, $\lambda^*(v) \leq \lambda(v)$, and $\nego(\lambda^*)(v) \leq \nego(\lambda)(v)$ since $\nego$ is monotone; and therefore, $\nego(\lambda^*)(v) \leq \lambda(v) + \epsilon$.

As a consequence, we have:
$$\nego(\lambda^*)(v) \leq \inf_{\lambda \in \Lambda} \lambda(v) + \epsilon = \lambda^*(v) + \epsilon.$$
The requirement $\lambda^*$ is an $\epsilon$-fixed point of the negotiation function, and is therefore the least $\epsilon$-fixed point of the negotiation function.
\end{proof}

    \section{Proof of Theorem~\ref{thm_spe}} \label{pf_spe}

\noindent \textbf{Theorem~\ref{thm_spe}.}   \emph{Let $G_{\|v_0}$ be an initialized prefix-independent game, and let $\epsilon \geq 0$.
Let $\lambda^*$ be the least $\epsilon$-fixed point of the negotiation function.
Let $\xi$ be a play starting in $v_0$.
If there exists an $\epsilon$-SPE $\bsigma$ such that $\< \bsigma \>_{v_0} = \xi$, then $\xi$ is $\lambda^*$-consistent.
The converse is true if the game $G$ is with steady negotiation.}

\begin{proof}
    First, let us recall that $\lambda^*$ exists by Lemma \ref{lm_least_delta_fixed point}.
    
    Then, our proof can be decomposed in two lemmas:
    
    \begin{lm} \label{lm_spe_lambda_cons}
    	Let $G_{\|v_0}$ be a well-initialized prefix-independent game, and let $\epsilon \geq 0$.
    	Let $\bsigma$ be an $\epsilon$-SPE in $G_{\|v_0}$.
    	Then, there exists an $\epsilon$-fixed point $\lambda$ of the negotiation function such that for every history $hv$ starting in $v_0$, the play $\< \bsigma_{\|hv} \>_v$ is $\lambda$-consistent.
    \end{lm}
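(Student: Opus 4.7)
My plan is to exhibit the following explicit candidate $\lambda$: for each vertex $v$, letting $j$ be the player owning $v$, set
\[
  \lambda(v) \;=\; \inf\bigl\{\, \mu_j\bigl(\<\bsigma_{\|hv}\>_v\bigr) \;\big|\; hv \in \Hist G_{\|v_0} \,\bigr\}.
\]
Well-initialization guarantees that the set on the right is nonempty for every $v$, so $\lambda$ is well-defined as a function $V \to \bR$; the extreme values $\pm\infty$ cause no difficulties in the sequel.

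I first check that every continuation $\rho = \<\bsigma_{\|hv}\>_v$ is $\lambda$-consistent. For any index $n$ with $\rho_n = u \in V_j$, the suffix $\rho_n\rho_{n+1}\dots$ coincides literally, as an infinite word, with $\<\bsigma_{\|hv\rho_1\dots\rho_{n-1}u}\>_u$, which is one of the continuation plays appearing under the infimum defining $\lambda(u)$. Hence $\mu_j(\rho_n\rho_{n+1}\dots) \ge \lambda(u)$, exactly the requirement of $\lambda$-consistency at position $n$.

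It remains to prove that $\lambda$ is an $\epsilon$-fixed point of $\nego$. The inequality $\nego(\lambda)(v) \ge \lambda(v) \ge \lambda(v) - \epsilon$ is free from the non-decreasing property $\lambda \le \nego(\lambda)$ already recorded just after the definition of $\nego$. For the matching upper bound, fix $v \in V_i$ and any history $hv$. The previous paragraph, applied to all extensions of $hv$, shows that the substrategy profile $\bsigma_{-i\|hv}$ is $\lambda$-rational in $G_{\|v}$, witnessed by $\sigma_{i\|hv}$, and so belongs to $\lRat(v)$. The $\epsilon$-SPE hypothesis invoked at $hv$ then gives
\[
  \sup_{\tau_i}\, \mu_i\bigl(\<\bsigma_{-i\|hv},\tau_i\>_v\bigr) \;\le\; \mu_i\bigl(\<\bsigma_{\|hv}\>_v\bigr) + \epsilon,
\]
from which $\nego(\lambda)(v) \le \mu_i(\<\bsigma_{\|hv}\>_v) + \epsilon$. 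Taking the infimum over $hv$ on the right yields $\nego(\lambda)(v) \le \lambda(v) + \epsilon$, finishing the argument.

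I do not expect any substantial obstacle: once the right candidate $\lambda$ is written down, the entire proof is a direct unfolding of the definitions together with one use of the $\epsilon$-SPE inequality and the already-established non-decreasing property of $\nego$. The only conceptual point requiring some care is the realization that $\lambda$ must be built from the actual continuation-payoffs of $\bsigma$ rather than from a worst-case value, and the only fiddly routine check is the associativity-style identification between $(\bsigma_{\|hv})_{\|h'w}$ and the substrategy of $\bsigma$ after the concatenated history in $G_{\|v_0}$, which is where prefix-independence of $\mu$ enters implicitly.
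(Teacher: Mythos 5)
Your proof is correct and follows essentially the same route as the paper's: the same candidate $\lambda(v)=\inf_{hv}\mu_j(\<\bsigma_{\|hv}\>_v)$, the same use of well-initialization for nonemptiness, the same observation that each $\bsigma_{-i\|hv}$ is $\lambda$-rational, and the same single application of the $\epsilon$-SPE inequality (the paper merely phrases the final step as a proof by contradiction rather than a direct infimum computation).
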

    
    \begin{proof}
        Let us define the requirement $\lambda$ by, for each $i \in \Pi$ and $v \in V_i$:
        $$\lambda(v) = \inf_{hv \in \Hist G_{\|v_0}} \mu_i(\< \bsigma_{\|hv} \>_v).$$
        Note that the set $\left\{ \mu_i(\< \bsigma_{\|hv} \>_v) ~|~ hv \in \Hist G_{\|v_0} \right\}$ is never empty, since the game $G_{\|v_0}$ is well-initialized.
        
        Then, for every history $hv$ starting in $v_0$, the play $\< \bsigma_{\|hv} \>_v$ is $\lambda$-consistent. Let us prove that $\lambda$ is an $\epsilon$-fixed point of $\nego$: let $i \in \Pi$, let $v \in V_i$, and let us assume towards contradiction (since the negotiation function is non-decreasing) that $\nego(\lambda)(v) > \lambda(v) + \epsilon$, that is to say:
        $$\inf_{\btau_{-i} \in \lRat(v)} ~ \sup_{\tau_i} ~ \mu_i(\< \btau \>_v) > \inf_{hv \in \Hist G_{\|v_0}} \mu_i(\< \bsigma_{\|hv} \>_v) + \epsilon.$$
        
        Then, since all the plays generated by the strategy profile $\bsigma$ are $\lambda$-consistent, and therefore since any strategy profile of the form $\bsigma_{-i\|hv}$ is $\lambda$-rational, we have:
        $$\inf_{hv} ~ \sup_{\tau_i} ~ \mu_i(\< \bsigma_{-i\|hv}, \tau_i \>_v) > \inf_{hv} ~ \mu_i(\< \bsigma_{\|hv} \>_v) + \epsilon.$$
        
        Therefore, there exists a history $hv$ such that:
        $$\sup_{\tau_i} ~ \mu_i(\< \bsigma_{-i\|hv}, \tau_i \>_v) > \mu_i(\< \bsigma_{\|hv} \>_v) + \epsilon,$$
        which is impossible if the strategy profile $\bsigma$ is an $\epsilon$-SPE. Therefore, there is no such $v$, and the requirement $\lambda$ is an $\epsilon$-fixed point of the negotiation function.
    \end{proof}

    \begin{lm} \label{lm_existence_spe}
    	Let $G_{\|v_0}$ be a well-initialized prefix-independent game with steady negotiation, and $\epsilon \geq 0$.
    	Let $\lambda$ be an $\epsilon$-fixed point of the function $\nego$.
    	Then, for every $\lambda$-consistent play $\xi$ starting in $v_0$, there exists an $\epsilon$-SPE $\bsigma$ such that $\< \bsigma \>_{v_0} = \xi$.
    \end{lm}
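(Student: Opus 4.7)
The plan is to construct an explicit $\epsilon$-SPE $\bsigma$ with outcome $\xi$ via a trigger-strategy scheme, and then to verify directly that it enjoys the $\epsilon$-SPE property in every subgame.

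\textbf{Setup and construction.} Using the steady negotiation hypothesis, for every player $i$ and every vertex $v \in V_i$ I will pick a $\lambda$-rational profile $\btau^v_{-i}$ that attains the infimum defining $\nego(\lambda)(v)$, together with a witness strategy $\tau^v_i$ for its $\lambda$-rationality; write $\btau^v = (\btau^v_{-i}, \tau^v_i)$. The $\epsilon$-fixed point property of $\lambda$ then yields the punishment bound $\sup_{\sigma_i} \mu_i(\<\btau^v_{-i}, \sigma_i\>_v) \leq \lambda(v) + \epsilon$, and the witness property ensures that $\<\btau^v_{\|hw}\>_w$ is $\lambda$-consistent for every history $hw$ compatible with $\btau^v_{-i}$. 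I define $\bsigma$ by associating, to each $\bsigma$-compatible history $h$, a current plan $\pi(h)$: a $\lambda$-consistent play starting at the last vertex of $h$. Initially $\pi(v_0) = \xi$; when the play extends along $\pi(h)$ the plan is shifted; when the controller $j$ of the last vertex $v$ of $h$ chooses some $w$ different from the plan, I reset $\pi(h\cdot w) := \<\btau^v_{\|vw}\>_w$ (which is $\lambda$-consistent by the witness property). The strategy profile $\bsigma$ is then the one that, at any compatible history ending at $v$, prescribes the next vertex of $\pi(h)$ to the controller of $v$. Since no deviation occurs when all players follow $\bsigma$, the plan is never reset and $\<\bsigma\>_{v_0} = \xi$.

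\textbf{Verification of the $\epsilon$-SPE property.} Fix a history $h'$ ending at some vertex $v'$, a player $j$, and a deviation $\sigma'_j$; set $\rho = \<\bsigma_{-j\|h'}, \sigma'_j\>_{v'}$ and $\bar\pi = \pi(h')$. If $\sigma'_j$ triggers no deviation then $\rho = \bar\pi$ and there is no gain. Otherwise, let $(v_{k_\ell})_\ell$ be the successive $j$-deviation vertices along $\rho$, and let $\pi_\ell = \<\btau^{v_{k_\ell}}_{\|v_{k_\ell}\rho_{k_\ell+1}}\>_{\rho_{k_\ell+1}}$ be the plan installed right after the $\ell$-th deviation. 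Three bounds chain together: the $\lambda$-consistency of $\bar\pi$ at $v_{k_1}$ gives $\lambda(v_{k_1}) \leq \mu_j(\bar\pi)$; the punishment bound on $\btau^{v_{k_\ell}}$ gives $\mu_j(\pi_\ell) \leq \lambda(v_{k_\ell}) + \epsilon$; and the $\lambda$-consistency of $\pi_\ell$ at $v_{k_{\ell+1}}$ gives $\lambda(v_{k_{\ell+1}}) \leq \mu_j(\pi_\ell)$. Combined with the prefix-independence of $\mu_j$, which lets me identify $\mu_j(\rho)$ with the asymptotic behavior of $\rho$'s tail, these yield $\mu_j(\rho) \leq \mu_j(\bar\pi) + \epsilon$ as required.

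\textbf{Main obstacle.} The delicate case is when $\sigma'_j$ triggers infinitely many deviations: a naive telescoping of $\lambda(v_{k_{\ell+1}}) \leq \lambda(v_{k_\ell}) + \epsilon$ would compound an additive $\ell\epsilon$ error and fail to bound $\mu_j(\rho)$ uniformly. I expect the resolution to combine the liminf form of the mean-payoff with the finiteness of $V$: some deviation vertex $v^*$ must be visited infinitely often along $\rho$, which pins a subsequence of partial averages of $\rho$ to a limit at most $\lambda(v^*) + \epsilon$, while the $\lambda$-consistency of every intermediate plan keeps the relevant $\lambda$-values uniformly bounded above by $\mu_j(\bar\pi) + \epsilon$. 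The finite-deviation case is by contrast immediate, since after the last deviation $\rho$ coincides with the corresponding plan $\pi_*$ and the single-step bound suffices.
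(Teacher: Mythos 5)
Your construction has a genuine gap, and it sits exactly at the point you flag as the ``main obstacle'' --- but the problem is worse than you describe, and your sketched resolution does not close it. Because you reset the punishing profile to a \emph{fresh} profile $\btau^{v_{k_\ell}}$ after each deviation, the only bounds available are $\mu_j(\pi_\ell)\leq\lambda(v_{k_\ell})+\epsilon$ and $\lambda(v_{k_{\ell+1}})\leq\mu_j(\pi_\ell)$, which chain to $\lambda(v_{k_\ell})\leq\mu_j(\bar\pi)+(\ell-1)\epsilon$. So even the \emph{finite}-deviation case is not ``immediate'': after $m$ deviations you only obtain $\mu_j(\rho)\leq\mu_j(\bar\pi)+m\epsilon$, not $+\epsilon$. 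In the infinite-deviation case the situation is worse still: the play $\rho$ follows each punishing profile only for a finite stretch, whereas the guarantee $\sup_{\sigma_j}\mu_j(\<\btau^{v}_{-j},\sigma_j\>_v)\leq\lambda(v)+\epsilon$ constrains the payoff of \emph{infinite} plays compatible with $\btau^{v}_{-j}$ and says nothing about finite prefixes of such plays. A deviator can therefore ``surf'' the favourable initial segments of successive punishments, and the recurrence of some deviation vertex $v^*$ does not pin any subsequence of partial averages to $\lambda(v^*)+\epsilon$; note also that the lemma is stated for arbitrary prefix-independent games with steady negotiation, where ``partial averages'' are not even defined.

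The paper closes this gap with a genuinely different construction. From steady negotiation it first derives \emph{subgame-steady} negotiation: for each $v$ and each player $i$ there is a single $\lambda$-rational profile $\btau^{v*}_{-i}$ that is worst for player $i$ in \emph{every} subgame reachable under it. It is assembled from the profiles $\btau^{w}$, but a reset to a fresh $\btau^{w}$ occurs \emph{only when the value} $\inf_{\btau_{-i}\in\lRat(w)}\sup_{\tau_i}\mu_i(\<\btau\>_w)$ \emph{strictly decreases}; since this value ranges over a set of at most $\card V$ elements, only finitely many resets occur along any play, so the infinite tail of any play against $\btau^{v*}_{-i}$ is compatible with one fixed $\btau^{w}_{-i}$ and inherits its bound in full. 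In the verification, only the \emph{first} deviation of a given player then matters: all of that player's subsequent deviations stay inside the same $\btau^{v*}$, whose subgame-worst property bounds $\mu_i(\rho')$ by $\nego(\lambda)(w')\leq\lambda(w')+\epsilon$ at the last point of divergence, so the $\epsilon$ is paid exactly once. A minor further omission in your setup: you must first dispose of the case where $\lambda(v)=+\infty$ for some $v$ (there no minimizing $\btau^{v}_{-i}$ exists), by showing that $+\infty$ propagates back to $v_0$ and renders the statement vacuous.
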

    
    \begin{proof}
 \begin{itemize}
	\item \emph{Particular case: if there exists $v$ such that $\lambda(v) = +\infty$.}
	
	In that case, for each $u$ such that $uv \in E$, if the player controlling $u$ chooses to go to $v$, no $\lambda$-consistent play can be proposed to them from there, hence there is no $\lambda$-rational strategy profile against that player from $u$, and $\nego(\lambda)(u) = +\infty$. Since $\epsilon$ is finite and since $\lambda$ is an $\epsilon$-fixed point of the negotiation function, it follows that $\lambda(u) = +\infty$. Since $G_{\|v_0}$ is well-initialized, we can repeat this argument and show that $\lambda(v_0) = +\infty$; in that case, there is no $\lambda$-consistent play $\xi$ from $u$, and then the proof is done.
	
	Therefore, for the rest of the proof, we assume that for all $v$, we have $\lambda(v) \neq +\infty$. As a consequence, since $\lambda$ is an $\epsilon$-fixed point of the function $\nego$, for all $v$, we have $\nego(\lambda)(v) \neq +\infty$; and so finally, for each such $v$, there exists a $\lambda$-consistent play starting from $v$.

	\item \emph{Preliminary result: a game with steady negotiation is also with subgame-steady negotiation.}
	
	Recall that since $G$ is a game with steady negotiation, for every requirement $\lambda$, for every player $i$ and for every state $v$, there exists a $\lambda$-rational strategy profile $\btau^v$ such that:
	$$\sup_{\tau^v_i} ~ \mu_i(\< \btau^v \>_v) = \inf_{\btau_{-i} \in \lRat(v)} ~\sup_{\tau_i} ~\mu_i(\< \btau \>_v)$$
	i.e. there exists a worst $\lambda$-rational strategy profile against player $i$ from the state $v$, with regards to player $i$'s payoff.
	
	Our goal in this part of the proof is to show that $G$ is then also with \emph{subgame-steady negotiation}, that is to say, for every requirement $\lambda$, for every player $i$ and for every state $v$, there exists a $\lambda$-rational strategy profile $\btau^{v*}_{-i}$ such that for every history $hw$ starting from $v$ compatible with $\btau^{v*}_{-i}$, we have:
	$$\sup_{\tau^{v*}_i} ~ \mu_i(\< \btau^{v*}_{\|hw} \>_w) = \inf_{\btau_{-i} \in \lRat(w)} ~ \sup_{\tau_i} ~ \mu_i(\< \btau \>_w),$$
	i.e. there exists a $\lambda$-rational strategy profile against player $i$ from the state $v$, that is the worst with regards to player $i$'s payoff in any subgame, in other words a \emph{subgame-worst} strategy profile.
	
	Let us construct inductively the strategy profile $\btau^{v*}_{-i}$ and the strategy $\tau^{v*}_i$ assuming which it is $\lambda$-rational. We define them only on histories that are compatible with $\btau^{v*}_{-i}$, since they can be defined arbitrarily on any other histories. We proceed by assembling the strategy profiles of the form $\btau^w$, and the histories after which we follow a new $\btau^w$ will be called the \emph{resets} of $\btau^{v*}_{-i}$.
	
	\begin{itemize}
		\item First, $\< \btau^{v*} \>_v = \< \btau^v \>_v$: the one-state history $v$ is then the first reset of $\btau^{v*}_{-i}$;
		
		\item then, for every history $hw$ from $v$ such that $h$ is compatible with $\btau^{v*}_{-i}$ and ends in $V_i$, and such that $w \neq \tau^{v*}_i(h)$: let us write $hw = h'uh''$ so that $h'u$ is the longest reset of $\btau^{v*}_{-i}$ among the prefixes of $h$, and therefore so that the strategy profile $\btau^{v*}_{\|h'u}$ has been defined as equal to $\btau^u$ over the prefixes of $h''$ until $w$. Then, we have:
		
		$$\sup_{\tau_i} \mu_i(\< \btau^w_{-i}, \tau_i \>_w) \leq \sup_{\tau_i} \mu_i(\< \btau^u_{-i\|uh''}, \tau_i \>_w)$$
		by prefix-independence of $G$ and since by its definition, the strategy profile $\btau^w_{-i}$ minimizes the quantity $\sup_{\tau_i} ~ \mu_i(\< \btau^w_{-i}, \tau_i \>_w)$. Let us separate two cases.
		
		\begin{itemize}
			\item Suppose first that:
			$$\sup_{\tau_i} \mu_i(\< \btau^w_{-i}, \tau_i \>_w) = \sup_{\tau_i} \mu_i(\< \btau^u_{-i\|uh''}, \tau_i \>_w).$$
			Then, $\< \btau^{v*}_{\|hw} \> = \< \btau^u_{\|uh''} \>_w$: the coalition of players against player $i$ keeps following their strategy profile so that player $i$ will have no more than the payoff they can ensure.
			
			\item Suppose now that:
			$$\sup_{\tau_i} \mu_i(\< \btau^w_{-i}, \tau_i \>_w) < \sup_{\tau_i} \mu_i(\< \btau^u_{-i\|uh''}, \tau_i \>_w).$$
			Then, $\< \btau^{v^*}_{\|hw} \> = \< \btau^w \>_w$: player $i$ has done something that lowers the payoff they can ensure, and therefore the other players have to update their strategy profile in order to enforce that new minimum.
			
			The history $hw$ is a reset of $\btau^{v*}_{-i}$.
		\end{itemize}
	
		All the plays constructed are $\lambda$-consistent, hence $\btau^{v*}_{-i}$ is indeed $\lambda$-rational assuming $\tau^{v*}_i$.
		
		Let us now prove that $\tau^{v*}_i$ is the subgame-worst $\lambda$-rational strategy profile against player $i$. Let $hw$ be a history starting in $v$ compatible with $\btau^{v*}_{-i}$, let $\tau'_i$ be a strategy from the state $w$, let $\eta = \< \btau^{v*}_{-i\|hw}, \tau'_i \>_w$ and let us prove that:
		$$\mu_i(\eta) \leq \inf_{\btau_{-i} \in \lRat(w)} ~\sup_{\tau_i} ~\mu_i(\< \btau \>_w).$$
		
		Let us consider the sequence $(\alpha_n)_{n \in \N}$, defined by:
		$$\alpha_n = \inf_{\btau_{-i} \in \lRat(\eta_n)} ~\sup_{\tau_i} ~\mu_i(\< \btau \>_{\eta_n}).$$
		
		That sequence is non-increasing. Indeed, for all $n$:
		
		\begin{itemize}
			\item If $\eta_n \in V_i$, then no action of player $i$ can improve the payoff player $i$ themself can secure against a $\lambda$-rational environment.
			
			\item If $\eta_n \not\in V_i$, then:
			$\eta_{n+1} = \btau^{v*}_{-i}(h \eta_0 \dots \eta_n) = \btau^{\eta_k}_{-i}(\eta_k \dots \eta_n)$
			for some $k$ such that, by construction of $\btau^{v*}_{-i}$, $\alpha_k = \dots = \alpha_n$. Since the strategy profile $\btau^{\eta_k}_{-i}$ is defined to realize the payoff $\alpha_k = \alpha_n$, we have $\alpha_{n+1} = \alpha_n$.
		\end{itemize}
		
		Moreover, that sequence can only take a finite number of values (at most $\card V$). Therefore, it is stationary: there exists $n_0 \in \N$ such that $(\alpha_n)_{n \geq n_0}$ is constant, and there are no resets of $\btau^{v*}_{-i}$ among the prefixes of $\eta$ of length greater than $n_0$.
		
		Therefore, if we choose $n_0$ minimal (i.e., $n_0$ is the index of the last reset in $\eta$), then the play $\eta_{n_0} \eta_{n_0+1} \dots$ is compatible with the strategy profile $\btau^{\eta_{n_0}}_{-i}$. Then, we have:
		$$\mu_i(\eta) \leq \alpha_{n_0} \leq \alpha_0,$$
		and:
		$$\alpha_0 = \inf_{\btau_{-i} \in \lRat(w)} ~\sup_{\tau_i} ~\mu_i(\< \btau \>_w),$$
		which proves that $\btau^{v*}$ is the subgame-worst $\lambda$-strategy profile against player $i$ from the state $w$, and therefore that the game $G$ is a game with subgame-steady negotiation.
	\end{itemize}

	\item \emph{Construction of $\bsigma$.}

	Let $\H_0 = \Hist G_{\|v_0}$. Let us construct inductively $\bsigma$ by defining all the plays $\< \bsigma_{\|hv} \>_v$, for $hv \in \H_0$, keeping the hypothesis that at any step $n$, the set $\H_n$ contains exactly the histories $hv$ such that the play $\< \bsigma_{\|hv} \>_v$ has been defined, and that such a play is always $\lambda$-consistent: it will define a $\lambda$-rational strategy profile, and we will then prove it is an $\epsilon$-SPE.

	\begin{itemize}
		\item First, $\< \bsigma \>_{v_0} = \xi$, which satisfies the induction hypothesis. We remove then all the finite prefixes of $\xi$ form $\H_0$ to obtain $\H_1$. Note that the only history of length $1$ has been removed.

		\item At the $n$-th step, with $n > 0$: let us choose $hv \in \H_n$ of minimal length, and therefore minimal for the prefix order: the strategy profile $\bsigma$ has been defined on all the strict prefixes of $hv$, but not on $hv$ itself, and $v \neq \bsigma(h)$. Let then $i$ be the player controlling the last state of $h$ (which exists since all the histories of $\H_n$ have length at least $2$). Let $\btau^{v*}_{-i}$ be a subgame-worst $\lambda$-rational strategy profile against player $i$ from $v$, whose existence has been proved in the previous point, and let $\tau^{v*}_i$ be a strategy assuming which it is $\lambda$-rational.
		
		Then, we define $\< \bsigma_{\|hv} \>_v = \< \btau^{v*} \>_v$, and inductively, for every history $h'w$ starting from $v$ and compatible with $\bsigma_{-i\|hv}$ as it has been defined so far, we define $\< \bsigma_{\|hh'w} \>_v = \< \btau^{v*}_{\|h'w} \>_w$. The strategy profile $\bsigma_{\|hv}$ is then equal to $\btau^{v*}$ on any history compatible with $\btau^{v*}_{-i}$.
		
		We remove all such histories from $\H_n$ to obtain $\H_{n+1}$. All the plays we built are $\lambda$-consistent, which was our induction hypothesis.
	\end{itemize}

	Since each step removes from $\H_n$ a history of minimal length, and since there are finitely many histories of any given length, we have $\bigcap_n \H_n = \emptyset$, and this process completely defines $\bsigma$.

	\item \emph{Such $\bsigma$ is an $\epsilon$-SPE.}
	
	Let $h^{(0)}w \in \Hist G_{\|v_0}$, let $i \in \Pi$, let $\sigma'_i$ be a deviation of $\sigma_i$. Let $\rho = h^{(0)} \< \bsigma_{\|h^{(0)}w} \>_w$ and let $\rho' = h^{(0)} \< \sigma'_{i\|h^{(0)}w}, \bsigma_{-i\|h^{(0)}w} \>_w$. We prove that $\mu_i(\rho') \leq \mu_i(\rho) + \epsilon$.
	
	If $\rho'$ is compatible with $\sigma_i$, then $\rho' = \rho$ and the proof is immediate. If it is not, we let $huv$ denote the shortest prefix of $\rho'$ such that $u \in V_i$ and $v \neq \sigma_i(hu)$. The transition $uv$ can be considered as the first deviation of player $i$, but note that $hu$ can be both longer or shorter than $h^{(0)}$: player $i$ may have already deviated in $h^{(0)}$.
	
	Be that as it may, the history $hu$ is a common prefix of the play $\rho$ and $\rho'$, and if $\btau^{v*}_{-i}$ denotes a subgame-worst strategy profile against player $i$ from the state $v$, $\lambda$-rational assuming a strategy $\tau^{v*}_i$, then $\bsigma_{\|huv}$ has been defined as equal to $\btau^{v*}$ on any history compatible with $\bsigma_{-i\|huv}$.
	
	\begin{itemize}
		\item If $huv$ is a prefix of $\rho$: let $huh'w'$ be the longest common prefix of $\rho$ and $\rho'$. Necessarily, $w' \in V_i$. Then, by definition of $\btau^{v*}_{-i}$, we have:
		$$\mu_i(\rho') \leq \inf_{\btau_{-i} \in \lRat(w')}~ \sup_{\tau_i} ~\mu_i(\< \btau \>_{w'}) = \nego(\lambda)(w'),$$
		and since $\lambda$ is an $\epsilon$-fixed point of $\nego$:
		$$\mu_i(\rho') \leq \lambda(w') + \epsilon.$$
		
		On the other hand, the play $\< \bsigma_{\|h'w'} \>_{w'}$, which is a suffix of $\rho$, is $\lambda$-consistent, hence $\mu_i(\rho) \geq \lambda(w')$.
		
		Therefore, $\mu_i(\rho') \leq \mu_i(\rho) + \epsilon$.

		\item If $huv$ is not a prefix of $\rho$: then, $\rho = h \< \bsigma_{\|hu} \>_u$. Since $u \in V_i$, we have:
		$$\nego(\lambda)(u) = \sup_{uv' \in E} ~ \inf_{\btau_{-i} \in \lRat(v')} ~ \sup_{\tau_i} ~\mu_i(\< \btau \>_{v'}).$$
		
		In particular, we have:
		$$\inf_{\btau_{-i} \in \lRat(v)} ~ \sup_{\tau_i} ~\mu_i(\< \btau \>_v) \leq \nego(\lambda)(u) \leq \lambda(u) + \epsilon.$$
		
		Then, for the same reason as above, we know that:
		$$\mu_i(\rho') \leq \inf_{\btau_{-i} \in \lRat(v)} ~ \sup_{\tau_i} ~\mu_i(\< \btau \>_v).$$
		
		Finally, since the suffix $\< \bsigma_{\|hu} \>_u$ of $\rho$ is $\lambda$-consistent, we have $\mu_i(\rho) \geq \lambda(u) \geq \nego(\lambda)(u) - \epsilon \geq \mu_i(\rho')$.
	\end{itemize}
	
	The strategy profile $\bsigma$ is an $\epsilon$-SPE.
\end{itemize}
\end{proof}	
    
    If $\bsigma$ is an $\epsilon$-SPE, then by Lemma \ref{lm_spe_lambda_cons}, there exists an $\epsilon$-fixed point $\lambda$ of the negotiation function such that all the plays generated by $\bsigma$ after some history are $\lambda$-consistent; in particular, the play $\xi$ is $\lambda$-consistent, and therefore $\lambda^*$-consistent since $\lambda^* \leq \lambda$.
    
    Conversely, if the game $G$ is with steady negotiation, and if the play $\xi$ is $\lambda^*$-consistent, then by Lemma \ref{lm_existence_spe}, there exists an $\epsilon$-SPE $\bsigma$ such that $\< \bsigma \>_{v_0} = \xi$.
\end{proof}

		\section{Abstract negotiation game} \label{app_abstract}

\begin{defi}[Abstract negotiation game]
	Let $G_{\|v_0}$ be an initialized game, let $i \in \Pi$, and let $\lambda$ be a requirement on $G$. The \emph{abstract negotiation game} of $G_{\|v_0}$ for player $i$ with requirement $\lambda$ is the two-player zero-sum initialized game:
	
	$$\Abs_{\lambda i}(G)_{\|[v_0]} = \left( \{\P, \C\}, S, (S_{\P}, S_{\C}), \Delta, \nu\right)_{\|[v_0]},$$
	
	where:
	
	\begin{itemize}
		\item $\P$ denotes the player \emph{Prover} and $\C$ the player \emph{Challenger};
		
		\item the states of $S_\C$ are written $[\rho]$, where $\rho$ is a $\lambda$-consistent play in $G$;
		
		\item the states of $S_\P$ are written $[hv]$, where $hv$ is a history in $G$, with $h \in \Hist_i(G)$, or $[v]$ with $v \in V$, plus two additional states $\top$ and $\bot$;
		
		\item the set $\Delta$ contains the transitions of the form:
		
		\begin{itemize}
			\item $[hv][v\rho]$, where $[hv] \in S_\P$ and $[v\rho] \in S_{\C}$ (Prover proposes a play);
			
			\item $[\rho] [\rho_0...\rho_n v]$, where $[\rho] \in S_{\C}, n \in \N,	\rho_n \in V_i$, and $v \neq \rho_{n+1}$ (Challenger makes player $i$ deviate);
			
			\item $[\rho] \top$, where $[\rho] \in S_{\C}$ (Challenger accepts the proposed play);
			
			\item $\top \top$ (the game is over);
			
			\item $[hv] \bot$ (Prover has no more play to propose);
			
			\item $\bot \bot$ (the game is over).
		\end{itemize}

		\item $\nu$ is the outcome function defined by, for all $\rho^{(0)}, \rho^{(1)}, \dots, h^{(1)} v_1, h^{(2)} v_2, \dots, k, H$:
	
		$$\begin{matrix}
			& \nu_\C \left( [v_0] \left[\rho^{(0)}\right] \left[h^{(1)}v_1\right] \left[\rho^{(1)}\right] \dots \left[h^{(k)} v_k\right] \left[\rho^{(k)}\right] \top^\omega \right) \\[1mm]
			= & \mu_i\left(h^{(1)} \dots h^{(k)} \rho^{(k)}\right), \\[2mm]
			
			& \nu_\C \left( [v_0] \left[\rho^{(0)}\right] \left[h^{(1)}v_1\right] \left[\rho^{(1)}\right] \dots \left[h^{(n)} v_n\right] \left[\rho^{(n)}\right] \dots \right) \\[1mm]
			= & \mu_i\left(h^{(1)} h^{(2)} \dots\right), \\[2mm]
			
			& \nu_\C\left(H \bot^\omega \right) = +\infty,
		\end{matrix}$$
		
		and by $\nu_\P = -\nu_\C$.
	\end{itemize}
\end{defi}

\begin{pptn}
	Let $G_{\|v_0}$ be an initialized Borel game, let $\lambda$ be a requirement on $G$ and let $i \in \Pi$.
	Then, the corresponding abstract negotiation game satisfies: 
	$$\inf_{\tau_\P} ~\sup_{\tau_\C} ~\nu_\C(\< \btau \>_{[v_0]}) = \inf_{\bsigma_{-i} \in \lRat(v_0)} ~ \sup_{\sigma_i} ~ \mu_i\left(\< \bsigma \>_{v_0}\right).$$
\end{pptn}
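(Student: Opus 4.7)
The plan is to prove both inequalities by setting up a tight correspondence between $\lambda$-rational profiles in $G$ and Prover strategies in the abstract game that never choose the transition to $\bot$.

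For the inequality $\inf_{\tau_\P} \sup_{\tau_\C} \nu_\C \le \inf_{\bsigma_{-i} \in \lRat(v_0)} \sup_{\sigma_i} \mu_i$, I would fix $\bsigma_{-i} \in \lRat(v_0)$ witnessed by some $\sigma_i^*$, and define $\tau_\P$ by, at any Prover state $[hv]$ where $h$ is compatible with $\bsigma_{-i}$, proposing the $\lambda$-consistent play $\<\bsigma_{-i\|hv}, \sigma^*_{i\|hv}\>_v$. An induction on the length of an abstract-game history shows that every Prover state reached against $\tau_\P$ has this form, since Challenger's deviations issue from player-$i$ vertices and therefore extend a history compatible with $\bsigma_{-i}$ into another such history. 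Given any Challenger strategy $\tau_\C$, the play $\<\tau_\P, \tau_\C\>_{[v_0]}$ projects naturally to a $G$-play $\rho$: either by concatenating the segments $h^{(1)}\cdots h^{(k)}\rho^{(k)}$ when acceptance occurs after $k$ deviations, or by concatenating $h^{(1)}h^{(2)}\cdots$ when deviations are infinite. The definition of $\nu_\C$ is tailored so that in both regimes $\nu_\C(\<\tau_\P, \tau_\C\>_{[v_0]}) = \mu_i(\rho)$, and $\rho = \<\bsigma_{-i}, \sigma_i\>_{v_0}$ for a player-$i$ strategy $\sigma_i$ agreeing with $\sigma_i^*$ outside the deviation points. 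Hence $\sup_{\tau_\C} \nu_\C \le \sup_{\sigma_i} \mu_i$, and the inequality follows after taking the two infima.

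For the reverse inequality, I would fix any Prover strategy $\tau_\P$; if some Challenger response leads $\tau_\P$ to play $\bot$, then $\sup_{\tau_\C} \nu_\C = +\infty$ and the bound is trivial, so we may assume $\tau_\P$ always proposes a $\lambda$-consistent play. Then I would define $\bsigma_{-i}$ and a witness strategy $\sigma_i^*$ by simulating the abstract game: at any history $hv$ compatible with $\bsigma_{-i}$, decompose it into the sequence of past player-$i$ deviations to recover a ``current proposed play'' $\rho^{(\ell)}$ prescribed by $\tau_\P$, and have both $\bsigma_{-i}$ and $\sigma_i^*$ follow $\rho^{(\ell)}$ until the next deviation. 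Since each $\rho^{(\ell)}$ is $\lambda$-consistent and $\lambda$-consistency is preserved under suffixes (the defining inequalities restrict directly), $\bsigma_{-i}$ is $\lambda$-rational assuming $\sigma_i^*$; and, symmetrically to the previous direction, each $\sigma_i$ determines a Challenger strategy $\tau_\C$ satisfying $\mu_i(\<\bsigma_{-i}, \sigma_i\>_{v_0}) = \nu_\C(\<\tau_\P, \tau_\C\>_{[v_0]})$, so $\sup_{\sigma_i} \mu_i \le \sup_{\tau_\C} \nu_\C$ for this $\bsigma_{-i}$.

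The main obstacle will be the bookkeeping that identifies abstract-game plays with $G$-plays: one must check that the bijection (modulo the choice of acceptance time in the finite-deviation case) is well-defined and outcome-preserving, and that Prover's arbitrary moves on unreachable Prover states do not perturb it. The subtle point is uniformly handling the two clauses of $\nu_\C$ (finite vs.\ infinite deviations), which collapse into the single formula ``$\nu_\C$ equals $\mu_i$ of the projected $G$-play'' on all relevant plays. No property of $G$ beyond Borel measurability is needed, which is consistent with the statement not assuming prefix-independence.
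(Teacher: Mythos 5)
Your proposal is correct and follows essentially the same route as the paper's proof: the paper establishes, for each threshold $\alpha$, the equivalence between Prover strategies keeping $\nu_\C$ below $\alpha$ and $\lambda$-rational profiles keeping $\mu_i$ below $\alpha$, using exactly your two constructions (a Prover strategy that proposes $\<\bsigma_{\|hv}\>_v$ at each reachable Prover state, and conversely a profile $\bsigma_{-i}$ obtained by decomposing each compatible history into deviation segments and following the currently proposed play). Your direct comparison of the two $\inf\text{-}\sup$ quantities is just a reformulation of that threshold argument, and your handling of $\bot$ and of the two clauses of $\nu_\C$ matches the paper's.
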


\begin{proof}
Let $\alpha \in \R$, and let us prove that the following statements are equivalent:

\begin{enumerate}
	\item \label{case1_pf_abs} there exists a strategy $\tau_\P$ such that for every strategy $\tau_\C$, $\nu_\C(\< \btau \>_{[v_0]}) < \alpha$;
	
	\item \label{case2_pf_abs} there exists a $\lambda$-rational strategy profile $\bsigma_{-i}$ in the game $G_{\|v_0}$ such that for every strategy $\sigma_i$, we have $\mu_i\left( \< \bsigma \>_{v_0} \right) < \alpha$.
\end{enumerate}

\begin{itemize}
	\item \emph{(\ref{case1_pf_abs}) implies (\ref{case2_pf_abs}).}
	
	Let $\tau_\P$ be such that for every strategy $\tau_\C$, $\nu_\C(\< \btau \>_{[v_0]}) < \alpha$.
	
	In what follows, any history $h$ compatible with an already defined strategy profile $\bsigma_{-i}$ in $G_{\|v_0}$ will be decomposed in:
	$$h = v_0 h^{(0)} v_1 h^{(1)} \dots h^{(n-1)} v_n h^{(n)},$$
	so that there exist plays $\rho^{(0)}, \dots, \rho^{(n-1)}, \eta$ and a history:
	$$[v_0] \left[\rho^{(0)}\right] \left[v_1 h^{(1)} v_2\right] \dots \left[v_{n-1}h^{(n-1)}v_n\right] \left[v_nh^{(n)}\eta\right]$$
	in the game $\Abs_{\lambda i}(G)$ compatible with $\tau_{\P}$: the existence and the unicity of that decomposition can be proved by induction. Intuitively, the history $h$ is cut in histories which are prefixes of plays that can be proposed by Prover.
	
	Then, let us define inductively the strategy profile $\bsigma_{-i}$ by, for every $h$ such that $\bsigma_{-i}$ has been defined on the prefixes of $h$, and such that the last state of $h$ is not controlled by player $i$, $\bsigma_{-i}(h) = \eta_0$ with $\eta$ defined from $h$ as higher. Let us prove that $\bsigma_{-i}$ is the desired strategy profile.
	
	\begin{itemize}
		\item \emph{The strategy profile $\bsigma_{-i}$ is $\lambda$-rational.}
		
		Let us define $\sigma_i$ so that for every history $hv$ compatible with $\bsigma_{-i}$, the play $\< \bsigma_{\|hv} \>_v$ is $\lambda$-consistent.
		
		For any history:
		$$h = v_0 h^{(0)} v_1 h^{(1)} \dots h^{(n-1)} v_n h^{(n)}$$
		compatible with $\bsigma_{-i}$ and ending in $V_i$, let $\sigma_i(h) = \eta_0$ with $\eta$ corresponding to the decomposition of $h$, so that by induction:
		$$\< \bsigma_{\|v_0 h^{(0)} v_1 h^{(1)} \dots h^{(n-1)} v_n} \>_{v_n} = v_n h^{(n)} \eta.$$
		
		Let now $hv$ be a history in $G_{\|v_0}$, and let us show that the play $\< \bsigma_{\|hv} \>_v$ is $\lambda$-consistent. If we decompose:
		$$hv = v_0 h^{(0)} v_1 h^{(1)} \dots h^{(n-1)} v_n h^{(n)}$$
		with the same definition of $\eta$ (note that the vertex $v$ is now included in the decomposition), then $\< \bsigma_{\|hv} \>_v = v\eta$, and by definition of the abstract negotiation game, $v_n h^{(n)} \eta$ is a $\lambda$-consistent play, and therefore so is $v \eta$.

		\item \emph{The strategy profile $\bsigma_{-i}$ keeps player $i$'s payoff under the value $\alpha$.}
		
		Let $\sigma_i$ be a strategy for player $i$, and let $\rho = \< \bsigma \>_{v_0}$. We want to prove that $\mu_i(\rho) < \alpha$.
		
		Let us define two finite or infinite sequences $\left( \rho^{(k)} \right)_{k \in K}$ and $\left( h^{(k)} v_k \right)_{k \in K}$, where $K = \{1, \dots, n\}$ or $K = \N \setminus \{0\}$, by for every $k \in K$:
		
		$$\left[ \rho^{(k)} \right] = \tau_{\P} \left( [v_0] \left[ \rho^{(0)} \right] \dots \left[ \rho^{(k-1)} \right] \left[ h^{(k)} v_k \right] \right)$$
		and so that for every $k$, the history $h^{(k)} v_k$ is the shortest prefix of $\rho$ that is not a prefix of $h^{(1)} \dots h^{(k-1)} \rho^{(k-1)}$ (or equivalently, the history $h^{(k)}$ is the longest common prefix of $\rho$ and $h^{(1)} \dots h^{(k-1)} \rho^{(k-1)}$).
		
		Then, the length of the longest common prefix of $h^{(1)} \dots h^{(k-1)} \rho^{(k)}$ and $\rho$ increases with $k$, and the set $K$ is finite if and only if there exists $n$ such that $h^{(1)} \dots h^{(n-1)} \rho^{(n)} = \rho$.
		
		In the infinite case, let:
		$$\chi = [v_0] \left[ \rho^{(0)} \right] \left[ h^{(1)} v_1 \right] \dots \left[ \rho^{(k)} \right] \left[ h^{(k)} v_k \right] \dots.$$
		The play $\chi$ is compatible with $\tau_{\P}$, hence $\nu_\C(\chi) < \alpha$, that is to say:
		$$\mu_i\left(h^{(1)} h^{(2)} \dots\right) < \alpha,$$
		ie. $\mu_i(\rho) < \alpha$.
		
		In the finite case, let:
		$$\chi = [v_0] \left[ \rho^{(0)} \right] \left[ h^{(1)} v_1 \right] \dots \left[ \rho^{(n)} \right] \top^\omega.$$
		For the same reason, $\nu_\C(\chi) < \alpha$, that is to say $\mu_i \left( h^{(1)} \dots h^{(n)} \rho^{(n)} \right) = \mu_i(\rho) < \alpha$.
	\end{itemize}
	
	\item \emph{(\ref{case2_pf_abs}) implies (\ref{case1_pf_abs}).}
	
	Let $\bsigma_{-i}$ be a strategy profile keeping player $i$'s payoff below $\alpha$, $\lambda$-rational assuming a strategy $\sigma_i$.
	Let us define a strategy $\tau_{\P}$ for Prover in the abstract negotiation game.
	
	Let $H = [v_0] \left[\rho^{(0)}\right] \left[h^{(1)}v_1\right] \left[\rho^{(1)}\right] \dots \left[h^{(n)} v_n\right]$ be a history in the abstract game, ending in $S_\P$. Then, we define:
	$$\tau_{\P}(H) = \left[ \< \bsigma_{\|h^{(1)} \dots h^{(n)}v_n} \>_{v_n} \right].$$
	
	If $H$ is a history ending in $\top$, then $\tau_\P(H) = \top$, and in the same way if $H$ ends in $\bot$, then $\tau_\P(H) = \bot$.
	
	Let us show that $\tau_{\P}$ is the strategy we were looking for. Let $\chi$ be a play compatible with $\tau_{\P}$, and let us note that the state $\bot$ does not appear in $\chi$. Then, the play $\chi$ can only have two forms:
	
	\begin{itemize}
		\item If $\chi = [v_0] \left[\rho^{(0)}\right] \left[h^{(1)}v_1\right]  \dots \left[\rho^{(n)}\right] \top^\omega$, then we have:
		$$\rho^{(n)} = \< \bsigma_{\|h^{(1)} \dots h^{(n)} v_n} \>_{v_n},$$
		and the history $h^{(1)} \dots h^{(n)} v_n$ in the game $G_{\|v_0}$ is compatible with $\bsigma_{-i}$. By hypothesis, we have:
		$$\mu_i\left(h^{(1)} \dots h^{(n)} \rho^{(n)}\right) < \alpha,$$
		hence $\nu_\C(\chi) < \alpha$.
		
		\item If $\chi = [v_0] \left[\rho^{(0)}\right]  \dots \left[h^{(n)} v_n\right] \left[\rho^{(n)}\right] \dots$, then the play $\rho = h^{(1)} h^{(2)} \dots$ is compatible with $\bsigma_{-i}$, and by hypothesis $\mu_i(\rho) < \alpha$, hence $\nu_\C(\chi) < \alpha$.
	\end{itemize}
\end{itemize}
\end{proof}

\begin{rek}
    We have proven the equality:
    $$\inf_{\tau_\P} ~\sup_{\tau_\C} ~\nu_\C(\< \btau \>_{\|v_0}) = \inf_{\bsigma_{-i} \in \lRat(v_0)} ~ \sup_{\sigma_i} ~ \mu_i\left(\< \bsigma \>_{v_0}\right).$$
    To be absolutely rigorous, the left member can be written $\val_\C\left( \Abs_{\lambda i}(G)_{\|[v_0]} \right)$ only if we prove that the abstract negotiation game is determined.
    That will be a consequence of its equivalence with the corresponding concrete negotiation game, which is Borel and therefore determined.
\end{rek}

		\section{Proof of Theorem \ref{thm_concrete_game}} \label{pf_concrete_game}

\noindent \textbf{Theorem~\ref{thm_concrete_game}.} \emph{Let $G_{\|v_0}$ be an initialized mean-payoff game. Let $\lambda$ be a requirement and $i$ a player.
Then, we have:}
$$\val_\C\left(\Conc_{\lambda i}(G)_{\|s_0}\right) = \inf_{\bsigma_{-i} \in \lRat(v_0)} ~ \sup_{\sigma_i} ~ \mu_i(\< \bsigma \>_{v_0}).$$

\begin{proof}
First, let us define:
$$A = \left\{\left. \sup_{\sigma_i} ~\mu_i(\< \bsigma \>_{v_0}) ~\right|~ \bsigma_{-i} \in \lambda\Rat(v_0) \right\}$$
and:
$$B = \left\{\left. \sup_{\tau_\C} ~\nu_\C(\< \btau \>_{s_0}) ~\right|~ \tau_\P \right\} \setminus \{+\infty\}.$$

We prove our point if we prove that $A = B$.

\begin{itemize}
	\item \emph{$B \subseteq A$.}
	
	Let $\tau_\P$ be a strategy such that:
	$$\sup_{\tau_\C} ~\nu_\C(\< \btau \>_{s_0}) < +\infty,$$
	and let $\bsigma$ be the strategy profile defined by:
	$$\bsigma(\dH) = w$$
	for every history $H$ compatible with $\tau_\P$ (by induction, the localized projection is injective on the histories compatible with $\tau_\P$) with $\tau_\P(H) = (vw, \cdot)$, and arbitrarily defined on any other histories.
	
	\begin{itemize}
		\item \emph{The strategy profile $\bsigma_{-i}$ is $\lambda$-rational}, assuming the strategy $\sigma_i$. Indeed, let us assume it is not.
		
		Then, there exists a history $h = h_0 \dots h_n$ in $G_{\|v_0}$ compatible with $\bsigma_{-i}$ such that the play $\drho = \< \bsigma_{\|h} \>_{h_n}$ is not $\lambda$-consistent. Then, let:
		$$Hs = \left(h_0, M_0\right) \left(h_0\bsigma(h_0), M_0\right) \dots \left(h_n, M_n\right)$$
		be the only history in $\Conc_{\lambda i}(G)_{\|s_0}$ compatible with $\tau_\P$ such that $\dH = h$.
		
		Let $\tau_\C$ be a strategy constructing the history $h$, defined by:
		$$\tau_\C\left(H_0 \dots H_{2k-1}\right) = H_{2k}$$
		for every $k$, and:
		$$\tau_\C\left(H' (vw, M)\right) = (w, M \cup \{w\})$$
		for any other history $H' (vw, M)$.
		
		Then, the play $\eta = \< \btau \>_{s_0}$ contains finitely many deviations (Challenger stops the deviations after having drawn the history $h$), and the play $\deta = h_0 \dots h_{n-1} \drho$ is not $\lambda$-consistent, i.e. there exists a dimension $j \in \Pi$ such that:
		$$\mu_j(\deta) - \max_{v \in M_n \cap V_j} \lambda(v) < 0$$
		i.e.:
		$$\hmu_j(\eta) < 0$$
		and therefore $\nu_\C(\rho) = \nu_\C(\eta) = +\infty$, which is false by hypothesis.

		\item Now, let us prove the equality:
		$$\sup_{\sigma'_i} ~\mu_i(\< \bsigma_{-i}, \sigma'_i \>_{v_0}) = \sup_{\tau_\C} ~\nu_\C(\< \btau \>_{s_0}).$$
		
		For that purpose, let us prove the equality of sets:
		$$\left\{ \mu_i(\< \bsigma_{-i}, \sigma'_i \>_{v_0}) ~|~ \sigma'_i \right\} = \left\{ \nu_\C(\< \btau \>_{s_0}) ~|~ \tau_\C \right\}.$$
		
		\begin{itemize}
			\item Let $\tau_\C$ be a strategy for Challenger, and let $\rho = \< \btau \>_{s_0}$. Since $\nu_\C(\rho) \neq + \infty$ by hypothesis, we have $\nu_\C(\rho) = \hmu_\star(\rho) = \mu_i(\drho)$, which is an element of the left-hand set.

			\item Conversely, if $\sigma'_i$ is a strategy for player $i$ and if $\eta = \< \bsigma_{-i}, \sigma'_i \>_{v_0}$, let $\tau_\C$ be a strategy such that for every $k$:
			$$\tau_\C\left((\eta_0, \cdot)(\eta_0 \cdot, \cdot) \dots (\eta_k \cdot, \cdot) = (\eta_{k+1}, \cdot)\right),$$
			i.e. a strategy forcing $\eta$.
			
			Then, since $\nu_\C(\rho) \neq +\infty$ by hypothesis on $\tau_\P$, we have $\mu_i(\eta) = \nu_\C(\rho)$, which is an element of the right-hand set.
		\end{itemize}
	\end{itemize}

	\item \emph{$A \subseteq B$.}
	
	Let $\bsigma_{-i}$ be a $\lambda$-rational strategy profile from $v_0$, assuming the strategy $\sigma_i$; let us define a strategy $\tau_\P$ by, for every history $H$ and for every $v \in V$:
	$$\tau_\P(H(v, \cdot)) = \left(v\bsigma(\dH v), \cdot\right).$$
	
	Let us prove the equality:
	$$\sup_{\sigma'_i} ~\mu_i(\< \bsigma_{-i}, \sigma'_i \>_{v_0}) = \sup_{\tau_\C} ~\nu_\C(\< \btau \>_{s_0}).$$
	
	For that purpose, let us prove the equality of sets:
	$$\left\{ \mu_i(\< \bsigma_{-i}, \sigma'_i \>_{v_0}) ~|~ \sigma'_i \right\} = \left\{ \nu_\C(\< \btau \>_{s_0}) ~|~ \tau_\C \right\}.$$
	
	\begin{itemize}
		\item Let $\tau_\C$ be a strategy for Challenger, and let $\rho = \< \btau \>_{s_0}$.
		
		If $\nu_\C(\rho) = +\infty$, then $\drho$ is compatible with $\bsigma$ and not $\lambda$-consistent after finitely many steps, which is impossible.
		
		Therefore, $\nu_\C(\< \btau \>_{s_0}) \neq + \infty$, and as a consequence we have $\nu_\C(\rho) = \hmu_\star(\rho) = \mu_i(\drho)$, which is an element of the left-hand set.

		\item Conversely, if $\sigma'_i$ is a strategy for player $i$ and if $\eta = \< \bsigma_{-i}, \sigma'_i \>_{v_0}$, let $\tau_\C$ be a strategy such that for all $k$:
		$$\tau_\C\left((\eta_0, \cdot)(\eta_0 \cdot, \cdot) \dots (\eta_k \cdot, \cdot)\right) = (\eta_{k+1}, \cdot),$$
		i.e. a strategy forcing $\eta$.
		
		Then, either $\nu_\C(\rho) = +\infty$, and therefore $\eta$ is not $\lambda$-consistent, and is compatible with $\bsigma$ after finitely many steps, which is impossible.
		
		Or, $\mu_i(\eta) = \nu_\C(\rho)$, which is an element of the right-hand set.
	\end{itemize}
\end{itemize}
\end{proof}

    \section{An example of concrete negotiation game} \label{ex_concrete_game}

Let us consider again the game from Example~\ref{ex_sans_spe}.

Figure~\ref{fig_concrete} represents the game $\Conc_{\lambda_1 \playcircle}(G)$ (with $\lambda_1(a) = 1$ and $\lambda_1(b) = 2$), where the dashed states are controlled by Challenger, and the other ones by Prover.

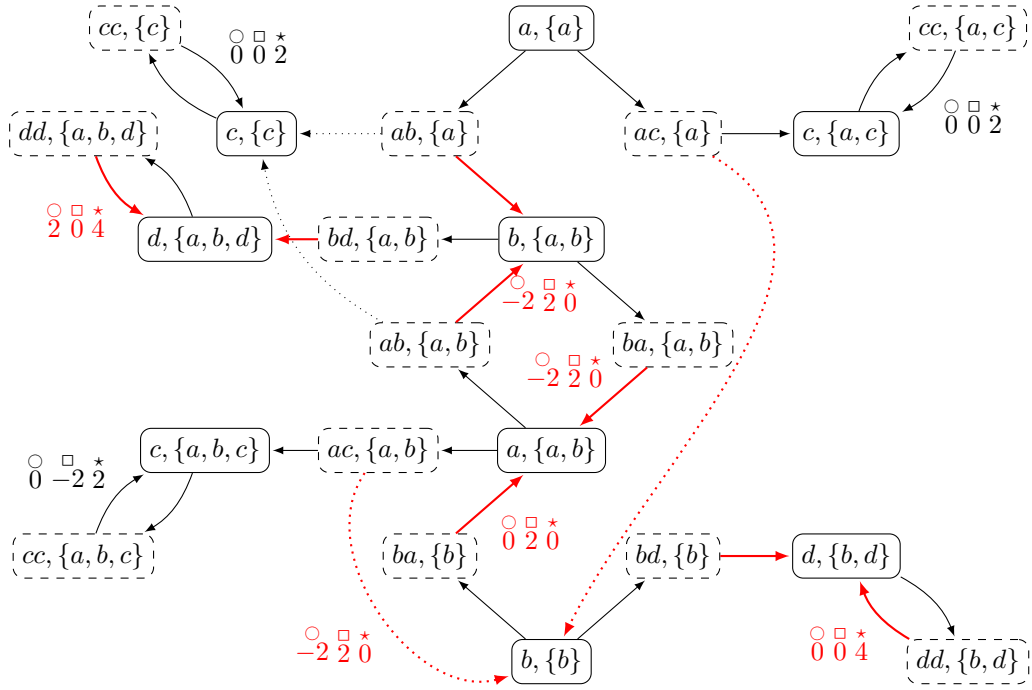
\begin{figure} 
\begin{center}
	\begin{tikzpicture}[->, >=latex,shorten >=1pt, initial text={}, scale=0.4, every node/.style={scale=1}]
	\newcommand{\deltaX}{4}
	\newcommand{\deltaY}{3.5}
	\newcommand{\DeltaX}{5.7}
	\newcommand{\DeltaY}{4.5}
	
	\node[draw, rounded corners] (b-ab) at (0, \deltaY) {$b, \{a, b\}$};
	\node[draw, rounded corners, dashed] (ba-ab) at (\deltaX, 0) {$ba, \{a, b\}$};
	\node[draw, rounded corners] (a-ab) at (0, -\deltaY) {$a, \{a, b\}$};
	\node[draw, rounded corners, dashed] (ab-ab) at (-\deltaX, 0) {$ab, \{a, b\}$};
	
	\node[draw, rounded corners] (a-a) at (0, 3*\deltaY) {$a, \{a\}$};
	\node[draw, rounded corners, dashed] (ab-a) at (-\deltaX, 2*\deltaY) {$ab, \{a\}$};
	\node[draw, rounded corners] (c-c) at (-\deltaX - \DeltaX, 2*\deltaY) {$c, \{c\}$};
	\node[draw, rounded corners, dashed] (cc-c) at (-2*\deltaX - \DeltaX, 3*\deltaY) {$cc, \{c\}$};
	\node[draw, rounded corners, dashed] (ac-a) at (\deltaX, 2*\deltaY) {$ac, \{a\}$};
	\node[draw, rounded corners] (c-ac) at (\deltaX + \DeltaX, 2*\deltaY) {$c, \{a, c\}$};
	\node[draw, rounded corners, dashed] (cc-ac) at (2*\deltaX + \DeltaX, 3*\deltaY) {$cc, \{a, c\}$};
	
	\node[draw, rounded corners] (b-b) at (0, -3*\deltaY) {$b, \{b\}$};
	\node[draw, rounded corners, dashed] (ba-b) at (-\deltaX, -2*\deltaY) {$ba, \{b\}$};
	\node[draw, rounded corners, dashed] (bd-b) at (\deltaX, -2*\deltaY) {$bd, \{b\}$};
	\node[draw, rounded corners] (d-bd) at (\deltaX + \DeltaX, -2*\deltaY) {$d, \{b, d\}$};
	\node[draw, rounded corners, dashed] (dd-bd) at (2*\deltaX + \DeltaX, -3*\deltaY) {$dd, \{b, d\}$};
	
	\node[draw, rounded corners, dashed] (ac-ab) at (-\DeltaX, -\deltaY) {$ac, \{a, b\}$};
	\node[draw, rounded corners] (c-abc) at (-2*\DeltaX, -\deltaY) {$c, \{a, b, c\}$};
	\node[draw, rounded corners, dashed] (cc-abc) at (-2*\DeltaX - \deltaX, -2*\deltaY) {$cc, \{a, b, c\}$};
	
	\node[draw, rounded corners, dashed] (bd-ab) at (-\DeltaX, \deltaY) {$bd, \{a, b\}$};
	\node[draw, rounded corners] (d-abd) at (-2*\DeltaX, \deltaY) {$d, \{a, b, d\}$};
	\node[draw, rounded corners, dashed] (dd-abd) at (-2*\DeltaX - \deltaX, 2*\deltaY) {$dd, \{a, b, d\}$};

	\path (b-ab) edge (ba-ab);
	\path[thick, red] (ba-ab) edge node[above left] {$\scriptsize{\stackrel{\playcircle}{-2} ~ \stackrel{\Box}{2} ~ \stackrel{\star}{0}}$} (a-ab);
	\path (a-ab) edge (ab-ab);
	\path (ab-ab) edge[thick, red] node[right] {$\scriptsize{\stackrel{\playcircle}{-2}~\stackrel{\Box}{2} ~ \stackrel{\star}{0}}$} (b-ab);
	
	\path (a-a) edge (ab-a);
	\path[dotted] (ab-a) edge (c-c);
	\path[bend left=20] (c-c) edge (cc-c);
	\path[bend left=20] (cc-c) edge node[above right] {$\scriptsize{\stackrel{\playcircle}{0}~\stackrel{\Box}{0} ~ \stackrel{\star}{2}}$} (c-c);
	\path (a-a) edge (ac-a);
	\path (ac-a) edge (c-ac);
	\path[bend left=20] (c-ac) edge (cc-ac);
	\path[bend left=20] (cc-ac) edge node[below right] {$\scriptsize{\stackrel{\playcircle}{0}~\stackrel{\Box}{0} ~ \stackrel{\star}{2}}$} (c-ac);
	
	\path (b-b) edge (ba-b);
	\path (b-b) edge (bd-b);
	\path[thick, red] (bd-b) edge (d-bd);
	\path[bend left=20] (d-bd) edge (dd-bd);
	\path[bend left=20, thick, red] (dd-bd) edge node[below left] {$\scriptsize{
			\stackrel{\playcircle}{0}~\stackrel{\Box}{0} ~ \stackrel{\star}{4}}$} (d-bd);
	
	\path (a-ab) edge (ac-ab);
	\path (ac-ab) edge (c-abc);
	\path[bend left=20] (c-abc) edge (cc-abc);
	\path[bend left=20] (cc-abc) edge node[above left] {$\scriptsize{
			\stackrel{\playcircle}{0}~\stackrel{\Box}{-2} ~ \stackrel{\star}{2}
			}$} (c-abc);
	
	\path (b-ab) edge (bd-ab);
	\path[thick, red] (bd-ab) edge (d-abd);
	\path[bend right=20] (d-abd) edge (dd-abd);
	\path[bend right=20, thick, red] (dd-abd) edge node[below left] {$\scriptsize{
			\stackrel{\playcircle}{2}~\stackrel{\Box}{0} ~ \stackrel{\star}{4}}$} (d-abd);
	
	\path[out=-30, in=64, dotted, thick, red] (ac-a) edge (b-b);
	\path[dotted, bend left=25] (ab-ab) edge (c-c);
	\path[thick, red] (ab-a) edge (b-ab);
	\path[thick, red] (ba-b) edge node[below right] {$\scriptsize{
			\stackrel{\playcircle}{0}~\stackrel{\Box}{2} ~ \stackrel{\star}{0}
		}$} (a-ab);
	\path[dotted, bend right=70, thick, red] (ac-ab) edge node[below left] {$\scriptsize{
			\stackrel{\playcircle}{-2}~\stackrel{\Box}{2} ~ \stackrel{\star}{0}
		}$} (b-b);
	\end{tikzpicture}
\end{center}
\caption{A concrete negotiation game} \label{fig_concrete}
\end{figure}
The dotted arrows indicate the deviations, and the transitions that are not labelled are either zero for the three coordinates, or meaningless since they cannot be used more than once.

The red arrows indicate a (memoryless) optimal strategy for Challenger. Against that strategy, the lowest outcome Prover can ensure is $2$.

Therefore, $\nego(\lambda_1)(v_0) = 2$, in line with the abstract game in Example~\ref{ex_abstract_game}.

	\section{Proof of Lemma \ref{lm_memoryless}} \label{pf_memoryless}

\noindent \textbf{Lemma~\ref{lm_memoryless}.} \emph{Let $G_{\|v_0}$ be an initialized mean-payoff game, let $i$ be a player, let $\lambda$ be a requirement and let $\Conc_{\lambda i}(G)_{\|s_0}$ be the corresponding concrete negotiation game. There exists a memoryless strategy $\tau_\C$ that is optimal for Challenger, i.e. such that:}
$$\underset{\tau_\P}{\inf} ~\nu_\C(\< \btau \>_{s_0}) = \val_\C\left( \Conc_{\lambda i}(G)_{\|s_0} \right).$$

\begin{proof}
The structure of that proof is inspired from the proof of Lemma~14 in \cite{DBLP:journals/iandc/VelnerC0HRR15}.

Let $\alpha \in \R$, and let $\Phi$ be the set of the plays $\rho$ in $\Conc_{\lambda i}(G)$ such that:

\begin{itemize}
	\item $\underset{n \to \infty}{\liminf} \frac{1}{n} \underset{k=0}{\overset{n-1}{\sum}} \left(- \hpi_\star(\rho_k\rho_{k+1})\right) \geq -\alpha$;
	
	\item and either:
	
	\begin{itemize}
		\item $\rho$ contains infinitely many deviations;
		
		\item or for each $j \in \Pi$, $\hmu_j(\rho) \geq 0$.
	\end{itemize}
\end{itemize}

Note that the set of the plays $\rho$ such that $\nu_\C(\rho) \leq \alpha$ could be defined almost the same way, but with a limit superior instead of the limit inferior.

By \cite{DBLP:conf/icalp/Kopczynski06}, if Challenger can falsify the objective $\Phi$, he can falsify it with a memoryless strategy, if $\Phi$ is \emph{prefix-independent} and \emph{convex}.

\emph{Convex} objectives are defined as follows: the objective $\Phi$ is convex if for all $\rho, \eta \in \Phi$ and for any decomposition:
$$\rho_0 \dots \rho_{k_1} \dots \rho_{k_2} \dots$$
and:
$$\eta_0 \dots \eta_{\l_1} \dots \eta_{\l_2} \dots$$
with $\eta_{\l_p} = \rho_{k_q}$ for all $p, q \in \N$, we have:
$$\chi = \rho_0 \dots \rho_{k_1} \eta_1 \dots \eta_{\l_1} \rho_{k_1+1} \dots \rho_{k_2} \eta_{\l_1+1} \dots \in \Phi.$$
Let then be such two plays and decomposition, and let us prove that $\chi \in \Phi$.

Let us write $\Phi = \Psi \cap (\mathrm{X} \cup \Xi)$, where:

\begin{itemize}
	\item $\Psi$ is the set of the plays $\rho$ such that:
	$$\underset{n \to \infty}{\liminf} \frac{1}{n} \underset{k=0}{\overset{n-1}{\sum}} \left(- \hpi_\star(\rho_k\rho_{k+1})\right) \geq -\alpha;$$
	
	\item $\mathrm{X}$ is the set of the plays containing infinitely many deviations;
	
	\item $\Xi$ is the set of the plays $\rho$ such that for each $j \in \Pi$, $\hmu_j(\rho) \geq 0$.
\end{itemize}

As shown in \cite{DBLP:journals/iandc/VelnerC0HRR15}, a mean-payoff objective defined with a limit inferior is convex: therefore, we can already say that $\chi \in \Psi$. Let us now prove that $\chi \in \mathrm{X} \cup \Xi$.

\begin{itemize}
	\item \emph{If $\rho \in \mathrm{X}$ or $\eta \in \mathrm{X}$.}
	
	Then, $\chi$ contains the deviations of $\rho$ and $\eta$, hence $\chi \in \mathrm{X}$.

	\item \emph{If $\rho, \eta \in \Xi$.}
	
	Then, since mean-payoff objectives are convex, we have $\chi \in \Xi$.
\end{itemize}

In both cases, $\chi \in \mathrm{X} \cup \Xi$, so $\chi \in \Phi$: the objective $\Phi$ is convex.

Therefore, if Challenger has some strategy to falsify the objective $\Phi$, he has a memoryless one: let us write it $\tau_\C$.

Now, we want to prove that the memoryless strategy $\tau_\C$ is also efficient when we replace the limit inferior of the definition of $\Phi$ by a limit superior, even though this new objective is no longer convex.

By definition of $\tau_\C$, for every strategy $\tau_\P$, we have $\< \btau \>_{s_0} \not\in \Phi$.
Let us prove that $\nu_\C(\< \btau \>_{s_0}) > \alpha$.

In other words, let us prove that for every infinite path $\rho$ from $s_0$ in the graph $\Conc_{\lambda i}(G)[\tau_\C]$, we have $\nu_\C(\rho) > \alpha$.
Since $\rho \not\in \Phi$, we have either $\rho \not\in \mathrm{X} \cup \Xi$ or $\rho \not\in \Psi$.
In the first case, we have $\nu_\C(\rho) = +\infty$, which ends the proof.
In the second case, we have:
$$\underset{n \to \infty}{\limsup} \frac{1}{n} \underset{k=0}{\overset{n-1}{\sum}}  \hpi_\star(\rho_k\rho_{k+1}) > \alpha.$$

We want to prove that $\nu_\C(\rho) > \alpha$, that is, since we assume $\rho \in \mathrm{X} \cup \Xi$:
$$\hmu_\star(\rho) = \underset{n \to \infty}{\liminf} \frac{1}{n} \underset{k=0}{\overset{n-1}{\sum}}  \hpi_\star(\rho_k\rho_{k+1}) > \alpha.$$

Here, the play $\rho$ is an infinite path in the graph $\Conc_{\lambda i}(G)[\tau_\C]$: by the description of the possible outcomes in a mean-payoff game given in \cite{DBLP:conf/concur/ChatterjeeDEHR10}, the mean-payoff $\hmu_\star(\rho)$ is then larger than or equal to the minimal mean-payoff $\hmu_\star$ we get by looping on a simple cycle $c$ of that graph accessible from the state $s$. Intuitively, a play can be seen as a combination of those cycles. That is to say:
$$\hmu_\star(\rho) \geq \underset{\tiny{\begin{matrix}
		c \in \SC\left(\Conc_{\lambda i}(G)[\tau_\C]\right) \\
		\mathrm{accessible~from~} s
		\end{matrix}}}{\min} \hmu_\star(c^\omega).$$

For each such cycle, since $c^\omega$ is a play compatible with $\tau_\C$, we have:
$$\underset{n \to \infty}{\limsup} \frac{1}{n} \underset{k=0}{\overset{n-1}{\sum}} \hpi_\star(c_k c_{k+1}) > \alpha$$
where the indices are taken in $\Z / |c| \Z$, i.e.:
$$\underset{n \to \infty}{\lim} \frac{1}{n} \underset{k=0}{\overset{n-1}{\sum}} \hpi_\star(c_k c_{k+1}) > \alpha,$$
and therefore:
$$\underset{n \to \infty}{\liminf} \frac{1}{n} \underset{k=0}{\overset{n-1}{\sum}} \hpi_\star(c_k c_{k+1}) > \alpha,$$
that is to say:
$$\hmu_\star(c^\omega) > \alpha,$$
hence $\hmu_\star(\rho) > \alpha$.
\end{proof}

	\section{Proof of Lemma \ref{lm_resolution_concrete_game}} \label{pf_resolution_concrete_game}

\noindent \textbf{Lemma~\ref{lm_resolution_concrete_game}.} \emph{Let $G_{\|v_0}$ be an initialized mean-payoff game, and let $\Conc_{\lambda i}(G)_{\|s_0}$ be its concrete negotiation game for some $\lambda$ and some $i$.}
	
\emph{Then, the value of the game $\Conc_{\lambda i}(G)_{\|s_0}$ is given by the formula:}
	$$\max_{\tau_\C \in \ML_\C\left(\Conc_{\lambda i}(G)\right)} ~
	\min_{\scriptsize{\begin{matrix}
			K \in \SConn\left(\Conc_{\lambda i}(G)[\tau_\C]\right) \\
			\mathrm{accessible~from~} s_0
			\end{matrix}}} \opt(K),$$
\emph{where $\opt(K)$ is the minimal value $\nu_\C(\rho)$ for $\rho$ among the infinite paths in $K$.}
	
\begin{itemize}
	\item \emph{If $K$ contains a deviation, then Prover can simply choose the simple cycle of $K$ that minimizes player $i$'s payoff:}
	$$\opt(K) = \underset{c \in \SC(K)}{\min} ~ \hmu_\star(c^\omega).$$
		
	\item \emph{If $K$ does not contain a deviation, then Prover must choose a combination of the simple cycles of $K$ that minimizes player $i$'s payoff while keeping the non-main dimensions above $0$:}
	$$\opt(K) = \minstar \underset{c \in \SC(K)}{\Conv} ~ \hmu(c^\omega).$$
	\end{itemize}

\begin{proof}
By Lemma \ref{lm_memoryless}, there exists a memoryless strategy $\tau_\C$ which is optimal for Challenger among all his possible strategies.

It follows from Theorem \ref{thm_concrete_game} that the highest value player $i$ can get against a hostile $\lambda$-rational environment is the minimal payoff of Challenger in a path in the graph $\Conc_{\lambda i}(G)[\tau_\C]$. For any such path $\rho$, there exists a strongly connected component $K$ of $\Conc_{\lambda i}(G)[\tau_\C]$ accessible from $s_0$ such that after a finite number of steps, $\rho$ is a path in $K$. The least payoff of Challenger in such a path, for a given $K$, is $\opt(K)$; let us prove that it is given by the desired formula.

There are, then, two cases to distinguish:

\begin{itemize}
	\item \emph{If there is at least one deviation in $K$.}
	
	Then, for every play $\rho$ in $K$, it is possible to transform $\rho$ into a play $\rho'$ with $\hmu(\rho') = \hmu(\rho)$, which contains infinitely many deviations: it suffices to add round trips to a deviation, endlessly, but less and less often.
	Therefore, the outcomes $\nu_\C(\rho)$ of plays in $K$ are exactly the mean-payoffs $\hmu_\star(\rho)$ of plays in $K$, and possibly $+\infty$; and in particular, the lowest outcome Prover can get in $K$ is the quantity:
	$$\underset{c \in \SC(K)}{\min} ~ \hmu_\star(c^\omega),$$
	the least value of a simple cycle in $K$.

	\item \emph{If there is no deviation in $K$.}
	
	Let us first introduce a notation: for any finite set $D$ and any set $X \subseteq \R^D$, $X^\llcorner$ denotes the set:
	$$X^\llcorner = \left\{\left. \left( \min_{\by \in Y} y_d \right)_{d \in D} ~\right|~ Y \subseteq X \mathrm{~finite} \right\}.$$
	
	For example, in $\R^2$, if $X$ is the blue area in Figure~\ref{fig_corner}, then $X^\llcorner$ is the union of the blue area and the gray area.

	\begin{figure} 
	\begin{center}
		\begin{tikzpicture}[scale=1]
		\draw [->] (0,0) -- (3,0);
		\draw (3,0) node[right] {$x$};
		\draw [->] (0,0) -- (0,3);
		\draw (0,3) node[above] {$y$};
		\fill [gray!40] (0.5, 0.5) -- (1.5, 0.5) -- (0.5, 1.5);
		\fill [blue] (0.5, 1.5) -- (1.5, 2.5) -- (2.5, 1.5) -- (1.5, 0.5);
		\end{tikzpicture}
	\end{center}
	\caption{An example for the operator $\cdot^\llcorner$} \label{fig_corner}
	\end{figure}

	Let us already note that for all $X \in \R^{\Pi \cup \{\star\}}$,
	$$\begin{matrix}
		& \minstar X^\llcorner \\
		
		=& \min \left\{ x_\star ~\left|~ \begin{matrix}
			\bx \in X^\llcorner,\\
			\forall j \in \Pi, x_j \geq 0
		\end{matrix} \right. \right\} \\
		
		=& \min \left\{ \underset{\by \in Y}{\min} ~ y_\star ~\left|~ \begin{matrix}
			Y \subseteq X \mathrm{~finite}, \\
			\forall \by \in Y, \forall j \in \Pi, y_j \geq 0
		\end{matrix} \right. \right\} \\
		
		=& \min \left\{ y_\star ~\left|~ \begin{matrix}
			\by \in X, \\
			\forall \by \in Y, \forall j \in \Pi, y_j \geq 0
		\end{matrix} \right. \right\} \\
		
		=& \minstar X.
	\end{matrix}$$
	
	Then, it has been proved in \cite{DBLP:conf/concur/ChatterjeeDEHR10} that the set of possible values of $\hmu(\rho)$ for all plays $\rho$ in $K$ is exactly the set:
	$$X = \left( \underset{c \in \SC(K)}{\Conv} \hmu(c^\omega) \right)^\llcorner.$$
	
	Since all the plays in $K$ contain finitely many deviations (actually none), for every $\bx = \hmu(\rho) \in X$, we have $\nu_\C(\rho) = +\infty$ if and only if there exists $j \in \Pi$ such that $x_j < 0$. Then, the lowest outcome Prover can get in $K$ is:
	$$\min \left\{ x_\star ~|~ \bx \in X, \forall j \in \Pi, x_j \geq 0 \right\},$$
	that is to say:
	$$\minstar \left( \underset{c \in \SC(K)}{\Conv} \hmu(c^\omega) \right)^\llcorner,$$
	i.e. $\minstar \underset{c \in \SC(K)}{\Conv} \hmu(c^\omega)$.
\end{itemize}

Theorem \ref{thm_concrete_game} enables to conclude to the desired formula.
\end{proof}

		\section{The negotiation sequence} \label{app_nego_seq}

We assume in that appendix that $G$ is a game on which the negotiation function is \emph{Scott-continuous}, i.e. such that for every non-decreasing sequence $(\lambda_n)_n$ of requirements on $G$, we have:
$$\nego\left(\sup_n \lambda_n \right) = \sup_n \nego(\lambda_n).$$

By Kleene-Tarski fixed-point theorem, the least fixed point of the negotiation function is, then, the limit of the \emph{negotiation sequence}, defined as the sequence $(\lambda_n)_{n \in \N} = (\nego^n(\lambda_0))_n$.

In mean-payoff games, in particular, the hypothesis made above is true:

\begin{pptn} \label{pptn_cont_nego}
    In mean-payoff games, the negotiation function is Scott-continuous.
\end{pptn}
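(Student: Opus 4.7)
Let $\lambda^* = \sup_n \lambda_n$. The inequality $\sup_n \nego(\lambda_n) \leq \nego(\lambda^*)$ is immediate from the monotonicity of $\nego$ recalled right after the definition of the negotiation function, so the whole task is to prove the reverse inequality $\nego(\lambda^*)(v) \leq \sup_n \nego(\lambda_n)(v)$ for every player $i$ and every $v \in V_i$. The plan is to use Lemma~\ref{lm_resolution_concrete_game}: via the concrete negotiation game, $\nego(\lambda)(v)$ is obtained as a finite max-min of values $\opt(K)$, and the idea is to show that each such $\opt(K)$ depends continuously (in fact, monotonically) on $\lambda$.

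The crucial structural remark, visible from the definition of $\Conc_{\lambda i}(G)$, is that its state space $S_\P \cup S_\C$ and its transition relation $\Delta$ do \emph{not} depend on $\lambda$; only the weight function $\hpi$ does, and only through the quantity $\max_{v \in M \cap V_j} \lambda(v)$ that appears on acceptation and deviation transitions. Consequently, the finite set $\ML_\C(\Conc_{\lambda i}(G))$ of memoryless Challenger strategies, and, for any such $\tau_\C$, the finite family of SCCs of $\Conc_{\lambda i}(G)[\tau_\C]$ accessible from $(v,\{v\})$, are the same for all $\lambda$. So I only have to analyse, SCC by SCC, how $\opt^{(n)}(K) \mathrel{:=} \opt(K)$ computed with $\lambda_n$ converges to $\opt^*(K)$ computed with $\lambda^*$. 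When $K$ contains a deviation, Lemma~\ref{lm_resolution_concrete_game} gives $\opt(K) = \min_{c \in \SC(K)} \hmu_\star(c^\omega)$, which is $\lambda$-independent. When $K$ contains no deviation, a short induction using that acceptation transitions only enlarge the memory component (and that any cycle ends in the memory in which it started) shows that \emph{all} states of $K$ share a single memory $M_K$; the formula for $\opt^{(n)}(K)$ then rewrites as the linear program
\[
\opt^{(n)}(K) = \min \Bigl\{\, \textstyle\sum_c \alpha_c \mu_i(\dc^\omega) \;\Bigm|\; \sum_c \alpha_c = 1,\ \alpha_c \geq 0,\ \forall j \in \Pi,\ \sum_c \alpha_c \mu_j(\dc^\omega) \geq \max_{v \in M_K \cap V_j} \lambda_n(v) \Bigr\},
\]
whose right-hand sides converge monotonically to their $\lambda^*$-counterparts.

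The main obstacle is therefore continuity of this linear program under pointwise monotone convergence of its right-hand sides. I would prove it by a standard compactness argument: one inequality $\sup_n \opt^{(n)}(K) \leq \opt^*(K)$ follows because every $\lambda^*$-feasible $\alpha$ is $\lambda_n$-feasible; for the converse, pick an optimal $\alpha^{(n)}$ in the compact simplex $\SC(K)$-simplex, extract a convergent subsequence $\alpha^{(n_k)} \to \alpha^\infty$, and pass the inequalities to the limit to conclude that $\alpha^\infty$ is $\lambda^*$-feasible, with objective value $\lim_k \opt^{(n_k)}(K) = \sup_n \opt^{(n)}(K) \geq \opt^*(K)$. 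The degenerate cases where the LP is infeasible (because some $\max_{v \in M_K \cap V_j}\lambda^*(v) = +\infty$, or more generally by a robust-infeasibility argument) force $\opt^{(n)}(K) = +\infty$ eventually, so both sides are $+\infty$. Once $\opt^{(n)}(K) \nearrow \opt^*(K)$ is established for each $K$, we conclude by taking the max-min over the fixed finite family of memoryless strategies and SCCs: since $\max$ and $\min$ commute with monotone limits over finite index sets, $\nego(\lambda_n)(v) \nearrow \nego(\lambda^*)(v)$, which is exactly Scott-continuity.
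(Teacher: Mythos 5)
Your proof is correct and follows the same overall route as the paper: one inequality by monotonicity, the other via the concrete negotiation game, the observation that its underlying graph (hence the finite sets of memoryless Challenger strategies and of accessible SCCs) is independent of $\lambda$, a case split on whether $K$ contains a deviation, the common memory $M_K$ for deviation-free SCCs, and a final appeal to finiteness of the max-min. Where you genuinely diverge is in the key analytic step for deviation-free components. The paper proves quantitative stability of $\minstar \Conv_{c \in \SC(K)} \hmu(c^\omega)$: it introduces $\gamma_n = \sup_v(\lambda(v)-\lambda_n(v))$, bounds the displacement of the convex hull by a box of side $\gamma_n$, and exhibits an explicit Lipschitz-type constant (a maximum over pairs of cycles of ratios of payoff differences) controlling how much the $\star$-minimum can drop, with a separate closedness argument for the case where the feasible region is empty. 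You instead recast $\opt(K)$ as a linear program over the simplex of convex combinations of simple cycles and argue qualitatively: feasible sets are nested and compact, optimizers admit convergent subsequences, and constraints pass to the limit; infeasibility of the limit LP forces eventual infeasibility (via nestedness of nonempty compact sets, or boundedness of the objective when a right-hand side tends to $+\infty$). Both arguments are sound. Yours is more elementary and arguably more robust in the degenerate cases (the paper's $\gamma_n$ is not obviously finite when some $\lambda(v)=+\infty$, and its explicit constant is asserted rather than derived), while the paper's yields uniform quantitative control of the error in terms of $\gamma_n$, which is the kind of estimate one would want for effective or complexity-theoretic refinements.
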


A proof of that statement is given in Appendix~\ref{pf_cont_nego}.

In many cases, the negotiation sequence is stationary, and in such a case, it is possible to compute its limit: whenever a term is equal to the previous one, we know that we reached it.
But actually, the negotiation sequence is not always stationary.
The game of Figure~\ref{fig_not_stationary} is a counter-example. Indeed, for all $n$, we have:
$$\lambda_n(a) = \lambda_n(b) = 2 - \frac{1}{2^{n-1}},$$
which converges to $2$ but never reaches it.

\begin{figure} 
\begin{center}
	\begin{tikzpicture}[<->,>=latex,shorten >=1pt, , scale=0.8, every node/.style={scale=0.8}]
	\node[state] (c) at (-1.7, -1) {$c$};
	\node[state, diamond] (d) at (-1.7, 1) {$d$};
	\node[state] (a) at (0, 0) {$a$};
	\node[state, rectangle] (b) at (2, 0) {$b$};
	\node[state, rectangle] (e) at (3.7, 1) {$e$};
	\node[state, diamond] (f) at (3.7, -1) {$f$};
	
	\path[<->] (a) edge node[above] {$\stackrel{\playcircle}{2} \stackrel{\Box}{2} \stackrel{\Diamond}{0}$} (b);
	\path[->] (a) edge (c);
	\path[<->] (c) edge node[left] {$\stackrel{\playcircle}{0} \stackrel{\Box}{1} \stackrel{\Diamond}{0}$} (d);
	\path (d) edge [loop above] node {$\stackrel{\playcircle}{2} \stackrel{\Box}{2} \stackrel{\Diamond}{0}$} (d);
	\path[->] (b) edge (e);
	\path[<->] (e) edge node[right] {$\stackrel{\playcircle}{1} \stackrel{\Box}{0} \stackrel{\Diamond}{0}$} (f);
	\path (f) edge [loop below] node {$\stackrel{\playcircle}{2} \stackrel{\Box}{2} \stackrel{\Diamond}{0}$} (f);
	\end{tikzpicture}
\end{center}
\caption{A game where the negotiation sequence is not stationary} \label{fig_not_stationary}
\end{figure}
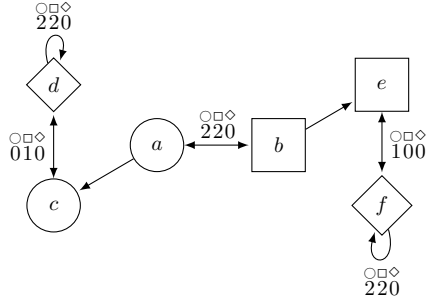

Let us give some details.
Since all the $\Diamond$ weights are equal to $0$, for all $n > 0$, we have $\lambda_n(d) = \lambda_n(f) = 0$.
It comes that for all $n > 0$, we also have $\lambda_n(c) = \lambda_n(e) = 0$.
Moreover, by symmetry of the game, we always have $\lambda_n(a) = \lambda_n(b)$. Therefore, to compute the negotiation sequence, it suffices to compute $\lambda_{n+1}(a)$ as a function of $\lambda_n(b)$, knowing that $\lambda_1(a) = \lambda_1(b) = 1$, and therefore that for all $n > 0$,  $\lambda_n(a) = \lambda_n(b) \geq 1$.

From $a$, the worst play that player $\Box$ could propose to player $\Circle$ would be a combination of the cycles $cd$ and $d$ giving her exactly $1$.
But then, player $\Circle$ will deviate to go to $b$, from which if player $\Box$ proposes plays in the strongly connected component containing $c$ and $d$, then player $\Circle$ will always deviate and generate the play $(ab)^\omega$, and then get the payoff $2$.

Then, in order to give her a payoff lower than $2$, player $\Box$ has to go to the state $e$.
Since player $\Circle$ does not control any state in that strongly connected component, the play he will propose will be accepted: he will, then, propose the worst possible combination of the cycles $ef$ and $f$ for player $\Circle$, such that he gets at least his requirement $\lambda_n(b)$.
The payoff $\lambda_{n+1}(a)$ is then the minimal solution of the system:
$$\left\{\begin{matrix}
\lambda_{n+1}(a) = x + 2(1-x) \\
2(1-x) \geq \lambda_n(b) \\
0 \leq x \leq 1
\end{matrix}\right.$$
that is to say $\lambda_{n+1}(a) = 1 + \frac{\lambda_n(b)}{2} = 1 + \frac{\lambda_n(a)}{2}$, and by induction, for all $n > 0$:
$$\lambda_n(a) = \lambda_n(b) = 2 - \frac{1}{2^{n-1}}.$$

	\section{Proof of Proposition \ref{pptn_cont_nego}} \label{pf_cont_nego}

\noindent \textbf{Proposition~\ref{pptn_cont_nego}.} \emph{In mean-payoff games, the negotiation function is Scott-continuous.}

\begin{proof}
Let $(\lambda_n)_n$ be a non-decreasing sequence of requirements on a mean-payoff game $G$, and let $\lambda = \sup_n \lambda_n$. We want to prove that $\nego(\lambda) = \sup_n \nego(\lambda_n)$.

Since the negotiation function is monotone, we already have $\nego(\lambda) \geq \sup_n \nego(\lambda_n)$. Let us prove that $\nego(\lambda) \leq \sup_n \nego(\lambda_n)$.

Let $\delta > 0$: we want to find $n$ such that $\nego(\lambda_n)(v) \geq \nego(\lambda)(v) - \delta$ for each $v \in V$.

Let:
$$\Conc_{\lambda i}(G)_{\|s_0} = \left(\{\P, \C\}, S, (S_\P, S_\C), \Delta, \nu\right)_{\|s_0}$$
be the concrete negotiation game of $G$ for $\lambda$ and player $i$ controlling $v$, and let:
$$\Conc_{\lambda_n i}(G)_{\|s_0} = \left(\{\P, \C\}, S, (S_\P, S_\C), \Delta, \nu'\right)_{\|s_0}$$
be the concrete negotiation game of $G$ for some requirement $\lambda_n$ in $v$. Let us note that both have the same underlying graph, and that the only difference are the weight functions $\hpi$ and $\hpi'$, on the non-main dimensions.

By Lemma \ref{lm_resolution_concrete_game}, we have:
$$\nego(\lambda)(v) = \max_{\tau_\C \in \ML_\C\left(\Conc_{\lambda i}(G)_{\|s_0}\right)} ~
\min_{\scriptsize{\begin{matrix}
		K \in \SConn\left(\Conc_{\lambda i}(G)[\tau_\C]\right) \\
		\mathrm{accessible~from~} s_0
		\end{matrix}}} \opt(K)$$
with:
$$\opt(K) = \left\{\begin{matrix}
    \mathrm{if~} K \mathrm{~contains~a~deviation}:\\
    \underset{c \in \SC(K)}{\min} ~ \hmu_\star(c^\omega) \\[5mm]
    \mathrm{otherwise}:\\
    \minstar \underset{c \in \SC(K)}{\Conv} ~ \hmu(c^\omega),
\end{matrix} \right.$$
and identically:
$$\nego(\lambda_n)(v) =
\max_{\tau_\C \in \ML_\C\left(\Conc_{\lambda_n i}(G)_{\|s_0}\right)} ~
\min_{\scriptsize{\begin{matrix}
		K \in \SConn\left(\Conc_{\lambda i}(G)[\tau_\C]\right) \\
		\mathrm{accessible~from~} s_0
		\end{matrix}}} \opt'(K)$$
with:
$$\opt'(K) = \left\{\begin{matrix}
\mathrm{if~} K \mathrm{~contains~a~deviation}:\\
\underset{c \in \SC(K)}{\min} ~ \hmu_\star(c^\omega) \\[5mm]
\mathrm{otherwise}:\\
\minstar \underset{c \in \SC(K)}{\Conv} ~ \hmu'(c^\omega).
\end{matrix} \right.$$

Let $\tau_\C$ be a memoryless strategy for Challenger in the game $\Conc_{\lambda i}(G)_{\|s_0}$; it can also be considered as a memoryless strategy in the game $\Conc_{\lambda_n i}(G)_{\|s_0}$.

Let us now define:
$$\gamma_n = \sup_{v \in V} (\lambda(v) - \lambda_n(v)).$$
Then, the sequence $(\gamma_n)_n$ is non-increasing and converges to $0$. Moreover, for each transition $st \in \Delta$, we have:
$$\hpi'_j(st) \in [\hpi_j(st) - \gamma_n, \hpi_j(st)].$$

Let:
$$\Gamma_n = \left\{ \left. \bx \in \R^{\Pi \cup \{\star\}} ~\right|~ x_\star = 0 \mathrm{~and~} \forall j \in \Pi, x_j \in [0, \gamma_n] \right\}.$$

Then, let $K$ be a strongly connected component of the graph $\Conc_{\lambda i}(G)[\tau_\C]$, without deviation, accessible from $s_0$; we have:
$$\underset{c \in \SC(K)}{\Conv} \hmu'(c^\omega) \subseteq \underset{c \in \SC(K)}{\Conv} \hmu(c^\omega) + \Gamma_n.$$

Let $R = \left\{\left. \bx \in \R^{\Pi \cup \{\star\}} ~\right|~ \forall j \in \Pi, x_j \geq 0 \right\}$.

\begin{itemize}
	\item If $\underset{c \in \SC(K)}{\Conv} \hmu(c^\omega) \cap R = \emptyset$, since $\underset{c \in \SC(K)}{\Conv} \hmu(c^\omega)$ and $R$ are closed sets, if $\gamma_n$ is small enough, we have $\underset{c \in \SC(K)}{\Conv} \hmu'(c^\omega) \cap R = \emptyset$. Therefore, if:
	$$\minstar \underset{c \in \SC(K)}{\Conv} \hmu(c^\omega) = +\infty,$$
	then, for $n$ great enough:
	$$\minstar \underset{c \in \SC(K)}{\Conv} \hmu'(c^\omega) = +\infty.$$

	\item Otherwise, we have:
	$$\minstar \underset{c \in \SC(K)}{\Conv} \hmu'(c^\omega) \geq
	\minstar \underset{c \in \SC(K)}{\Conv} \hmu(c^\omega) - \gamma_n \max_{\begin{tiny}\begin{matrix}
		c \in \SC(K) \\
		d \in \SC(K)
		\end{matrix}\end{tiny}}
	\sum_{\begin{tiny}\begin{matrix}
		j \in \Pi, \\
		\hmu_j(c^\omega) >\\
		\hmu_j(d^\omega)
		\end{matrix}\end{tiny}} \frac{\hmu_\star(c^\omega) - \hmu_\star(d^\omega)}{\hmu_j(c^\omega) - \hmu_j(d^\omega)}$$
	and if $\gamma_n$ is small enough, we have:
	$$\minstar \underset{c \in \SC(K)}{\Conv} \hmu'(c^\omega) \geq \minstar \underset{c \in \SC(K)}{\Conv} \hmu(c^\omega) - \delta.$$
\end{itemize}

In both cases, we find that there exists $\gamma_n$ small enough, i.e. $n$ great enough, to ensure:
$$\minstar \underset{c \in \SC(K)}{\Conv} \hmu'(c^\omega) \geq \minstar \underset{c \in \SC(K)}{\Conv} \hmu(c^\omega) - \delta.$$

We can find such $n$ for each strongly connected component $K$ without deviation, and there exists a finite number of such components. Moreover, when $K$ is a strongly connected component with a deviation, the quantity:
$$\underset{c \in \SC(K)}{\min} ~ \hmu_\star(c^\omega)$$
is the same in $\Conc_{\lambda i}(G)$ and in $\Conc_{\lambda_n i}(G)$. Therefore, there exists $n \in \N$ such that:
$$\min_{\scriptsize{\begin{matrix}
		K \in \SConn\left(\Conc_{\lambda_n i}(G)[\tau_\C]\right) \\
		\mathrm{accessible~from~} s_0
		\end{matrix}}} ~
\opt(K)
\geq
\min_{\scriptsize{\begin{matrix}
		K \in \SConn\left(\Conc_{\lambda i}(G)[\tau_\C]\right) \\
		\mathrm{accessible~from~} s_0
		\end{matrix}}} ~
\opt(K) - \delta.$$

We find such $n$ for every memoryless strategy $\tau_\C$, and there exists a finite number of such strategies. Therefore, there exists $n \in \N$ such that:
$$\nego(\lambda_n)(v) \geq \nego(\lambda)(v) - \delta.$$

Finally, since there are finitely many states $v \in V$, we can conclude to the existence of $n \in \N$ such that for each $v \in V$, we have:
$$\nego(\lambda_n)(v) \geq \nego(\lambda)(v) - \delta.$$

The negotiation function is Scott-continuous.
\end{proof}

    \section{Proof of Theorem \ref{thm_formula_nego}} \label{pf_formula_nego}
	
\noindent \textbf{Theorem~\ref{thm_formula_nego}.} \emph{Let $G$ be a mean-payoff game. Let us assimilate any requirement $\lambda$ on $G$ with finite values to the tuple $\blambda = (\lambda(v))_{v \in V}$, element of the vector space of finite dimension $\R^V$. Then, for each player $i$ and every vertex $v_0 \in V_i$, the quantity $\nego(\lambda)(v_0)$ is a piecewise linear function of $\blambda$, and an effective expression of that function can be computed in time double exponential in the size of $G$.}

\begin{proof}
By Lemma \ref{lm_resolution_concrete_game}, we have the formula:
$$\nego(\lambda)(v_0) = \max_{\tiny{\tau_\C \in \ML_\C\left(\Conc_{\lambda i}(G)\right)}} ~
\min_{\tiny{\begin{matrix}
    K \in \SConn(\Conc_{\lambda i}(G)[\tau_\C]) \\
    \mathrm{accessible~from~} (v_0, \{v_0\})
\end{matrix}}}
\opt(K).$$

Let $\tau_\C$ be a memoryless strategy of Challenger, and let $K$ be a strongly connected component of the graph $\Conc_{\lambda i}(G)[\tau_\C]$. Let us prove that the quantity:
$$\opt(K) = \left\{\begin{matrix}
\mathrm{if~} K \mathrm{~contains~a~dev.}: &
\underset{c \in \SC(K)}{\min} ~ \hmu_\star(c^\omega) \\
\mathrm{otherwise}: &
\minstar \underset{c \in \SC(K)}{\Conv} ~ \hmu(c^\omega)
\end{matrix} \right.$$
is a piecewise linear function of $\blambda$.

When $K$ contains a deviation, the quantity:
$$\underset{c \in \SC(K)}{\min} ~ \hmu_\star(c^\omega)$$
is independent of $\lambda$, and the result is then immediate. Let us now study the case where $K$ does not contain any deviation, i.e. let us prove that the quantity:
$$f(\lambda) = \minstar \underset{c \in \SC(K)}{\Conv} ~ \hmu(c^\omega)$$
is a piecewise linear function of $\lambda$.

Let $M$ be the common memory of the states of $K$ (since $K$ does not contain deviations). We know that for each $j \in \Pi$ and for every cycle $c \in \SC(K)$, we have:
$$\hmu_j(c^\omega) = \mu_j(\dc^\omega) - \max_{v \in V_j \cap M} \lambda(v).$$

Let $C = \left\{ \dc ~|~ c \in \SC(K) \right\}$. Since there is no deviation in $K$, any cycle in $C$ is a simple cycle of $G$. Then, the quantity $f(\lambda)$ is the minimal $x_i$ for $\bx$ in the set:
$$X = \underset{c \in C}{\Conv} ~ \mu(c^\omega) \cap \bigcap_{v \in M} \left\{ \bx ~|~ x_j \geq \lambda(v) \mathrm{~with~} v \in V_j \right\}.$$

The set $X$, intersection of a polyhedron and a polytope, is a polytope: therefore, there exists a vertex $\bx$ of that polytope which minimizes $x_i$ for $\bx \in X$. That vertex is the intersection between a face of the greater polytope $P = \underset{c \in C}{\Conv} ~ \mu(c^\omega)$, and some of the hyperplanes $H_v$ (possibly zero), defined as the hyperplanes of equation $x_j = \lambda(v)$ for $j \in \Pi$ controlling $v$, such that $\lambda(v) = \underset{w \in M \cap V_j}{\max} \lambda(w)$.

\begin{ex}
With three cycles and two players against player $i$, each controlling one vertex $v$ such that $\lambda(v) = 0$, the vertex $\bx$ is the red point in Figure~\ref{fig_zero_hyp} and Figure~\ref{fig_two_hyp}.

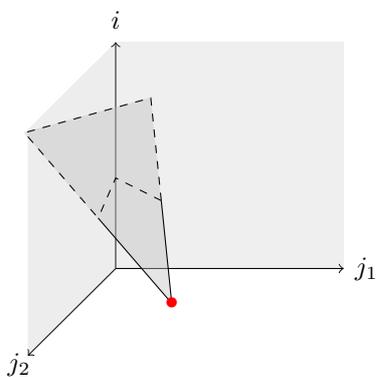
\begin{figure} 
    \centering
	\begin{tikzpicture}[scale=0.3]
	\draw[->] (0,0,0) -- (10,0,0);
	\draw[->] (0,0,0) -- (0,10,0);
	\draw[->] (0,0,0) -- (0,0,10);
	
	\draw (11,0,0) node {$j_1$};
	\draw (0,11,0) node {$i$};
	\draw (0,0,11) node {$j_2$};
	
	\fill[gray!60,opacity=0.2] (0,0,0) -- (10,0,0) -- (10,10,0) -- (0,10,0);
	\fill[gray!70,opacity=0.2] (0,0,0) -- (0,0,10) -- (0,10,10) -- (0,10,0);
	
	\filldraw[gray!50,opacity=0.4] (0,6,-4) -- (-4,6,0) -- (4,0,4);
	\draw[dashed] (2,3,0) -- (0,6,-4) -- (-4,6,0) -- (0,3,2);
	\draw (2,3,0) -- (4,0,4) -- (0,3,2);
	
	\draw[dashed] (2,3,0) -- (0,4,0) -- (0,3,2);
	
	\draw[red] (4,0,4) node {$\bullet$};
	\end{tikzpicture}
	\caption{The intersection between a $0$-dimensional face and zero hyperplane} \label{fig_zero_hyp}
\end{figure}

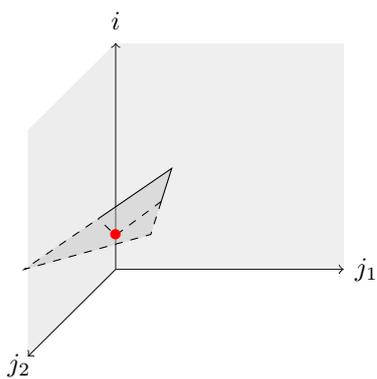
\begin{figure} 
    \centering
	\begin{tikzpicture}[scale=0.3]
	\draw[->] (0,0,0) -- (10,0,0);
	\draw[->] (0,0,0) -- (0,10,0);
	\draw[->] (0,0,0) -- (0,0,10);
	
	\draw (11,0,0) node {$j_1$};
	\draw (0,11,0) node {$i$};
	\draw (0,0,11) node {$j_2$};
	
	\fill[gray!60,opacity=0.2] (0,0,0) -- (10,0,0) -- (10,10,0) -- (0,10,0);
	\fill[gray!70,opacity=0.2] (0,0,0) -- (0,0,10) -- (0,10,10) -- (0,10,0);
	
	\filldraw[gray!50,opacity=0.4] (0,0,-4) -- (-4,0,0) -- (4,6,4);
	\draw[dashed] (2,3,0) -- (0,0,-4) -- (-4,0,0) -- (0,3,2);
	\draw (2,3,0) -- (4,6,4) -- (0,3,2);
	
	\draw[dashed] (2,3,0) -- (0,3/2,0) -- (0,3,2);
	
	\draw[red] (0,3/2,0) node {$\bullet$};
	\end{tikzpicture}
	\caption{The intersection between a $2$-dimensional face and two hyperplanes} \label{fig_two_hyp}
\end{figure}
\end{ex}

The set of vertices of the polyhedron $X$ is included in the finite set:
$$Y = \left\{ \by_{WD} \in \R^{\Pi \cup \{\star\}} ~\left|~
\begin{matrix}
	W \subseteq M, ~D \subseteq C,\\
	\underset{c \in D}{\Conv} ~ \mu(c^\omega) \cap \underset{w \in W}{\bigcap} H_w = \{\by_{WD}\} \\
	\mathrm{and~} \forall j, ~ \forall v \in M \cap V_j,\\
	y_{WDj} \geq \lambda(v)
\end{matrix} \right.\right\},$$
where the tuple $\by_{WD}$ is the intersection of the face of $P$ delimited by the values of the cycles of $D$, and the hyperplanes $H_v$ for $v \in W$, as states the condition $\underset{c \in D}{\Conv} ~ \mu(c^\omega) \cap \underset{w \in W}{\bigcap} H_w = \{\by_{WD}\}$.

The condition $\forall j, \forall v \in M \cap V_j, y_{WDj} \geq \lambda(v)$ states that the vertex $\by_{WD}$ is, moreover, the outcome of a $\lambda$-consistent play of $G$, which guarantees that this set $Y$ is itself included in $X$.

We have, therefore:
$$\opt(K) = \minstar \underset{c \in C}{\Conv} ~ \mu(c^\omega) = \min_{\by \in Y} y_i.$$

Let now $\by \in Y$, and let $W$ and $D$ be such that $\by = \by_{WD}$.

Let us choose $D$ and $W$ minimal, so that each player $j \in \Pi$ controls at most one state $w \in W$, and so that there exists only one decomposition:
$$\by = \underset{c \in D}{\sum} \alpha_c \mu(c^\omega)$$
with  $0 < \alpha_c < 1$ for each $c$, and $\sum_c \alpha_c = 1$. Furthermore, $\by$ is the only such solution of the system of equations:
$$\forall j \in \Pi, \forall w \in W \cap V_j, y_j = \lambda(w).$$

Therefore, the vector $\balpha = (\alpha_c)_{c \in D}$ is the only solution of the system:
$$\left\{\begin{matrix}
\underset{c \in D}{\sum} \alpha_c = 1 \\
\forall j \in \Pi, \forall w \in W \cap V_j, \underset{c \in D}{\sum} \alpha_c \mu_j(c^\omega) = \lambda(w) \\
\forall c \in D, \alpha_c > 0.
\end{matrix}\right.$$

Then, if $\oplus$ is a symbol and $A_{WD}$ is the matrix:
$$A_{WD} = \left(\left\{\begin{matrix}
1 & \mathrm{if~} w = \oplus \\
\mu_j(c^\omega) & \mathrm{else,~with~} w \in V_j
\end{matrix}\right. \right)_{\tiny{\begin{matrix}
		w \in W \cup \{\oplus\},\\ c \in D
		\end{matrix}}}$$
then $A_{WD}$ is invertible and:
$$\balpha = A_{WD}^{-1} \left(\left\{\begin{matrix}
1 & \mathrm{if~} w = \oplus \\
\lambda(w) & \mathrm{otherwise}
\end{matrix}\right.\right)_{w \in W \cup \{\oplus\}},$$
with $\alpha_c > 0$ for all $c \in D$.

Let us write:
$$\bbeta_{\lambda W} = \left(\left\{\begin{matrix}
1 & \mathrm{if~} j = \oplus \\
\lambda(w) & \mathrm{otherwise}
\end{matrix}\right.\right)_{w \in W \cup \{\oplus\}}.$$
We have, thus, $\balpha = A_{WD}^{-1} \bbeta_{\lambda W}$.

Let us write, for each player $j$, $\bgamma_D^j = (\mu_j(c^\omega))_{c \in D}$. Then, we can write:
$$\begin{matrix}
y_i &=& \sum_c \alpha_c \mu_i(c^\omega) \\
&=& \t \bgamma_D^i ~ \balpha \\
&=& \t \bgamma_D^i ~ A_{WD}^{-1} ~ \bbeta_{\lambda W}.
\end{matrix}$$

Finally, if we write:
$$B_W = \left(\left\{\begin{matrix}
1 & \mathrm{if~} w = v \\
0 & \mathrm{otherwise}
\end{matrix}\right.\right)_{w \in W \cup \{\oplus\}, v \in V}$$
and:
$$\bdelta_W = \left(\left\{\begin{matrix}
1 & \mathrm{if~} w = \oplus \\
0 & \mathrm{otherwise}
\end{matrix}\right.\right)_{w \in W \cup \{\oplus\}}$$
we have $\bbeta_{\lambda W} = B_W \blambda + \bdelta_W$, and therefore:
$$y_i = \t \bgamma_D^i ~ A_{WD}^{-1} ~ (B_W ~ \blambda + \bdelta_W).$$

Conversely, the tuple $\by$ defined by, for each $j \in \Pi$,
$$y_j = \t \bgamma_D^j ~ A_{WD}^{-1} ~ (B_W ~ \blambda + \bdelta_W)$$
for given $W \subseteq M$ and $D \subseteq C$, is an element of the set $Y$ if and only if:

\begin{itemize}
	\item the intersection $\underset{c \in D}{\Conv} ~ \mu(c^\omega) \cap \underset{w \in W}{\bigcap} H_w$ is a singleton, i.e. the matrix $A_{WD}$ is invertible (otherwise the matrix $A_{WD}^{-1}$ is not defined);
	
	\item $\by \in \underset{c \in D}{\Conv} ~ \mu(c^\omega)$, i.e. the tuple $\balpha = A_{WD}^{-1} ~ (B_W ~ \blambda + \bdelta_W)$ has only non-negative coordinates (actually positive if $D$ is minimal);
	
	\item for each player $j$, for each vertex $v \in M \cap V_j$, we have $y_j \geq \lambda(v)$, i.e. $\t \gamma_D^j A_{WD}^{-1} (B_W \blambda + \bdelta_W) \geq \lambda(v)$.
\end{itemize}

We finally find the formula:
$$\nego(\lambda)(v_0) = \max_{\tiny{\tau_\C \in \ML\left(\Conc_{\lambda_0 i}(G)\right)}} ~
\min_{\tiny{\begin{matrix}
		K \in \SConn\left(\Conc_{\lambda_0 i}(G)[\tau_\C]\right) \\
		\mathrm{accessible~from~} (v_0, \{v_0\})
		\end{matrix}}}
\left\{\begin{matrix}
\mathrm{if~} K \mathrm{~contains~dev.}: &
\underset{c \in \SC(K)}{\min} ~ \hmu_\star(c^\omega) \\

\mathrm{otherwise}: &
\min S_K,
\end{matrix} \right.$$
where $S_K$ is the set of real numbers of the form:
$$\t \bgamma_D^i A^{-1}_{WD} (B_W \blambda + \bdelta_W)$$
with:

\begin{itemize}
	\item $W \subseteq M$, common memory of the states of $K$;
	
	\item $D \subseteq C$, set of the cycles of the form $\dc$, where $c$ is a simple cycle of $K$;
	
	\item the matrix $A_{WD}$ is invertible;
	
	\item the vector $A_{WD}^{-1} (B_W \blambda + \bdelta_W)$ has only positive coordinates;
	
	\item and for each $j \in \Pi$, for each $v \in M_K \cap V_j$, we have $\t \bgamma_D^j A^{-1}_{WD} (B_W \blambda + \bdelta_W) \geq \lambda(v)$.
\end{itemize}

This is, indeed, the expression of a piecewise linear function.

Computing a complete and effective expression of that function can be done by constructing all the concrete games $\Conc_{\lambda_0 i}(G)_{\|(v_0, \{v_0\})}$ for $i \in \Pi$ and $v_0 \in V_i$ (there are as many of them as there are vertices in $G$, and their size is exponential in the size of $G$, hence so is the time that one needs for their construction), and then applying the formula above for each memoryless strategy of Challenger (their number is exponential in the size of the concrete game, i.e. double exponential in the size of $G$) and each strongly connected component of the induced graph (their number is bounded by the size of the concrete game). For each $K$, the computation of $\opt(K)$ with the formula given above requires only elementary operations on matrices and vectors, that can all be done in polynomial time: therefore, the effective construction of $\nego$ as a piecewise linear function requires a time double exponential in the size of $G$.
\end{proof}

    \section{Proof of Theorem \ref{thm_np_hard}} \label{app_np_hard}

\noindent \textbf{Theorem~\ref{thm_np_hard}.} \emph{The SPE constrained existence problem is NP-hard.}

\begin{proof}
    We proceed by reduction from the NP-complete problem SAT. This proof is liberally inspired from the proof of the NP-hardness of the NE constrained existence problem in co-Büchi games by Michael Ummels, in \cite{DBLP:conf/fossacs/Ummels08}.
	
	Let $\phi = \bigwedge_{i = 1}^n \bigvee_{j = 1}^m L_{ij}$ be a formula from propositional logic, written in conjunctive normal form, over the finite variable set $X$. We construct a mean-payoff game $G^\phi_{\|v_0}$ that admits an SPE where the player $\P$ gets the payoff $1$, if and only if $\phi$ is satisfiable.
	
	First, we define the set of players $\Pi = \{\P\} \cup X$: every variable of $\phi$ is a player and there is an additional special player $\P$, called \emph{Prover}, who wants to prove that $\phi$ is satisfiable.
	
	Then, let us define the state space: for each clause $C_i$, with $i \in \Z/n\Z$, of $\phi$, we define a state $C_i$ that is controlled by Prover, and for each litteral $L_{ij}$ of $C_i$ we define a state $(C_i, L_{ij})$, that is controlled by the player $x$ such that $L_{ij} = x$ or $\neg x$.
	We add a transition from $C_i$ to $(C_i, L_{ij})$, and another one from $(C_i, L_{ij})$ to $C_{i+1}$.
	Moreover, we add a sink state $\bot$, with a transition from it to itself, and transitions from all the states of the form $(C, \neg x)$ to it.
	
	We define the weight function $\pi$ on this game as follows:
	\begin{itemize}
		\item $\pi_\P(\bot\bot) = 0$, and $\pi_\P(uv) = 1$ for any other transition $vw$;
		
		\item for each player $x$, we have $\pi_x(uv) = 0$ for every transition leading to a state of the form $v = (C, x)$, and $\pi_x(uv) = 1$ for any other transition.
	\end{itemize}
	
	Note that Prover can only get the payoffs $0$ (in a play that reaches $\bot$) or $1$ (in any other play).
	Another player $x$ gets the payoff $1$ in a play that never visits (or finitely often, or infinitely often but with negligible frequence) a vertex of the form $(C, x)$.
	Otherwise, he may get any payoff between $0.5$ and $1$, depending on the frequence with which such a state is visited.

	Finally, we initialize that game in $v_0 = C_1$.
	
	\begin{ex}
    The game $G^\phi$, when $\phi$ is the tautology $(x_1 \vee \neg x_1) \wedge \dots \wedge (x_6 \vee \neg x_6)$, is represented by Figure~\ref{fig_Gphi}.
    When the weights of an edge are not written, they are equal to $1$ for all players.
    
    \begin{figure}
        \centering
        \begin{tikzpicture}[->,>=latex,shorten >=1pt, scale=0.8, initial text={}]
    		\node[state, initial right] (C1) at (0:5) {$C_1$};
    		\node[state] (C11) at (30:6) {$x_1$};
    		\node[state] (C12) at (30:4) {$\neg x_1$};
    		\node[state] (C2) at (60:5) {$C_2$};
    		\node[state] (C21) at (90:6) {$x_2$};
    		\node[state] (C22) at (90:4) {$\neg x_2$};
    		\node[state] (C3) at (120:5) {$C_3$};
    		\node[state] (C31) at (150:6) {$x_3$};
    		\node[state] (C32) at (150:4) {$\neg x_3$};
    		\node[state] (C4) at (180:5) {$C_4$};
    		\node[state] (C41) at (210:6) {$x_4$};
    		\node[state] (C42) at (210:4) {$\neg x_4$};
    		\node[state] (C5) at (240:5) {$C_5$};
    		\node[state] (C51) at (270:6) {$x_5$};
    		\node[state] (C52) at (270:4) {$\neg x_5$};
    		\node[state] (C6) at (300:5) {$C_6$};
    		\node[state] (C61) at (330:6) {$x_6$};
    		\node[state] (C62) at (330:4) {$\neg x_6$};
    		\node[state] (b) at (0,0) {$\bot$};
    		
    		\path[->] (C1) edge node[right] {$\stackrel{x_1}{0} \stackrel{x_2}{1} \stackrel{x_3}{1} \stackrel{x_4}{1} \stackrel{x_5}{1} \stackrel{x_6}{1} \stackrel{\P}{1}$} (C11);
    		\path[->] (C1) edge (C12);
            \path[->] (C11) edge (C2);
            \path[->] (C12) edge (C2);
            \path[->] (C12) edge (b);
            \path[->] (C2) edge node[above right] {$\stackrel{x_1}{1} \stackrel{x_2}{0} \stackrel{x_3}{1} \stackrel{x_4}{1} \stackrel{x_5}{1} \stackrel{x_6}{1} \stackrel{\P}{1}$} (C21);
    		\path[->] (C2) edge (C22);
            \path[->] (C21) edge (C3);
            \path[->] (C22) edge (C3);
            \path[->] (C22) edge (b);
            \path[->] (C3) edge node[above left] {$\stackrel{x_1}{1} \stackrel{x_2}{1} \stackrel{x_3}{0} \stackrel{x_4}{1} \stackrel{x_5}{1} \stackrel{x_6}{1} \stackrel{\P}{1}$} (C31);
    		\path[->] (C3) edge (C32);
            \path[->] (C31) edge (C4);
            \path[->] (C32) edge (C4);
            \path[->] (C32) edge (b);
            \path[->] (C4) edge node[left] {$\stackrel{x_1}{1} \stackrel{x_2}{1} \stackrel{x_3}{1} \stackrel{x_4}{0} \stackrel{x_5}{1} \stackrel{x_6}{1} \stackrel{\P}{1}$} (C41);
    		\path[->] (C4) edge (C42);
            \path[->] (C41) edge (C5);
            \path[->] (C42) edge (C5);
            \path[->] (C42) edge (b);
            \path[->] (C5) edge node[below left] {$\stackrel{x_1}{1} \stackrel{x_2}{1} \stackrel{x_3}{1} \stackrel{x_4}{1} \stackrel{x_5}{0} \stackrel{x_6}{1} \stackrel{\P}{1}$} (C51);
    		\path[->] (C5) edge (C52);
            \path[->] (C51) edge (C6);
            \path[->] (C52) edge (C6);
            \path[->] (C52) edge (b);
            \path[->] (C6) edge node[below right] {$\stackrel{x_1}{1} \stackrel{x_2}{1} \stackrel{x_3}{1} \stackrel{x_4}{1} \stackrel{x_5}{1} \stackrel{x_6}{0} \stackrel{\P}{1}$} (C61);
    		\path[->] (C6) edge (C62);
            \path[->] (C61) edge (C1);
            \path[->] (C62) edge (C1);
            \path[->] (C62) edge (b);
            \path (b) edge[loop left] node[left] {$\stackrel{x_1}{1} \stackrel{x_2}{1} \stackrel{x_3}{1} \stackrel{x_4}{1} \stackrel{x_5}{1} \stackrel{x_6}{1} \stackrel{\P}{0}$} (b);
        \end{tikzpicture}
        \caption{The game $G^\phi$}
        \label{fig_Gphi}
    \end{figure}
\end{ex}

	Now, let us prove that there is an SPE in $G^\phi_{\|v_0}$ in which Prover gets the payoff $1$, if and only if the formula $\phi$ is satisfiable.
	
	\begin{itemize}
		\item If such an SPE exists: let us write it $\bsigma$, and let $\rho = \< \bsigma \>_{C_1}$.
		Since $\mu_\P(\rho) = 1$, the sink state $\bot$ is never visited.
		Let us define a valuation $\nu$ on $X$ as follows: for each variable $x$, we have $\nu(x) = 1$ if and only if $\mu_x(\rho) < 1$.
		
		Now, let $C$ be a clause of $\phi$: since $C$, as a state, is necessarily visited infinitely often and with a fixed frequence in the play $\rho$ (because no player ever go to the sink state $\bot$), one of its successors, say $(C, L)$, is visited with a non-negligible frequence (more formally, the time between two occurrences of $(C, L)$ is bounded).
		If $L$ is a positive litteral, say $x$, then by definition of $\nu$, we have $\nu(x) = 1$ and the clause $C$ is satisfied.
		
		If $L$ has the form $\neg x$, then each time the state $(C, \neg x)$ is traversed, player $x$ has the possibility to deviate and to go to the sink state $\bot$, where he is sure to get the payoff $1$.
		Since $\bsigma$ is an SPE, it means that he already gets the payoff $1$ in the play $\rho$.
		By definition of $\nu$, we then have $\nu(x) = 0$, hence the litteral $\neg x$ is satisfied, hence so is the clause $C$.
		
		The valuation $\nu$ satisfies all the clauses of $\phi$, and therefore satisfies the formula $\phi$ itself.

		\item If $\phi$ is satisfied by some valuation $\nu$: let us define a strategy profile $\bsigma$ by:
		\begin{itemize}
			\item for each history $h C$ where $C$ is a clause of $\phi$, $\sigma_\P(hC) = (C, L)$ where $L$ is a litteral of $C$ that is satisfied in the valuation $\nu$;
			
			\item and for each history $h(C, \neg x)$ where $C$ is a clause of $\phi$ and $x$ is a variable, $\sigma_x(h(C, \neg x)) = \bot$ if and only if $\nu(x) = 1$.
		\end{itemize}
	
		Any other state has only one successor, hence we now have completely defined a strategy profile.
		Now, let us prove it is an SPE, in which Prover gets the payoff $1$.
		
		Let $hC$ be a history, where $C$ is a clause of $\phi$.
		We want to prove that $\bsigma_{\|hC}$ is a Nash equilibrium, in which Prover gets the payoff $1$.
		Let $\rho = \< \bsigma_{\|hC} \>_C$.
		If $\mu_\P(\rho) < 1$, i.e. if $\rho$ is of the form $h D (D, \neg x) \bot^\omega$, then by definition of $\bsigma$ we have $\nu(x) = 0$.
		But then, we cannot have $\sigma_\P(D) = (D, \neg x)$: contradiction.
		The play $\rho$ never reaches the state $\bot$, and Prover gets the payoff $1$, and as a consequence she does not have any profitable deviation.
		
		Now, if another player $x$ has a profitable deviation, it means that he does not get the payoff $1$ in $\rho$, and therefore that some state of the form $(D, x)$ is visited infinitely often.
		But then, if Prover choose to go to the state $(D, x)$, it means that the litteral $x$ is satisfied in $\nu$, i.e. that $\nu(x) = 1$.
		In that case, if some clause $D'$ contains the litteral $\neg x$, it is not a litteral satisfied by $\nu$, and therefore the strategy $\sigma_\P$, as we defined it, never chooses the transition to the state $(D', \neg x)$, where player $x$ could have the possibility to deviate from his strategy.
		Contradiction.

		Finally, after a history of the form $h(C, L)$, either:
		\begin{itemize}
			\item $L = \neg x$ with $\nu(x) = 1$, and in that case, we have $\< \bsigma_{\|h(C, L)} \>_{(C, L)} = (C, L) \bot^\omega$, player $x$ gets the payoff $1$, and no player has a profitable deviation;
			
			\item or $L$ is a positive litteral, and then there exists only one transition from the state $(C, L)$ to another clause $D$, and we go back to the previous case;
			
			\item or $L = \neg x$ with $\nu(x) = 0$, and in that case, we have $\sigma_x(C, L) = D$ where $D$ is the following clause, and by the first case the strategy profile $\bsigma_{\|h(C, \neg x)D}$ is a Nash equilibrium.
			Moreover, since the litteral $\neg x$ is not satisfied in $\nu$, the play $\< \bsigma_{\|h(C, \neg x)D} \>_D$ does never traverse again any state of the form $(D', \neg x)$, hence player $x$ wins, and therefore has no profitable deviation: the strategy profile $\bsigma_{\|h(C, \neg x)}$ is a Nash equilibrium.
		\end{itemize}
	\end{itemize}
	
	The constrained SPE existence problem is NP-hard in mean-payoff games.
\end{proof}

    \section{Construction of the automaton solving the SPE constrained existence problem} \label{app_automaton}

In the proof of Theorem \ref{thm_decidable}, we invoked a multi-mean-payoff automaton $\A_\lambda$, defined from a requirement $\lambda$ and two thresholds $\bx, \by \in \Q^\Pi$, that recognizes the language of the plays $\rho \in V^\omega$ that are $\lambda$-consistent, and that satisfy $\bx \leq  \mu(\rho) \leq \by$. We give here the details of its construction.

\begin{itemize}
    \item The state space of $\A_\lambda$ is:
    $$Q = V \times 2^V,$$
    where a state $(v, M)$ must be interpreted as follows: we are currently in the vertex $v$, and we already traversed the states of the set $M$. The initial state is $(v_0, \{v_0\})$.
    
    \item The automaton $\A_\lambda$ is $2\card\Pi$-dimensional, and its dimension set is $\Pi \times \{\lambda, 0\}$.
    
    \item The transitions of $\A_\lambda$ are all the transitions of the form:
    $$(v, M)(w, M \cup \{w\})$$
    where $vw \in E$. Each such transition is labelled by the vertex $v$ as a letter of the alphabet $V$, and is weighted by:
    \begin{itemize}
        \item $\pi_i(vw) - \underset{u \in M}{\max}~\lambda(u)$ on each dimension $(i, \lambda)$;
        
        \item $\pi_i(vw)$ on each dimension $(i, 0)$.
    \end{itemize}
    
    \item A run $\rho$ of $\A_\lambda$ is accepting if its mean-payoff is nonnegative along each dimension $(i, \lambda)$, and its mean-payoff along each dimension $(i, 0)$ belongs to the interval $[x_i, y_i]$.
\end{itemize}

	\section{Some examples of negotiation sequences} \label{app_ex}

We gather in this section some examples that could be interesting for the reader who would want to get a full overall view on the behaviour of the negotiation function on the mean-payoff games. For all of them, we computed the negotiation sequence, as defined in Appendix \ref{app_nego_seq}. For some of them, we just gave the negotiation sequence; for the most important ones, we gave a complete explanation of how we computed it, using the abstract negotiation game, as defined in Appendix \ref{app_abstract}.

\begin{ex} \label{ex_sans_spe_sequence}
Let us take again the game of Example~\ref{ex_sans_spe}: let us give (in red) the values of $\lambda_1 = \nego(\lambda_0)$, which correspond to the antagonistic values.

\begin{center}
	\begin{tikzpicture}[->,>=latex,shorten >=1pt, initial text={}, scale=0.8, every node/.style={scale=0.8}]
	\node[state] (a) at (0, 0) {$a$};
	\node[state] (c) at (-2, 0) {$c$};
	\node[state, rectangle] (b) at (2, 0) {$b$};
	\node[state, rectangle] (d) at (4, 0) {$d$};
	\path (a) edge (c);
	\path[<->] (a) edge node[above] {$\stackrel{\playcircle}{0} \stackrel{\Box}{3}$} (b);
	\path (b) edge (d);
	\path (d) edge [loop right] node {$\stackrel{\playcircle}{2} \stackrel{\Box}{2}$} (d);
	\path (c) edge [loop left] node {$\stackrel{\playcircle}{1} \stackrel{\Box}{1}$} (c);
	
	\node[red] (l) at (-3, -0.7) {$(\lambda_1)$};
	\node[red] (la) at (0, -0.7) {$1$};
	\node[red] (lb) at (2, -0.7) {$2$};
	\node[red] (lc) at (-2, -0.7) {$1$};
	\node[red] (ld) at (4, -0.7) {$2$};
	\end{tikzpicture}
\end{center}

At the second step, let us execute the abstract game on the state $a$, with the requirement $\lambda_1$: whatever Prover proposes at first, Challenger has the possibility to deviate and to reach the state $b$. Then, Prover has to propose a $\lambda_1$-consistent play from the state $b$, i.e. a play in which player $\Circle$ gets at least the payoff $2$: such a play necessarily ends in the state $d$, and gives player $\Box$ the payoff $2$.

The other states keep the same values.

\begin{center}
	\begin{tikzpicture}[<->,>=latex,shorten >=1pt, , scale=0.8, every node/.style={scale=0.8}]
	\node[state] (a) at (0, 0) {$a$};
	\node[state] (c) at (-2, 0) {$c$};
	\node[state, rectangle] (b) at (2, 0) {$b$};
	\node[state, rectangle] (d) at (4, 0) {$d$};
	\path[->] (a) edge (c);
	\path[<->] (a) edge node[above] {$\stackrel{\playcircle}{0} \stackrel{\Box}{3}$} (b);
	\path[->] (b) edge (d);
	\path (d) edge [loop right] node {$\stackrel{\playcircle}{2} \stackrel{\Box}{2}$} (d);
	\path (c) edge [loop left] node {$\stackrel{\playcircle}{1} \stackrel{\Box}{1}$} (c);
	
	\node[red] (l) at (0, 1) {$(\lambda_2)$};
	\node[red] (la) at (0, -0.7) {$2$};
	\node[red] (lb) at (2, -0.7) {$2$};
	\node[red] (lc) at (-2, -0.7) {$1$};
	\node[red] (ld) at (4, -0.7) {$2$};
	\end{tikzpicture}
\end{center}

But then, at the third step, from the state $b$: whatever Prover proposes at first, Challenger can deviate to reach the state $a$. Then, Prover has to propose a $\lambda_2$-consistent play from $a$, i.e. a play in which player $\Circle$ gets at least the payoff $2$: such a play necessarily end in the state $d$, i.e. after possibly some prefix, Prover proposes the play $abd^\omega$. But then, Challenger can always deviate to go back to the state $a$; and the play which is thus created is $(ab)^\omega$ which gives player $\Box$ the payoff $3$.

\begin{center}
	\begin{tikzpicture}[<->,>=latex,shorten >=1pt, , scale=0.8, every node/.style={scale=0.8}]
	\node[state] (a) at (0, 0) {$a$};
	\node[state] (c) at (-2, 0) {$c$};
	\node[state, rectangle] (b) at (2, 0) {$b$};
	\node[state, rectangle] (d) at (4, 0) {$d$};
	\path[->] (a) edge (c);
	\path[<->] (a) edge node[above] {$\stackrel{\playcircle}{0} \stackrel{\Box}{3}$} (b);
	\path[->] (b) edge (d);
	\path (d) edge [loop right] node {$\stackrel{\playcircle}{2} \stackrel{\Box}{2}$} (d);
	\path (c) edge [loop left] node {$\stackrel{\playcircle}{1} \stackrel{\Box}{1}$} (c);
	
	\node[red] (l) at (0, 1) {$(\lambda_3)$};
	\node[red] (la) at (0, -0.7) {$2$};
	\node[red] (lb) at (2, -0.7) {$3$};
	\node[red] (lc) at (-2, -0.7) {$1$};
	\node[red] (ld) at (4, -0.7) {$2$};
	\end{tikzpicture}
\end{center}

Finally, from the states $a$ and $b$, there exists no $\lambda_3$-consistent play, and therefore no $\lambda$-rational strategy profile.

\begin{center}
	\begin{tikzpicture}[<->,>=latex,shorten >=1pt, , scale=0.8, every node/.style={scale=0.8}]
	\node[state] (a) at (0, 0) {$a$};
	\node[state] (c) at (-2, 0) {$c$};
	\node[state, rectangle] (b) at (2, 0) {$b$};
	\node[state, rectangle] (d) at (4, 0) {$d$};
	\path[->] (a) edge (c);
	\path[<->] (a) edge node[above] {$\stackrel{\playcircle}{0} \stackrel{\Box}{3}$} (b);
	\path[->] (b) edge (d);
	\path (d) edge [loop right] node {$\stackrel{\playcircle}{2} \stackrel{\Box}{2}$} (d);
	\path (c) edge [loop left] node {$\stackrel{\playcircle}{1} \stackrel{\Box}{1}$} (c);
	
	\node[red] (l) at (0, 1) {$(\lambda_4)$};
	\node[red] (la) at (0, -0.7) {$+\infty$};
	\node[red] (lb) at (2, -0.7) {$+\infty$};
	\node[red] (lc) at (-2, -0.7) {$1$};
	\node[red] (ld) at (4, -0.7) {$2$};
	\end{tikzpicture}
\end{center}
and for all $n \geq 4$, $\lambda_n = \lambda_4$.
\end{ex}

\begin{ex} \label{ex_propagation}
In this example, we show a game that can be turned into a family of games, where the negotiation function needs as many steps as there are states to reach its limit: when the requirement changes in some state, it opens new possibilities from the neighbour states, and so on.

\begin{center}
	\begin{tikzpicture}[<->,>=latex,shorten >=1pt, scale=0.8, every node/.style={scale=0.8}]
	\node[state] (a) at (0, 0) {$a$};
	\node[state, rectangle] (b) at (2, 0) {$b$};
	\node[state] (c) at (4, 0) {$c$};
	\node[state, rectangle] (d) at (6, 0) {$d$};
	\node[state] (e) at (8, 0) {$e$};
	
	\path (a) edge[loop above] node[above] {$\stackrel{\playcircle}{1} \stackrel{\Box}{1}$} (a);
	\path (a) edge node[above] {$\stackrel{\playcircle}{0} \stackrel{\Box}{0}$} (b);
	\path (b) edge node[above] {$\stackrel{\playcircle}{0} \stackrel{\Box}{0}$} (c);
	\path (c) edge node[above] {$\stackrel{\playcircle}{0} \stackrel{\Box}{0}$} (d);
	\path (d) edge node[above] {$\stackrel{\playcircle}{0} \stackrel{\Box}{0}$} (e);
	\path (e) edge[loop above] node[above] {$\stackrel{\playcircle}{2} \stackrel{\Box}{2}$} (e);
	
	\node[red] (l) at (-1, -0.7) {$(\lambda_1)$};
	\node[red] (la) at (0, -0.7) {$1$};
	\node[red] (lb) at (2, -0.7) {$0$};
	\node[red] (lc) at (4, -0.7) {$0$};
	\node[red] (ld) at (6, -0.7) {$0$};
	\node[red] (le) at (8, -0.7) {$2$};
	\end{tikzpicture}
\end{center}

~

\begin{center}
	\begin{tikzpicture}[<->,>=latex,shorten >=1pt, scale=0.8, every node/.style={scale=0.8}]
	\node[state] (a) at (0, 0) {$a$};
	\node[state, rectangle] (b) at (2, 0) {$b$};
	\node[state] (c) at (4, 0) {$c$};
	\node[state, rectangle] (d) at (6, 0) {$d$};
	\node[state] (e) at (8, 0) {$e$};
	
	\path (a) edge[loop above] node[above] {$\stackrel{\playcircle}{1} \stackrel{\Box}{1}$} (a);
	\path (a) edge node[above] {$\stackrel{\playcircle}{0} \stackrel{\Box}{0}$} (b);
	\path (b) edge node[above] {$\stackrel{\playcircle}{0} \stackrel{\Box}{0}$} (c);
	\path (c) edge node[above] {$\stackrel{\playcircle}{0} \stackrel{\Box}{0}$} (d);
	\path (d) edge node[above] {$\stackrel{\playcircle}{0} \stackrel{\Box}{0}$} (e);
	\path (e) edge[loop above] node[above] {$\stackrel{\playcircle}{2} \stackrel{\Box}{2}$} (e);
	
	\node[red] (l) at (-1, -0.7) {$(\lambda_2)$};
	\node[red] (la) at (0, -0.7) {$1$};
	\node[red] (lb) at (2, -0.7) {$1$};
	\node[red] (lc) at (4, -0.7) {$0$};
	\node[red] (ld) at (6, -0.7) {$2$};
	\node[red] (le) at (8, -0.7) {$2$};
	\end{tikzpicture}
\end{center}

~

\begin{center}
	\begin{tikzpicture}[<->,>=latex,shorten >=1pt, scale=0.8, every node/.style={scale=0.8}]
	\node[state] (a) at (0, 0) {$a$};
	\node[state, rectangle] (b) at (2, 0) {$b$};
	\node[state] (c) at (4, 0) {$c$};
	\node[state, rectangle] (d) at (6, 0) {$d$};
	\node[state] (e) at (8, 0) {$e$};
	
	\path (a) edge[loop above] node[above] {$\stackrel{\playcircle}{1} \stackrel{\Box}{1}$} (a);
	\path (a) edge node[above] {$\stackrel{\playcircle}{0} \stackrel{\Box}{0}$} (b);
	\path (b) edge node[above] {$\stackrel{\playcircle}{0} \stackrel{\Box}{0}$} (c);
	\path (c) edge node[above] {$\stackrel{\playcircle}{0} \stackrel{\Box}{0}$} (d);
	\path (d) edge node[above] {$\stackrel{\playcircle}{0} \stackrel{\Box}{0}$} (e);
	\path (e) edge[loop above] node[above] {$\stackrel{\playcircle}{2} \stackrel{\Box}{2}$} (e);
	
	\node[red] (l) at (-1, -0.7) {$(\lambda_3)$};
	\node[red] (la) at (0, -0.7) {$1$};
	\node[red] (lb) at (2, -0.7) {$1$};
	\node[red] (lc) at (4, -0.7) {$2$};
	\node[red] (ld) at (6, -0.7) {$2$};
	\node[red] (le) at (8, -0.7) {$2$};
	\end{tikzpicture}
\end{center}

~

\begin{center}
	\begin{tikzpicture}[<->,>=latex,shorten >=1pt, scale=0.8, every node/.style={scale=0.8}]
	\node[state] (a) at (0, 0) {$a$};
	\node[state, rectangle] (b) at (2, 0) {$b$};
	\node[state] (c) at (4, 0) {$c$};
	\node[state, rectangle] (d) at (6, 0) {$d$};
	\node[state] (e) at (8, 0) {$e$};
	
	\path (a) edge[loop above] node[above] {$\stackrel{\playcircle}{1} \stackrel{\Box}{1}$} (a);
	\path (a) edge node[above] {$\stackrel{\playcircle}{0} \stackrel{\Box}{0}$} (b);
	\path (b) edge node[above] {$\stackrel{\playcircle}{0} \stackrel{\Box}{0}$} (c);
	\path (c) edge node[above] {$\stackrel{\playcircle}{0} \stackrel{\Box}{0}$} (d);
	\path (d) edge node[above] {$\stackrel{\playcircle}{0} \stackrel{\Box}{0}$} (e);
	\path (e) edge[loop above] node[above] {$\stackrel{\playcircle}{2} \stackrel{\Box}{2}$} (e);
	
	\node[red] (l) at (-1, -0.7) {$(\lambda_4)$};
	\node[red] (la) at (0, -0.7) {$1$};
	\node[red] (lb) at (2, -0.7) {$2$};
	\node[red] (lc) at (4, -0.7) {$2$};
	\node[red] (ld) at (6, -0.7) {$2$};
	\node[red] (le) at (8, -0.7) {$2$};
	\end{tikzpicture}
\end{center}

~

\begin{center}
	\begin{tikzpicture}[<->,>=latex,shorten >=1pt, scale=0.8, every node/.style={scale=0.8}]
	\node[state] (a) at (0, 0) {$a$};
	\node[state, rectangle] (b) at (2, 0) {$b$};
	\node[state] (c) at (4, 0) {$c$};
	\node[state, rectangle] (d) at (6, 0) {$d$};
	\node[state] (e) at (8, 0) {$e$};
	
	\path (a) edge[loop above] node[above] {$\stackrel{\playcircle}{1} \stackrel{\Box}{1}$} (a);
	\path (a) edge node[above] {$\stackrel{\playcircle}{0} \stackrel{\Box}{0}$} (b);
	\path (b) edge node[above] {$\stackrel{\playcircle}{0} \stackrel{\Box}{0}$} (c);
	\path (c) edge node[above] {$\stackrel{\playcircle}{0} \stackrel{\Box}{0}$} (d);
	\path (d) edge node[above] {$\stackrel{\playcircle}{0} \stackrel{\Box}{0}$} (e);
	\path (e) edge[loop above] node[above] {$\stackrel{\playcircle}{2} \stackrel{\Box}{2}$} (e);
	
	\node[red] (l) at (-1, -0.7) {$(\lambda_5)$};
	\node[red] (la) at (0, -0.7) {$2$};
	\node[red] (lb) at (2, -0.7) {$2$};
	\node[red] (lc) at (4, -0.7) {$2$};
	\node[red] (ld) at (6, -0.7) {$2$};
	\node[red] (le) at (8, -0.7) {$2$};
	\end{tikzpicture}
\end{center}
and the requirement $\lambda_5$ is a fixed point of the negocation function.
\end{ex}

\begin{ex} \label{ex_sans_spe_sc}
In all the previous examples, all the games whose underlying graphs were strongly connected contained SPEs. Here is an example of game with a strongly connected underlying graph that does not contain SPEs.

\begin{center}
	\begin{tikzpicture}[<->,>=latex,shorten >=1pt, scale=0.8, every node/.style={scale=0.8}]
	\node[state, rectangle] (b) at (0, 0) {$b$};
	\node[state] (c) at (1, -1.7) {$c$};
	\node[state, rectangle] (a) at (2, 0) {$a$};
	\node[state] (d) at (4, 0) {$d$};
	\node[state] (e) at (6, 0) {$e$};
	\node[state, rectangle] (f) at (5, 1.7) {$f$};
	
	\path[<->] (a) edge node[above] {$\stackrel{\playcircle}{3} \stackrel{\Box}{0}$} (d);
	\path[->] (a) edge node[above] {$\stackrel{\playcircle}{0} \stackrel{\Box}{0}$} (b);
	\path[->] (b) edge node[below left] {$\stackrel{\playcircle}{0} \stackrel{\Box}{0}$} (c);
	\path[->] (c) edge node[below right] {$\stackrel{\playcircle}{0} \stackrel{\Box}{0}$} (a);
	\path (b) edge [loop left] node {$\stackrel{\playcircle}{1} \stackrel{\Box}{1}$} (b);
	\path (c) edge [loop below] node {$\stackrel{\playcircle}{3} \stackrel{\Box}{0}$} (c);
	\path[->] (d) edge node[below] {$\stackrel{\playcircle}{0} \stackrel{\Box}{0}$} (e);
	\path[->] (e) edge node[above right] {$\stackrel{\playcircle}{0} \stackrel{\Box}{0}$} (f);
	\path[->] (f) edge node[above left] {$\stackrel{\playcircle}{0} \stackrel{\Box}{0}$} (d);
	\path (e) edge [loop right] node {$\stackrel{\playcircle}{2} \stackrel{\Box}{2}$} (e);
	\path (f) edge [loop above] node {$\stackrel{\playcircle}{0} \stackrel{\Box}{4}$} (f);
	
	\node[red] (l) at (1, 1.7) {$(\lambda_1)$};
	\node[red] (la) at (2, 0.7) {$1$};
	\node[red] (lb) at (0, 0.7) {$1$};
	\node[red] (lc) at (1.7, -1.7) {$3$};
	\node[red] (ld) at (4, -0.7) {$2$};
	\node[red] (le) at (6, -0.7) {$2$};
	\node[red] (lf) at (5.7, 1.7) {$4$};
	\end{tikzpicture}
\end{center}

\begin{center}
	\begin{tikzpicture}[<->,>=latex,shorten >=1pt, scale=0.8, every node/.style={scale=0.8}]
	\node[state, rectangle] (b) at (0, 0) {$b$};
	\node[state] (c) at (1, -1.7) {$c$};
	\node[state, rectangle] (a) at (2, 0) {$a$};
	\node[state] (d) at (4, 0) {$d$};
	\node[state] (e) at (6, 0) {$e$};
	\node[state, rectangle] (f) at (5, 1.7) {$f$};
	
	\path[<->] (a) edge node[above] {$\stackrel{\playcircle}{3} \stackrel{\Box}{0}$} (d);
	\path[->] (a) edge node[above] {$\stackrel{\playcircle}{0} \stackrel{\Box}{0}$} (b);
	\path[->] (b) edge node[below left] {$\stackrel{\playcircle}{0} \stackrel{\Box}{0}$} (c);
	\path[->] (c) edge node[below right] {$\stackrel{\playcircle}{0} \stackrel{\Box}{0}$} (a);
	\path (b) edge [loop left] node {$\stackrel{\playcircle}{1} \stackrel{\Box}{1}$} (b);
	\path (c) edge [loop below] node {$\stackrel{\playcircle}{3} \stackrel{\Box}{0}$} (c);
	\path[->] (d) edge node[below] {$\stackrel{\playcircle}{0} \stackrel{\Box}{0}$} (e);
	\path[->] (e) edge node[above right] {$\stackrel{\playcircle}{0} \stackrel{\Box}{0}$} (f);
	\path[->] (f) edge node[above left] {$\stackrel{\playcircle}{0} \stackrel{\Box}{0}$} (d);
	\path (e) edge [loop right] node {$\stackrel{\playcircle}{2} \stackrel{\Box}{2}$} (e);
	\path (f) edge [loop above] node {$\stackrel{\playcircle}{0} \stackrel{\Box}{4}$} (f);
	
	\node[red] (l) at (1, 1.7) {$(\lambda_2)$};
	\node[red] (la) at (2, 0.7) {$2$};
	\node[red] (lb) at (0, 0.7) {$1$};
	\node[red] (lc) at (1.7, -1.7) {$3$};
	\node[red] (ld) at (4, -0.7) {$2$};
	\node[red] (le) at (6, -0.7) {$2$};
	\node[red] (lf) at (5.7, 1.7) {$4$};
	\end{tikzpicture}
\end{center}

\begin{center}
	\begin{tikzpicture}[<->,>=latex,shorten >=1pt, scale=0.8, every node/.style={scale=0.8}]
	\node[state, rectangle] (b) at (0, 0) {$b$};
	\node[state] (c) at (1, -1.7) {$c$};
	\node[state, rectangle] (a) at (2, 0) {$a$};
	\node[state] (d) at (4, 0) {$d$};
	\node[state] (e) at (6, 0) {$e$};
	\node[state, rectangle] (f) at (5, 1.7) {$f$};
	
	\path[<->] (a) edge node[above] {$\stackrel{\playcircle}{3} \stackrel{\Box}{0}$} (d);
	\path[->] (a) edge node[above] {$\stackrel{\playcircle}{0} \stackrel{\Box}{0}$} (b);
	\path[->] (b) edge node[below left] {$\stackrel{\playcircle}{0} \stackrel{\Box}{0}$} (c);
	\path[->] (c) edge node[below right] {$\stackrel{\playcircle}{0} \stackrel{\Box}{0}$} (a);
	\path (b) edge [loop left] node {$\stackrel{\playcircle}{1} \stackrel{\Box}{1}$} (b);
	\path (c) edge [loop below] node {$\stackrel{\playcircle}{3} \stackrel{\Box}{0}$} (c);
	\path[->] (d) edge node[below] {$\stackrel{\playcircle}{0} \stackrel{\Box}{0}$} (e);
	\path[->] (e) edge node[above right] {$\stackrel{\playcircle}{0} \stackrel{\Box}{0}$} (f);
	\path[->] (f) edge node[above left] {$\stackrel{\playcircle}{0} \stackrel{\Box}{0}$} (d);
	\path (e) edge [loop right] node {$\stackrel{\playcircle}{2} \stackrel{\Box}{2}$} (e);
	\path (f) edge [loop above] node {$\stackrel{\playcircle}{0} \stackrel{\Box}{4}$} (f);
	
	\node[red] (l) at (1, 1.7) {$(\lambda_3)$};
	\node[red] (la) at (2, 0.7) {$2$};
	\node[red] (lb) at (0, 0.7) {$1$};
	\node[red] (lc) at (1.7, -1.7) {$3$};
	\node[red] (ld) at (4, -0.7) {$3$};
	\node[red] (le) at (6, -0.7) {$2$};
	\node[red] (lf) at (5.7, 1.7) {$4$};
	\end{tikzpicture}
\end{center}

\begin{center}
	\begin{tikzpicture}[<->,>=latex,shorten >=1pt, scale=0.8, every node/.style={scale=0.8}]
	\node[state, rectangle] (b) at (0, 0) {$b$};
	\node[state] (c) at (1, -1.7) {$c$};
	\node[state, rectangle] (a) at (2, 0) {$a$};
	\node[state] (d) at (4, 0) {$d$};
	\node[state] (e) at (6, 0) {$e$};
	\node[state, rectangle] (f) at (5, 1.7) {$f$};
	
	\path[<->] (a) edge node[above] {$\stackrel{\playcircle}{3} \stackrel{\Box}{0}$} (d);
	\path[->] (a) edge node[above] {$\stackrel{\playcircle}{0} \stackrel{\Box}{0}$} (b);
	\path[->] (b) edge node[below left] {$\stackrel{\playcircle}{0} \stackrel{\Box}{0}$} (c);
	\path[->] (c) edge node[below right] {$\stackrel{\playcircle}{0} \stackrel{\Box}{0}$} (a);
	\path (b) edge [loop left] node {$\stackrel{\playcircle}{1} \stackrel{\Box}{1}$} (b);
	\path (c) edge [loop below] node {$\stackrel{\playcircle}{3} \stackrel{\Box}{0}$} (c);
	\path[->] (d) edge node[below] {$\stackrel{\playcircle}{0} \stackrel{\Box}{0}$} (e);
	\path[->] (e) edge node[above right] {$\stackrel{\playcircle}{0} \stackrel{\Box}{0}$} (f);
	\path[->] (f) edge node[above left] {$\stackrel{\playcircle}{0} \stackrel{\Box}{0}$} (d);
	\path (e) edge [loop right] node {$\stackrel{\playcircle}{2} \stackrel{\Box}{2}$} (e);
	\path (f) edge [loop above] node {$\stackrel{\playcircle}{0} \stackrel{\Box}{4}$} (f);
	
	\node[red] (l) at (1, 1.7) {$(\lambda_4)$};
	\node[red] (la) at (2, 0.7) {$+\infty$};
	\node[red] (lb) at (0, 0.7) {$+\infty$};
	\node[red] (lc) at (1.7, -1.7) {$+\infty$};
	\node[red] (ld) at (4, -0.7) {$+\infty$};
	\node[red] (le) at (6, -0.7) {$+\infty$};
	\node[red] (lf) at (5.7, 1.7) {$+\infty$};
	\end{tikzpicture}
\end{center}
\end{ex}

\begin{ex} \label{ex_big}
This example shows how a new requirement can emerge from the combination of several cycles.

Let $G$ be the following game:

\begin{center}
	\begin{tikzpicture}[->,>=latex,shorten >=1pt, scale=0.8, every node/.style={scale=0.8}]
	\node[state] (a) at (2, 1.5) {$a$};
	\node[state] (b) at (4, 1.5) {$b$};
	\node[state, rectangle] (c) at (6, 0) {$c$};
	\node[state] (d) at (8, 0) {$d$};
	\node[state] (e) at (4, -1.5) {$e$};
	\node[state, rectangle] (f) at (2, -1.5) {$f$};
	\node[state] (g) at (0, -1.5) {$g$};
	
	\node[red] (l) at (1, 0.8) {$(\lambda_1)$};
	\node[red] (a') at (2, 0.8) {$1$};
	\node[red] (b') at (4, 0.8) {$1$};
	\node[red] (c') at (6, -0.7) {$0$};
	\node[red] (d') at (8, -0.7) {$0$};
	\node[red] (e') at (4, -2.2) {$0$};
	\node[red] (f') at (2, -2.2) {$0$};
	\node[red] (g') at (0, -2.2) {$3$};
	
	\path (a) edge[loop left] node[left] {$\stackrel{\playcircle}{1} \stackrel{\Box}{3}$} (a);
	\path (c) edge[loop left] node[left] {$\stackrel{\playcircle}{0} \stackrel{\Box}{0}$} (c);
	\path (d) edge[loop above] node[above] {$\stackrel{\playcircle}{0} \stackrel{\Box}{0}$} (d);
	\path (g) edge[loop left] node[left] {$\stackrel{\playcircle}{3} \stackrel{\Box}{2}$} (g);
	\path[<->] (a) edge node[above] {$\stackrel{\playcircle}{0} \stackrel{\Box}{0}$} (b);
	\path[<->] (b) edge node[above right] {$\stackrel{\playcircle}{2} \stackrel{\Box}{3}$} (c);
	\path[<->] (c) edge node[above] {$\stackrel{\playcircle}{1} \stackrel{\Box}{3}$} (d);
	\path[<->] (c) edge node[below right] {$\stackrel{\playcircle}{0} \stackrel{\Box}{0}$} (e);
	\path[<->] (e) edge node[above] {$\stackrel{\playcircle}{0} \stackrel{\Box}{0}$} (f);
	\path[<->] (f) edge node[above] {$\stackrel{\playcircle}{0} \stackrel{\Box}{0}$} (g);
	\end{tikzpicture}
\end{center}

At the first step, the requirement $\lambda_1$ captures the antagonistic values.

Then, from the state $c$, if player $\Box$ forces the access to the state $b$, then player $\Circle$ must get at least $1$: the worst play that can be proposed to player $\Box$ is then $(babc)^\omega$, which gives player $\Box$ the payoff $\frac{3}{2}$.

From the state $f$, if player $\Box$ forces the access to the state $g$, then the worst play that can be proposed to them is $g^\omega$.

\begin{center}
	\begin{tikzpicture}[->,>=latex,shorten >=1pt, scale=0.8, every node/.style={scale=0.8}]
	\node[state] (a) at (2, 1.5) {$a$};
	\node[state] (b) at (4, 1.5) {$b$};
	\node[state, rectangle] (c) at (6, 0) {$c$};
	\node[state] (d) at (8, 0) {$d$};
	\node[state] (e) at (4, -1.5) {$e$};
	\node[state, rectangle] (f) at (2, -1.5) {$f$};
	\node[state] (g) at (0, -1.5) {$g$};
	
	\node[red] (l) at (1, 0.8) {$(\lambda_2)$};
	\node[red] (a') at (2, 0.8) {$1$};
	\node[red] (b') at (4, 0.8) {$1$};
	\node[red] (c') at (6, -0.7) {$\frac{3}{2}$};
	\node[red] (d') at (8, -0.7) {$0$};
	\node[red] (e') at (4, -2.2) {$0$};
	\node[red] (f') at (2, -2.2) {$2$};
	\node[red] (g') at (0, -2.2) {$3$};
	
	\path (a) edge[loop left] node[left] {$\stackrel{\playcircle}{1} \stackrel{\Box}{3}$} (a);
	\path (c) edge[loop left] node[left] {$\stackrel{\playcircle}{0} \stackrel{\Box}{0}$} (c);
	\path (d) edge[loop above] node[above] {$\stackrel{\playcircle}{0} \stackrel{\Box}{0}$} (d);
	\path (g) edge[loop left] node[left] {$\stackrel{\playcircle}{3} \stackrel{\Box}{2}$} (g);
	\path[<->] (a) edge node[above] {$\stackrel{\playcircle}{0} \stackrel{\Box}{0}$} (b);
	\path[<->] (b) edge node[above right] {$\stackrel{\playcircle}{2} \stackrel{\Box}{3}$} (c);
	\path[<->] (c) edge node[above] {$\stackrel{\playcircle}{1} \stackrel{\Box}{3}$} (d);
	\path[<->] (c) edge node[below right] {$\stackrel{\playcircle}{0} \stackrel{\Box}{0}$} (e);
	\path[<->] (e) edge node[above] {$\stackrel{\playcircle}{0} \stackrel{\Box}{0}$} (f);
	\path[<->] (f) edge node[above] {$\stackrel{\playcircle}{0} \stackrel{\Box}{0}$} (g);
	\end{tikzpicture}
\end{center}

Then, from the state $d$, if player $\Circle$ forces the access to the state $c$, then player $\Box$ must get at least $\frac{3}{2}$: the worst play that can be proposed to player $\Circle$ is then $(cccd)^\omega$, which gives player $\Circle$ the payoff $\frac{1}{2}$.

At the same time, from the state $e$, player $\Circle$ can now force the acces to the state $f$: then, the worst play that can be proposed to them is $fg^\omega$.

\begin{center}
	\begin{tikzpicture}[->,>=latex,shorten >=1pt, scale=0.8, every node/.style={scale=0.8}]
	\node[state] (a) at (2, 1.5) {$a$};
	\node[state] (b) at (4, 1.5) {$b$};
	\node[state, rectangle] (c) at (6, 0) {$c$};
	\node[state] (d) at (8, 0) {$d$};
	\node[state] (e) at (4, -1.5) {$e$};
	\node[state, rectangle] (f) at (2, -1.5) {$f$};
	\node[state] (g) at (0, -1.5) {$g$};
	
	\node[red] (l) at (1, 0.8) {$(\lambda_3)$};
	\node[red] (a') at (2, 0.8) {$1$};
	\node[red] (b') at (4, 0.8) {$1$};
	\node[red] (c') at (6, -0.7) {$\frac{3}{2}$};
	\node[red] (d') at (8, -0.7) {$\frac{1}{2}$};
	\node[red] (e') at (4, -2.2) {$3$};
	\node[red] (f') at (2, -2.2) {$2$};
	\node[red] (g') at (0, -2.2) {$3$};
	
	\path (a) edge[loop left] node[left] {$\stackrel{\playcircle}{1} \stackrel{\Box}{3}$} (a);
	\path (c) edge[loop left] node[left] {$\stackrel{\playcircle}{0} \stackrel{\Box}{0}$} (c);
	\path (d) edge[loop above] node[above] {$\stackrel{\playcircle}{0} \stackrel{\Box}{0}$} (d);
	\path (g) edge[loop left] node[left] {$\stackrel{\playcircle}{3} \stackrel{\Box}{2}$} (g);
	\path[<->] (a) edge node[above] {$\stackrel{\playcircle}{0} \stackrel{\Box}{0}$} (b);
	\path[<->] (b) edge node[above right] {$\stackrel{\playcircle}{2} \stackrel{\Box}{3}$} (c);
	\path[<->] (c) edge node[above] {$\stackrel{\playcircle}{1} \stackrel{\Box}{3}$} (d);
	\path[<->] (c) edge node[below right] {$\stackrel{\playcircle}{0} \stackrel{\Box}{0}$} (e);
	\path[<->] (e) edge node[above] {$\stackrel{\playcircle}{0} \stackrel{\Box}{0}$} (f);
	\path[<->] (f) edge node[above] {$\stackrel{\playcircle}{0} \stackrel{\Box}{0}$} (g);
	\end{tikzpicture}
\end{center}

But then, from the state $c$, player $\Box$ can now force the access to the state $e$: then, the worst play that can be proposed to them is $efg^\omega$.

\begin{center}
	\begin{tikzpicture}[->,>=latex,shorten >=1pt, scale=0.8, every node/.style={scale=0.8}]
	\node[state] (a) at (2, 1.5) {$a$};
	\node[state] (b) at (4, 1.5) {$b$};
	\node[state, rectangle] (c) at (6, 0) {$c$};
	\node[state] (d) at (8, 0) {$d$};
	\node[state] (e) at (4, -1.5) {$e$};
	\node[state, rectangle] (f) at (2, -1.5) {$f$};
	\node[state] (g) at (0, -1.5) {$g$};
	
	\node[red] (l) at (1, 0.8) {$(\lambda_4)$};
	\node[red] (a') at (2, 0.8) {$1$};
	\node[red] (b') at (4, 0.8) {$1$};
	\node[red] (c') at (6, -0.7) {$2$};
	\node[red] (d') at (8, -0.7) {$\frac{1}{2}$};
	\node[red] (e') at (4, -2.2) {$3$};
	\node[red] (f') at (2, -2.2) {$2$};
	\node[red] (g') at (0, -2.2) {$3$};
	
	\path (a) edge[loop left] node[left] {$\stackrel{\playcircle}{1} \stackrel{\Box}{3}$} (a);
	\path (c) edge[loop left] node[left] {$\stackrel{\playcircle}{0} \stackrel{\Box}{0}$} (c);
	\path (d) edge[loop above] node[above] {$\stackrel{\playcircle}{0} \stackrel{\Box}{0}$} (d);
	\path (g) edge[loop left] node[left] {$\stackrel{\playcircle}{3} \stackrel{\Box}{2}$} (g);
	\path[<->] (a) edge node[above] {$\stackrel{\playcircle}{0} \stackrel{\Box}{0}$} (b);
	\path[<->] (b) edge node[above right] {$\stackrel{\playcircle}{2} \stackrel{\Box}{3}$} (c);
	\path[<->] (c) edge node[above] {$\stackrel{\playcircle}{1} \stackrel{\Box}{3}$} (d);
	\path[<->] (c) edge node[below right] {$\stackrel{\playcircle}{0} \stackrel{\Box}{0}$} (e);
	\path[<->] (e) edge node[above] {$\stackrel{\playcircle}{0} \stackrel{\Box}{0}$} (f);
	\path[<->] (f) edge node[above] {$\stackrel{\playcircle}{0} \stackrel{\Box}{0}$} (g);
	\end{tikzpicture}
\end{center}

And finally, from that point, if from the state $d$ player $\Circle$ forces the access to the state $c$, then player $\Box$ must have at least the payof $2$; and therefore, the worst play that can be proposed to player $\Circle$ is now $(ccd)^\omega$, which gives her the payoff $\frac{2}{3}$.

\begin{center}
	\begin{tikzpicture}[->,>=latex,shorten >=1pt, scale=0.8, every node/.style={scale=0.8}]
	\node[state] (a) at (2, 1.5) {$a$};
	\node[state] (b) at (4, 1.5) {$b$};
	\node[state, rectangle] (c) at (6, 0) {$c$};
	\node[state] (d) at (8, 0) {$d$};
	\node[state] (e) at (4, -1.5) {$e$};
	\node[state, rectangle] (f) at (2, -1.5) {$f$};
	\node[state] (g) at (0, -1.5) {$g$};
	
	\node[red] (l) at (1, 0.8) {$(\lambda_5)$};
	\node[red] (a') at (2, 0.8) {$1$};
	\node[red] (b') at (4, 0.8) {$1$};
	\node[red] (c') at (6, -0.7) {$2$};
	\node[red] (d') at (8, -0.7) {$\frac{2}{3}$};
	\node[red] (e') at (4, -2.2) {$3$};
	\node[red] (f') at (2, -2.2) {$2$};
	\node[red] (g') at (0, -2.2) {$3$};
	
	\path (a) edge[loop left] node[left] {$\stackrel{\playcircle}{1} \stackrel{\Box}{3}$} (a);
	\path (c) edge[loop left] node[left] {$\stackrel{\playcircle}{0} \stackrel{\Box}{0}$} (c);
	\path (d) edge[loop above] node[above] {$\stackrel{\playcircle}{0} \stackrel{\Box}{0}$} (d);
	\path (g) edge[loop left] node[left] {$\stackrel{\playcircle}{3} \stackrel{\Box}{2}$} (g);
	\path[<->] (a) edge node[above] {$\stackrel{\playcircle}{0} \stackrel{\Box}{0}$} (b);
	\path[<->] (b) edge node[above right] {$\stackrel{\playcircle}{2} \stackrel{\Box}{3}$} (c);
	\path[<->] (c) edge node[above] {$\stackrel{\playcircle}{1} \stackrel{\Box}{3}$} (d);
	\path[<->] (c) edge node[below right] {$\stackrel{\playcircle}{0} \stackrel{\Box}{0}$} (e);
	\path[<->] (e) edge node[above] {$\stackrel{\playcircle}{0} \stackrel{\Box}{0}$} (f);
	\path[<->] (f) edge node[above] {$\stackrel{\playcircle}{0} \stackrel{\Box}{0}$} (g);
	\end{tikzpicture}
\end{center}

The requirement $\lambda_5$ is a fixed point of the negotiation function.
\end{ex}
\end{document}